\numberwithin{equation}{section} \setlength{\textwidth} {15cm}
\def\beq{\begin{equation}}
\def\eeq{\end{equation}}
\def\bE{ {{\mathbb{E}}}}
\def\bC{ {{\mathbb{C}}}}
\def\bR{ {{\mathbb{R}}}}
\def\Tr{ {{\rm{Tr}}} }
\def\fd{ {\mathfrak{d}} }
\newcommand{\dB}{{\mathbb{B}}}
\newtheorem{defn}{{\bf Definition}}[section]
\newtheorem{thm}[defn]{{\bf Theorem}}
\newtheorem{cor}[defn]{{\bf Corollary}}
\newtheorem{lem}[defn]{{\bf Lemma}}
\newtheorem{prop}[defn]{{\bf Proposition}}
\newtheorem{rem}[defn]{{\bf Remark}}
\newtheorem{notation}[defn]{Notation}
\newenvironment{proof}[1][Proof]{\textbf{#1.} }{\hfill \rule{0.5em}{0.5em}}
\begin{document}

\title{Wilson Area Law formula on $\mathbb{R}^4$}
\author{Adrian P. C. Lim \\
Email: ppcube@gmail.com
}

\date{}

\maketitle

\begin{abstract}
Let $\mathfrak{g}$ be the Lie Algebra of a compact semi-simple gauge group. For a $\mathfrak{g}$-valued 1-form $A$, consider the Yang-Mills action \begin{equation} S_{{\rm YM}}(A) = \int_{\mathbb{R}^4}  \left|dA + A \wedge A \right|^2\ d\omega, \nonumber
\end{equation}
using the Euclidean metric on $T\mathbb{R}^4$. We want to make sense of the following path integral,
\begin{equation}
{\rm Tr}\ \int_{A \in \mathcal{A}_{\mathbb{R}^4, \mathfrak{g}} /\mathcal{G}}  \exp \left[  c\int_{S} dA\right] e^{-\frac{1}{2}S_{{\rm YM}}(A)}\ DA, \nonumber
\end{equation}
whereby $DA$ is some Lebesgue type of measure on the space of $\mathfrak{g}$-valued 1-forms, modulo gauge transformations $\mathcal{A}_{\mathbb{R}^4, \mathfrak{g}} /\mathcal{G}$. Here, $S$ is some compact flat rectangular surface.

Using an Abstract Wiener space, we can define a Yang-Mills path integral rigorously, for a compact semi-simple gauge group. Subsequently, we will then derive the Wilson area law formula from the definition, using renormalization techniques and asymptotic freedom.

One of the most important applications of the Area Law formula will be to explain why the potential measured between a quark and antiquark is a linear function of its distance.
\end{abstract}

\hspace{.35cm}{\small {\bf MSC} 2020: 81T13, 81T08} \\
\indent \hspace{.35cm}{\small {\bf Keywords}: Yang-Mills measure, axial gauge fixing, area law, renormalization, \\
\indent \hspace{2.5cm} asymptotic freedom}




\section{Preliminaries}\label{s.pre}

This is a sequel to the article \cite{YMLim01}, which constructs a Yang-Mills measure using an abelian gauge group, in $\bR^4$. We will now consider a compact semi-simple gauge group. Balaban from 1984 to 1989, published a series of papers, defining a Yang-Mills path integral using finite dimensional lattice gauge. The references for these papers can be found in \cite{YMLim01}.

Instead of using lattice gauge approximation, we will use the Segal Bargmann Transform, to construct a path integral on the Hilbert space containing $\mathfrak{g}$-valued holomorphic functions. Renormalization techniques used in the abelian gauge group case will also be applied in the non-abelian case. But the most notable difference is that the Wilson Area Law formula does not hold in the abelian case; however by assuming asymptotic freedom, we will show that it holds in the non-abelian case.

In Balaban's series of papers, asymptotic freedom was not used. Rivasseau and his co-authors in \cite{rivasseau01}, applied asymptotic freedom to construct a Yang-
Mills integral in 4-dimensional space, but the analysis is presented specifically for $SU(N)$ gauge group. Both Rivasseau and Balaban considered infra-red cutoff in their analysis, and proved ultraviolet stability. But they did not show that the Area Law formula holds. We will also refer the reader to \cite{2018arXiv180301950C} for a comprehensive review on work done in non-abelian Yang-Mills path integrals.

Consider a 4-manifold $M$ and a principal bundle $P$ over $M$, with structure group $G$. We assume that $G$ is compact and semi-simple. Without loss of generality we will assume that $G$ is a Lie subgroup of $U(\bar{N})$, $\bar{N} \in \mathbb{N}$ and $\pi: P \rightarrow M$ is a trivial bundle. We will identify the Lie Algebra $\mathfrak{g}$ of $G$ with a Lie subalgebra of the Lie Algebra $\mathfrak{u}(\bar{N})$ of $U(\bar{N})$ throughout this article. Suppose we write $\Tr \equiv \Tr_{{\rm Mat}(\bar{N}, \mathbb{C})}$ as matrix trace.
Then we can define a positive, non-degenerate bilinear form by \beq \langle A, B \rangle = -\Tr_{{\rm Mat}(\bar{N}, \mathbb{C})}[AB] \label{e.i.2} \eeq for $A,B \in \mathfrak{g}$. Let $\{E^\alpha\}_{\alpha=1}^N$ be an orthonormal basis in $\mathfrak{g}$, which will be fixed throughout this article. Refer to \cite{MR499562}, \cite{MR2062813} and \cite{MR2279709}.

The vector space of all smooth $\mathfrak{g}$-valued 1-forms on the manifold $M$ will be denoted by $\mathcal{A}_{M, \mathfrak{g}}$. Denote the group of all smooth $G$-valued mappings on $M$ by $\mathcal{G}$, called the gauge group. The gauge group induces a gauge transformation on $\mathcal{A}_{M, \mathfrak{g}}$,  $\mathcal{A}_{M, \mathfrak{g}} \times \mathcal{G} \rightarrow \mathcal{A}_{M, \mathfrak{g}}$ given by \beq A \cdot \Omega := A^{\Omega} = \Omega^{-1}d\Omega + \Omega^{-1}A\Omega \nonumber \eeq for $A \in \mathcal{A}_{M, \mathfrak{g}}$, $\Omega \in \mathcal{G}$. The orbit of an element $A \in \mathcal{A}_{M, \mathfrak{g}}$ under this operation will be denoted by $[A]$ and the set of all orbits by $\mathcal{A}_{M, \mathfrak{g}}/\mathcal{G}$.

Let $\Lambda^q(T^\ast M)$ be the $q$-th exterior power of the cotangent bundle over $M$. Fix a Riemannian metric $g$ on $M$ and this in turn defines an inner product $\langle \cdot, \cdot \rangle_q$ on $\Lambda^q(T^\ast M)$, for which we can define a volume form $d\omega$ on $M$. This allows us to define a Hodge star operator $\ast: \Lambda^k(T^\ast M) \rightarrow \Lambda^{4-k}(T^\ast M)$ such that for $u, v \in \Lambda^k(T^\ast M)$, we have \beq u \wedge \ast v = \langle u, v \rangle_k\ d\omega. \label{e.x.7} \eeq  An inner product on the set of smooth sections $\Gamma(\Lambda^k(T^\ast M))$ is then defined as \beq \langle u, v \rangle = \int_M u \wedge \ast v = \int_M \langle u, v \rangle_k\ d\omega. \label{e.x.7a} \eeq See \cite{MR1312606}.

Given $u \otimes E \in \Lambda^q(T^\ast M) \otimes (M  \times \mathfrak{g}\rightarrow M)$, we write \beq |u \otimes E|^2\ d\omega = -\Tr\left[ E \cdot E \right]\ u \wedge \ast u = -\Tr\left[ E \cdot E \right]\langle u, u \rangle_q\ d\omega, \nonumber \eeq computed at each fiber of the tensor bundle. Hence for $A \in \mathcal{A}_{M, \mathfrak{g}}$, the Yang-Mills action is given by \beq S_{{\rm YM}}(A) = \int_{M}  \left|dA + A \wedge A \right|^2\ d\omega. \label{e.c.1} \eeq Note that this action is invariant under gauge transformations.

\begin{notation}
From now on, we only consider $M = \bR^4 \equiv \bR \times \bR^3$, $\bR$ is the time-axis and $\bR^3$ is spatial space. Take the principal bundle $P$ over $\bR^4$ to be the trivial bundle. Let $\Lambda^p(\bR^n)$ be the $p$-th exterior power of the dual to the vector space $\bR^n$, $n=3,4$.

On $\bR^4$, fix the coordinate axes with corresponding coordinates $(x^0, x^1, x^2, x^3)$, $x^0$ is the time coordinate. We will also choose the standard Riemannian metric (Euclidean metric) on $T\mathbb{R}^4$. Now let $T^\ast\bR^4 \rightarrow \bR^4$ denote the trivial cotangent bundle over $\bR^4$, i.e. $T^\ast \bR^4 \cong \bR^4 \times \Lambda^1(\bR^4)$. Note that $\Lambda^1(\bR^4)$ is spanned by $\{dx^0, dx^1, dx^2, dx^3\}$ and $\Lambda^1(\bR^3)$ denote the subspace in $\Lambda^1(\bR^4)$, spanned by $\{dx^1, dx^2, dx^3\}$.
\end{notation}

Using axial gauge fixing, every $A \in \mathcal{A}_{\bR^4, \mathfrak{g}} /\mathcal{G}$ can be gauge transformed into a $\mathfrak{g}$-valued 1-form, of the form $A = \sum_{\alpha=1}^N \sum_{j=1}^3a_{j, \alpha} \otimes  dx^j \otimes E^\alpha $,  subject to the conditions \beq a_{1,\alpha}(0, x^1, 0, 0) = 0,\ a_{2, \alpha}(0, x^1, x^2, 0)=0,\ a_{3,\alpha}(0, x^1, x^2, x^3) = 0. \nonumber \eeq Hence it suffices to consider the Yang-Mills integral in Expression (\ref{e.ym.1}) to be over the space of $\mathfrak{g}$-valued 1-forms of the form $A = \sum_{\alpha=1}^N \sum_{j=1}^3a_{j, \alpha} \otimes  dx^j \otimes E^\alpha $, whereby $a_{j,\alpha}: \bR^4 \rightarrow \bR$ is smooth.

Now, \beq dA = \sum_{\alpha=1}^N\sum_{j=1}^3  a_{0:j, \alpha} \otimes dx^0 \wedge dx^j \otimes E^\alpha + \sum_{\alpha=1}^N\sum_{1\leq i<j \leq 3}a_{i;j, \alpha} \otimes  dx^i \wedge dx^j \otimes E^\alpha, \label{e.a.3} \eeq for \beq a_{i;j,\alpha} := \partial_i a_{j, \alpha} - \partial_j a_{i, \alpha},\quad a_{0:j, \alpha} := \partial_0 a_{j, \alpha}, \quad \partial_\mu \equiv \frac{\partial}{\partial x^\mu}. \label{e.d.1} \eeq

Its curvature will hence be given by
\begin{align*}
dA + A \wedge A
=& \sum_{\alpha=1}^N\sum_{j=1}^3  a_{0:j, \alpha} \otimes dx^0 \wedge dx^j \otimes E^\alpha + \sum_{\alpha=1}^N\sum_{1\leq i<j \leq 3}a_{i;j, \alpha} \otimes  dx^i \wedge dx^j \otimes E^\alpha   \\
& + \sum_{1\leq \alpha,\beta \leq N}\sum_{1 \leq i<j\leq 3}a_{i,\alpha}a_{j,\beta} \otimes dx^i \wedge  dx^j \otimes [E^\alpha, E^\beta].
\end{align*}
See \cite{MR1031992}. Note that $[\cdot, \cdot]$ is the Lie Bracket.

\begin{notation}\label{n.n.4}
Note that $\Lambda^2(T^\ast \bR^4) \cong \bR^4 \times \Lambda^2(\bR^4)$. Using the standard coordinates $\{x^0, x^1, x^2, x^3\}$, we fix an orthonormal basis $\{dx^1 \wedge dx^2, dx^3 \wedge dx^1, dx^1 \wedge dx^2, dx^0 \wedge dx^1, dx^0 \wedge dx^2, dx^0 \wedge dx^3\}$ in $\Lambda^2 (\bR^4)$, using the standard metric on $T\bR^4$. The corresponding volume form is given by $d\omega = dx^0 \wedge dx^1 \wedge dx^2 \wedge dx^3$.

From the Hodge star operator and the above volume form, we can define an inner product on the set of smooth sections $\Gamma(\Lambda^2( T^\ast \bR^4))$ as in Equation (\ref{e.x.7a}).

\end{notation}

\begin{defn}
Define \beq c_{\gamma}^{\alpha \beta} = -\Tr\left[E^\gamma [E^\alpha, E^\beta] \right]. \nonumber \eeq They are also referred to as the structure constants in physics.
\end{defn}

\begin{rem}
Note that $c_\gamma^{\alpha \beta} \neq 0$, only if $\gamma$, $\alpha$ and $\beta$ are all distinct. See \cite{peskin1995introduction}.
\end{rem}

Then,
\begin{align}
dA &+ A \wedge A \nonumber \\
=& \sum_{\gamma=1}^N\bigg[ \sum_{1\leq i<j \leq 3}a_{i;j, \gamma}\otimes dx^i \wedge dx^j +\sum_{\alpha,\beta}\sum_{1\leq i<j\leq 3}a_{i,\alpha}a_{j,\beta} c_\gamma^{\alpha \beta} \otimes dx^i \wedge  dx^j \nonumber \\
&\hspace{8cm}+ \sum_{j=1}^3  a_{0:j, \gamma}\otimes  dx^0 \wedge dx^j \bigg]\otimes E^\gamma. \nonumber
\end{align}
Thus,
\begin{align}
\int_{\bR^4}|dA + A \wedge A|^2\ d\omega =& \sum_{\gamma=1}^N\sum_{1\leq i<j \leq 3}\int_{\bR^4}\bigg[ a_{i;j, \gamma}^2 + \sum_{\alpha, \beta, \hat{\alpha}, \hat{\beta}}a_{i,\alpha}a_{j,\beta}a_{i,\hat{\alpha}}a_{j,\hat{\beta}}c_{\gamma}^{\alpha\beta}c_\gamma^{\hat{\alpha}\hat{\beta}} \nonumber \\
&+2\sum_{\alpha, \beta=1}^Na_{i;j,\gamma}a_{i,\alpha}a_{j,\beta}c_\gamma^{\alpha\beta} \bigg]\ d\omega + \sum_{\gamma=1}^N\sum_{1\leq j\leq 3}\int_{\bR^4} a_{0:j,\gamma}^2\ d\omega.
\label{e.ym.2}
\end{align}

\begin{defn}\label{d.rat.1}
Suppose $e^0$ is a directional vector in the time direction, with length $|e^0| = T$, and $a \in \bR^3$ is any directional vector, with length $|a|$. Let $\sigma: I^2 \equiv [0,1]^2 \rightarrow \bR^4$ be some parametrization of a compact rectangular surface $R[a,T]$, spanned by $e^0$ and $a$, of dimensions height T and length $|a|$.

Explicitly, for some $\vec{c} \in \bR^4$, we can choose $\sigma(s,t) = \vec{c} + s e^0 + t a \in \bR^4$, $0\leq s, t \leq 1$. For any $\delta \geq 0$, write $I_\delta = [-\delta, 1 + \delta]$ and $I \equiv I_0$. We can extend the parametrization $\sigma$ to be defined on $I_\delta^2 \equiv I_\delta \times I_\delta$. We will write $R_\delta[a,T]$ to be the image of $I_\delta^2$ under $\sigma$, which is a compact rectangular surface containing $R[a,T] \equiv R_0[a,T]$. Note that the dimensions of $R_\delta[a,T]$ is $|a|(1 + 2\delta)$ by $T(1 + 2\delta)$.
\end{defn}

On the surface $R[a,T]$, we see that for $A = \sum_{\alpha=1}^N \sum_{j=1}^3 A_{j,\alpha}\otimes dx^j \otimes E^\alpha$, \beq \int_{R[a,T]} dA + A \wedge A = \int_{R[a,T]} dA. \nonumber \eeq For a coupling constant $c$, we made sense of \beq \frac{1}{Z}\int_{A \in \mathcal{A}_{\bR^4, \mathfrak{g}} /\mathcal{G}}\exp \left[  c\int_{R[a,T]} dA \right] e^{-\frac{1}{2}\int_{\bR^4}|dA|^2\ d\omega} D[A], \nonumber \eeq in \cite{YMLim01}, for an abelian gauge group.

Let $\rho: \mathfrak{g} \rightarrow {\rm End}(\bC^{\tilde{N}})$ be an irreducible representation of $\mathfrak{g}$. Throughout this article, we assume that $\rho(\mathfrak{g})$ consists of skew-Hermitian matrices. Write $A^\rho := \sum_{\alpha=1}^N\sum_{i=1}^3 A_{i,\alpha}\otimes dx^i \otimes \rho(E^\alpha)$, which is $\rho(\mathfrak{g})$-valued.

Motivated by the abelian case, we want to make sense of the following path integral,
\beq \frac{1}{Z}\Tr\int_{A \in \mathcal{A}_{\bR^4, \mathfrak{g}} /\mathcal{G}} \exp \left[  c\int_{R[a,T]} dA^\rho \right] e^{-\frac{1}{2}S_{{\rm YM}}(A)}\ D[A], \label{e.ym.1} \eeq whereby $D[A]$ is some non-existent Lebesgue type of measure, \beq Z = \int_{A \in \mathcal{A}_{\bR^4, \mathfrak{g}} /\mathcal{G}} e^{-\frac{1}{2}S_{{\rm YM}}(A)}\ D[A], \nonumber \eeq and $c$ is a coupling constant. Note that $dA^\rho$ is given by Equation (\ref{e.a.3}), except that we replace $E^\alpha$ in the said equation with $\rho(E^\alpha)$.

\section{Non-abelian Yang-Mills measure}\label{s.naymm}

\begin{notation}\label{n.n.5}
For $x \in \bR^4$, we let $\phi_\kappa(x) = \kappa^4 e^{-\kappa^2 |x|^2/2}/(2\pi)^2$, which is a Gaussian function with variance $1/\kappa^2$. Define a function $\psi_z \equiv \psi(z)$, where $\psi(z) = \frac{1}{\sqrt{2\pi}}e^{-\sum_{i=1}^4|z_i|^2/2}$ and $z \equiv (z_0, z_1, z_2, z_3) \in \bC^4$.
\end{notation}

\begin{rem}\label{r.aym.1}
In \cite{YMLim01}, we explained that the Gaussian function $\phi_\kappa$ will approximate the Dirac Delta function, which is a generalized function. The larger the value of $\kappa$, the better is this approximation. The variance $1/\kappa^2$, denotes how well we can resolve a point in $\bR^4$. In all our Yang-Mills path integrals to be define later, it will depend on the parameter $\kappa$.

A finite dimensional gauge lattice with lattice spacing $\epsilon$ is commonly used to define a finite dimensional path integral, of which one has to take the limit as $\epsilon$ goes down to zero, to define an infinite dimensional path integral. One should think of $1/\kappa$ as being synonymous with the lattice spacing $\epsilon$. We refer the reader to \cite{YMLim01} for a brief account on the use of lattice gauge theory, for the purpose of defining a path integral.
\end{rem}

\subsection{Wiener measure}

In \cite{YMLim01}, we considered the (real) vector space denoted by $\mathcal{S}_\kappa(\bR^4)$, $\kappa > 0$. A vector in the said space is written in the form $p_i \cdot \sqrt{\phi_\kappa} $, whereby $p_i$ is a polynomial on $\bR^4$. We used the $L^2$ inner product on this space.

The Segal Bargmann Transform $\Psi_\kappa$ maps it into the (real) vector space of polynomials, over in $\bC^4$, denoted as $H^2(\bC^4)$. On $\bC^4$, we will write \beq z = (z_0, z_1, z_2, z_3) \in \bC^4\quad {\rm and}\quad d\lambda_4 = \frac{1}{\pi^4}e^{-\sum_{i=0}^3|z_i|^2} \prod_{i=0}^3 dx^i dp^i,\nonumber \eeq be a Gaussian measure, whereby $z_i \equiv x^i + \sqrt{-1}p^i$.

The real inner product on $H^2(\bC^4)$ is given by \beq \langle f, g \rangle := \int_{\bC^4} f(z) \bar{g}(z)\ d\lambda_4, \nonumber \eeq $\bar{g}(z)$ means the complex conjugate of $g(z)$. Let the Hilbert space $\mathcal{H}^2(\bC^4)$ be the completion of $H^2(\bC^4)$ and $\mathcal{H}^2(\bC^4) \otimes_{\bR} \bC = \mathcal{H}^2(\bC^4)_\bC$, the complexification of the Hilbert space. The inner product will now be extended to be sesquilinear.

We also call $\{\Psi_\kappa: \kappa>0\}$ a renormalization flow between \beq \left\{(\mathcal{S}_\kappa(\bR^4),\langle \cdot, \cdot \rangle):\kappa>0 \right\}\quad {\rm and} \quad \left\{(H^2(\bC^4),  \langle \cdot, \cdot \rangle) \right\}. \nonumber \eeq The Segal Bargmann Transform is an isometry between the two inner product spaces. We extend this isometry to be between $\mathcal{S}_\kappa(\bR^4) \otimes \Lambda^2(\bR^4)$ and $H^2(\bC^4) \otimes \Lambda^2(\bR^4)$, using the tensor inner product defined on the respective spaces.

We explained in \cite{YMLim01}, by defining a measurable norm $|\cdot|$, we complete $\mathcal{H}^2(\bC^4) $ into a Banach space $B(\bC^4)$, and form an Abstract Wiener space $(i, \mathcal{H}^2(\bC^4), B(\bC^4))$, whereby $i$ is an inclusion map and $B(\bC^4)$ is equipped with a Wiener measure. This Banach space consists of holomorphic functions over $\bC^4$. For the convenience of the reader, we reproduced the definition of $|\cdot|$ in Definition \ref{d.s.1}.

\subsection{Tensor products}

Suppose we have an Abstract Wiener space, $(i, H, B)$, with a Hilbert space $H \subset B$ which is dense and let the dual space $B^\ast$ be identified with a dense subspace in $H$. By abuse of notation, we will write $B^\ast \subset H$ to denote this identification. Denote the inner product on $H$ with $\langle \cdot, \cdot \rangle$.

We will identify $f \in B^\ast$ with $(\cdot, f)_\sharp$ as a random variable on $(B, \mu)$, $\mu$ is Wiener measure on $B$ with variance $\lambda$. Thus for $x \in B$, $(x, f)_\sharp \in \bR$. Note that the random variable $(\cdot, f)_\sharp$ is equal to a Gaussian random variable in distribution, with mean 0 and variance $\lambda\parallel f \parallel^2 \equiv \lambda \langle f, f \rangle$.

For $f \in B^\ast_\bC \equiv B^\ast \otimes_\bR \bC \subset H \otimes_\bR \bC \equiv H_\bC$, $(\cdot, f)_\sharp$ can be written as an infinite sum of random variables $\sum_{n=1}^\infty c_n(\cdot, f_n)_\sharp$, using complex coefficients $c_n$, each such real random variable $(\cdot, f_n)_\sharp$ on $(B, \mu)$ is equal to a Gaussian random variable in distribution.

Write
\begin{align*}
\left( \cdot, f \right)_\sharp =& \sqrt{-1}\left( \cdot, g_1 \right)_\sharp + \left( \cdot, g_2 \right)_\sharp = \left[\frac{\langle g_1, g_2 \rangle}{\langle g_1, g_1 \rangle} + \sqrt{-1}\right]\left( \cdot, g_1 \right)_\sharp + \left( \cdot, g_2 - \frac{\langle g_1, g_2 \rangle}{\langle g_1, g_1 \rangle}g_1 \right)_\sharp.
\end{align*}
Thus, $|(\cdot, f)_\sharp|^2$ is equal in distribution to $|Z_1 + \sqrt{-1}Z_2|^2$, whereby each $Z_i$ is a (real) Gaussian random variable with mean 0. Extend $\langle \cdot , \cdot \rangle$ to be a sesquilinear inner product on the complexified Hilbert space $H_\bC$. 

Let $H_i \equiv (H_i, \langle \cdot, \cdot \rangle_i)$ be Hilbert spaces, $i=1,2$. When we write $H_1 \otimes H_2$, we define it as the tensor product of Hilbert spaces, with the tensor inner product \beq \langle f_1 \otimes g_1, f_2 \otimes g_2 \rangle = \langle f_1, f_2 \rangle_1 \langle g_1 , g_2 \rangle_2. \nonumber \eeq Now suppose $H_2$ is finite dimensional with dimension $n$. Then, $H_1 \otimes H_2 \cong H_1^{\times^n}$, direct product of $n$ copies of $H_1$. Suppose $B_1^\ast \subset H_1 \subset B_1$ is an Abstract Wiener space containing $H_1$ as a dense subspace, with Wiener measure $\tilde{\mu}$. We will construct an Abstract Wiener space $B_1 \otimes H_2$, containing $H_1 \otimes H_2$, with the product Wiener measure $\tilde{\mu}^{\times^n} \equiv \tilde{\mu} \times \cdots \times \tilde{\mu}$, $n$ copies of $\tilde{\mu}$.

When we write $B^{\otimes^k}$, we mean the $k$-th tensor product of $B$. An element $x$ in $B^{\otimes^k}$ can be written in the form $\sum_\alpha x_1^\alpha \otimes x_2^\alpha \otimes \cdots \otimes x_k^\alpha$, $x_i^\alpha \in B$. Likewise, an element $z \in B^{\otimes^k, \ast}$ can be written in the form $\sum_\beta z_1^\beta \otimes z_2^\beta \otimes \cdots \otimes z_k^\beta$, where $z_i^\beta \in B^{\ast}$.

When we write $(x, z)_\sharp$, we mean \beq \sum_\alpha \sum_\beta (x_1^\alpha \otimes x_2^\alpha \otimes \cdots \otimes x_k^\alpha, z_1^\beta \otimes z_2^\beta \otimes \cdots \otimes z_k^\beta )_\sharp = \sum_\alpha \sum_\beta (x_1^\alpha, z_1^\beta)_\sharp (x_2^\alpha, z_2^\beta)_\sharp \cdots (x_k^\alpha, z_k^\beta)_\sharp. \nonumber \eeq

\subsection{Triple and quartic terms}\label{ss.tqt}

The reference for the following remarks can be found in \cite{YMLim01}.

\begin{rem}\label{r.rem.1}
\begin{enumerate}
  \item We define a linear operator $\fd_a: H^2(\bC^4) \rightarrow H^2(\bC^4)$, for $a=0, 1, 2, 3$, whereby \beq \fd_a \left[ z_a^p \prod_{b \neq a \atop b=0,\cdots 3}z_b^{q_b} \right] = \left[\frac{p}{2}z_a^{p-1} - \frac{1}{2}z_a^{p+1}\right] \cdot \prod_{b \neq a \atop b=0,\cdots 3}z_b^{q_b},\ \ p, q_b \in \mathbb{N} \cup \{0\}. \nonumber \eeq We denote the range of $\fd_a$ by $\fd_a H^2(\bC^4)$. With this operator, we can define an operator $\fd: H^2(\bC^4)\otimes \Lambda^1(\bR^3 ) \otimes \mathfrak{g} \rightarrow H^2(\bC^4)\otimes \Lambda^2(\bR^4) \otimes \mathfrak{g}$, analogous to the differential operator $d$, as
      \begin{align}
      \fd A =& \sum_{\alpha=1}^N\sum_{i=1}^3 \fd_0 A_{i,\alpha} \otimes dx^0 \wedge dx^i \otimes E^\alpha \nonumber \\
      &+ \sum_{\alpha=1}^N\sum_{1 \leq i < j \leq 3}[\fd_i A_{j,\alpha} - \fd_j A_{i,\alpha}] \otimes dx^i \wedge dx^j \otimes E^\alpha,\label{e.b.1}
      \end{align}
      for $A = \sum_{\alpha=1}^N \sum_{i=1}^3 A_{i,\alpha} \otimes dx^i \otimes E^\alpha$, and $A_{i,\alpha} \in H^2(\bC^4)$ for $i=1,2,3$.
  \item\label{i.sb.2} Recall we had a renormalization flow \beq \left\{\Psi_\kappa: \mathcal{S}_\kappa(\bR^4) \rightarrow H^2(\bC^4) \ |\ \kappa>0 \right\}. \nonumber \eeq Because $\Psi_\kappa[\partial_a f] = \kappa \fd_a \Psi_\kappa[f]$, hence we need to apply a renormalization rule, which assigns a $\kappa$ to both $\fd$ and $\fd_a$, when defining our path integral.
  \item In \cite{YMLim01}, we defined an $L^2$ inner product $\langle f, g \rangle$, for $f, g \in \mathcal{S}_\kappa(\bR^4)$. We can extend this inner product into a tensor inner product, for $\mathfrak{g}$-valued 2-forms, over in $\bR^4$. For $A = \sum_{\alpha=1}^N\sum_{i=1}^3 a_{i,\alpha} \otimes dx^i \otimes E^\alpha$, the tensor inner product is \beq \langle dA, dA \rangle = \sum_{\gamma=1}^N\sum_{1\leq i<j \leq 3}\int_{\bR^4} a_{i;j, \gamma}^2\ d\omega + \sum_{\gamma=1}^N\sum_{1\leq j\leq 3}\int_{\bR^4} a_{0:j,\gamma}^2\ d\omega, \nonumber \eeq on the inner product space $\mathcal{S}_\kappa(\bR^4)\otimes \Lambda^2(\bR^4) \otimes \mathfrak{g}$. See Equation (\ref{e.d.1}). We extend the definition of Segal Bargmann Transform, so \beq \Psi_\kappa[A] = \sum_{\alpha=1}^N\sum_{i=1}^3 \Psi_\kappa[a_{i,\alpha}] \otimes dx^i \otimes E^\alpha. \nonumber \eeq
\item For \beq F = \sum_{\alpha=1}^N \sum_{i=1}^3f_{i,\alpha} \otimes dx^i \otimes E^\alpha,\quad G = \sum_{\alpha=1}^N \sum_{i=1}^3g_{i,\alpha} \otimes dx^i \otimes E^\alpha \nonumber \eeq
    in $H^2(\bC^4) \otimes \Lambda^1(\bR^3 ) \otimes \mathfrak{g}$, we define a tensor inner product
\begin{align}
\langle \fd F, \fd G \rangle =& \sum_{\alpha=1}^N \sum_{i=1}^3 \int_{\bC^4} d\lambda_4\ [\fd_0 f_{i,\alpha}]\overline{[\fd_0 g_{i,\alpha}]} \nonumber \\
&+ \sum_{\alpha=1}^N\sum_{1\leq i < j \leq 3} \int_{\bC^4} d\lambda_4\ [\fd_i f_{j,\alpha} - \fd_j f_{i,\alpha} ]\overline{[\fd_i g_{j,\alpha} - \fd_j g_{i,\alpha} ]}. \label{e.inn.1}
\end{align}
Note that $|g|^2 \equiv g \cdot \bar{g}$, $\bar{g}$ is the complex conjugate of a holomorphic function $g$.
\item\label{i.sb.1} The Segal Bargmann Transform $\Psi_\kappa$ is an isometry (up to a constant $\kappa^2$) between their respective spaces, i.e. \beq \langle dA, dA \rangle = \kappa^2 \langle \fd \Psi_{\kappa}[A], \fd \Psi_\kappa[A] \rangle, \nonumber \eeq $A \in \mathcal{S}_\kappa(\bR^4) \otimes \Lambda^1(\bR^3 ) \otimes \mathfrak{g}$.
\item\label{i.sb.3} We defined a closed subspace $B_0(\bC^4) \subset B(\bC^4)$, using the norm $\parallel x \parallel := |x| + |\fd_0 x|$, which can be shown to be a measurable norm. Hence, we can equip $B_0(\bC^4)$ with a Wiener measure. The map $\fd_0$ can be shown to be an injective map on $B_0(\bC^4)$ and we can define a measurable norm on its range $\fd_0 H^2(\bC^4) \subset \fd_0 B_0(\bC^4) \subset B(\bC^4)$ by $\parallel \fd_0 x \parallel := |x| + |\fd_0 x|$. Hence $\fd_0$ is an isometry map and thus $(\fd_0 B_0(\bC^4), \parallel \cdot \parallel)$ is also a Banach space, equipped with a Wiener measure.
\end{enumerate}
\end{rem}

Let the span of $\{dx^0 \wedge dx^i, 1\leq i \leq 3\}$ and the span of $\{dx^i \wedge dx^j, 1\leq i < j\leq 3\}$ be denoted by $\ast \Lambda^2(\bR^3)$ and $\Lambda^2(\bR^3)$ respectively. Define the inner product space \beq \mathbb{H}:= \left\{[\fd_0 H^2(\bC^4)] \otimes \left[\ast \Lambda^2(\bR^3) \right] \right\} \oplus \left\{ H^2(\bC^4) \otimes \Lambda^2(\bR^3) \right\}. \nonumber \eeq

Observe that for $A \in H^2(\bC^4) \otimes \Lambda^1(\bR^3)$, we have that
\begin{align*}
\sum_{i=1}^3 \fd_0 A_i \otimes dx^0 \wedge dx^1 \in& [\fd_0 H^2(\bC^4)] \otimes \left[\ast \Lambda^2(\bR^3) \right], \\
\sum_{1\leq i < j \leq 3}[\fd_i A_j - \fd_j A_i]\otimes dx^i \wedge dx^j \in& H^2(\bC^4) \otimes \Lambda^2(\bR^3) ,
\end{align*}
and the said inner product given by Equation (\ref{e.inn.1}).

We can complete it into a Banach space
\beq  \dB:= \left\{[\fd_0 B_0(\bC^4)] \otimes [\ast\Lambda^2(\bR^3)] \right\}\oplus \left\{ B_0(\bC^4) \otimes \Lambda^2(\bR^3)\right\} , \nonumber \eeq using a measurable norm.

\begin{rem}\label{r.w.1}
It was proved in \cite{YMLim01}, that this Banach space contains the dense inner product space $H^2(\bC^4) \otimes \Lambda^2(\bR^4)$, which allows us to define a product Wiener measure on $\dB$.
\end{rem}

Just like how Balaban in his series of papers, defined a finite dimensional integral on a finite gauge lattice of spacing $\epsilon$, we will define a path integral over in $\mathcal{S}_\kappa(\bR^4) \otimes \Lambda^1(\bR^3) \otimes \mathfrak{g}$ from Expression \ref{e.ym.1} as \beq \frac{1}{Z}\Tr\int_{\{d A:\ A \in \mathcal{S}_\kappa(\bR^4) \otimes \Lambda^1(\bR^3) \otimes \mathfrak{g}\}} \exp \left[  c\int_{R[a,T]} d[A^\rho] \right] e^{-\frac{1}{2}S_{{\rm YM}}(A)}\ D[dA], \label{e.ym.3} \eeq $Z$ is some normalization constant. Note that $\epsilon$ is synonymous with $1/\kappa$ and $\Tr$ means taking the matrix trace.

After Balaban defined a finite dimensional integral, he applied a renormalization flow and mapped the finite gauge lattice of spacing $\epsilon$ to a finite gauge lattice of unit spacing. We will do the same thing here, but using the Segal Bargmann Transform $\Psi_\kappa$ instead.

\begin{rem}
We can also view Segal Bargmann Transform as analogous to Fourier Transform, which is how physicists will attempt to define a path integral over in momentum space.
\end{rem}

Because of the isometry given by Item \ref{i.sb.1} in Remark \ref{r.rem.1}, we will make sense of \beq \frac{1}{Z}\exp\left[ -\frac{1}{2}\int_{\bR^4}d\omega\ \left(\sum_{1\leq i<j \leq 3} a_{i;j}^2 +  \sum_{j=1}^3 a_{0:j}^2 \right) \right]D[dA]\nonumber \eeq as
\beq \frac{1}{Z}\exp\left(-\frac{1}{2} \int_{\bC^4}d\lambda_4 |\kappa\fd A|^2 \right)D[\fd A] \label{e.a.2} \eeq
over in $\mathbb{H}$, instead of over in $\mathcal{S}_\kappa(\bR^4) \otimes \Lambda^2(\bR^4 )$. For each $\kappa$, we will interpret Expression \ref{e.a.2} as a Wiener measure $\tilde{\mu}_{\kappa^2}$ with variance $1/\kappa^2$, over the Banach space $\dB$, containing $\mathbb{H}$. Note that $(\dB, \tilde{\mu}_{\kappa^2})$ is actually a probability space, the $\sigma$ algebra is the Borel field of $\dB$.

Refer to the isometry given by Item \ref{i.sb.1} in Remark \ref{r.rem.1}. We will extend this isometry to a product of triple or quartic terms. Therefore, we will make sense of $\frac{1}{Z}e^{-\frac{1}{2}S_{{\rm YM}}(A)}\ D[dA]$ in Expression \ref{e.ym.3}, by replacing it with
\begin{align*}
\frac{1}{Z}\exp&\left( -\frac{1}{2}  \int_{\bC^4}d\lambda_4 |\kappa\fd A + A \wedge A|^2 \right)D[\fd A] = D_1 D_2,  \\
\end{align*}
whereby
\begin{align}
\label{e.x.1}
\begin{split}
D_1 &:= \exp \left( -\frac{1}{2}\int_{\bC^4}d\lambda_4 \langle \kappa\fd A, A \wedge A \rangle + \langle A \wedge A, \kappa\fd A \rangle + |A \wedge A|^2 \right) , \\
D_2 &:= \frac{1}{Z}\exp\left(-\frac{1}{2} \int_{\bC^4}d\lambda_4 |\kappa\fd A|^2 \right)D[\fd A],
\end{split}
\end{align}
all on $\mathbb{H} \otimes \mathfrak{g}$, $Z$ is some normalization constant.

For $A = \sum_{\alpha=1}^N \sum_{i=1}^3 A_{i,\alpha} \otimes dx^i \otimes E^\alpha \in H^2(\bC^4) \otimes \Lambda^1(\bR^3) \otimes \mathfrak{g}$, we have $\fd A$ as given in Equation (\ref{e.b.1}), \beq A \wedge A = \sum_{\alpha, \beta=1}^N\sum_{1 \leq i < j \leq 3} A_{i,\alpha}A_{j,\beta} \otimes dx^i \wedge dx^j \otimes [E^\alpha, E^\beta], \nonumber \eeq which can be written as \beq A \wedge A = \sum_{\gamma=1}^N \sum_{\alpha, \beta=1}^N\sum_{1 \leq i < j \leq 3} c_\gamma^{\alpha\beta}A_{i,\alpha}A_{j,\beta} \otimes dx^i \wedge dx^j \otimes E^\gamma. \nonumber \eeq

Therefore, \beq |A \wedge A|^2 = \sum_{\gamma=1}^N\sum_{1 \leq i < j \leq 3}\sum_{\alpha, \beta \atop \hat{\alpha}, \hat{\beta}} c_\gamma^{\alpha\beta}c_\gamma^{\hat{\alpha}\hat{\beta}} A_{i,\alpha}A_{j,\beta}\overline{A_{i,\hat{\alpha}}}
\overline{A_{j,\hat{\beta}}}, \nonumber \eeq
and
\begin{align*}
\langle \fd A , A \wedge A \rangle =& \sum_{\gamma=1}^N \sum_{1 \leq i < j \leq 3} \sum_{\alpha, \beta=1}^Nc_\gamma^{\alpha\beta}[\fd_i A_{j,\gamma} - \fd_j A_{i,\gamma}]\overline{A_{i,\alpha}A_{j,\beta}}, \\
\langle A \wedge A , \fd A  \rangle =& \sum_{\gamma=1}^N \sum_{1 \leq i < j \leq 3} \sum_{\alpha, \beta=1}^Nc_\gamma^{\alpha\beta}A_{i,\alpha}A_{j,\beta}\overline{[\fd_i A_{j,\gamma} - \fd_j A_{i,\gamma}]}.
\end{align*}

But we would like to interpret them as tensor products of complex random variables on the Banach space $\dB \otimes \mathfrak{g}$, which contains $\mathbb{H} \otimes \mathfrak{g}$. To do this, we first need to review one linear functional, defined in \cite{YMLim01}.

\begin{defn}
For any $f \in \mathcal{H}^2(\bC^4)$, $\chi_w(z) \equiv e^{\bar{w}\cdot z}$ is the unique vector such that $\langle f, \chi_w \rangle = f(w)$ for any $w \in \bC^4$. Note that $\bar{w} \cdot z = \sum_{a=0}^3 \overline{w_a} z_a$, $\overline{w_a}$ is the complex conjugate of $w_a \in \bC$. Because $\chi_w \in B(\bC^4)_\bC^\ast \subset \mathcal{H}(\bC^4)_\bC$, we see that it is also in $B_0(\bC^4)_\bC^\ast$ and $[\fd_0 B_0(\bC^4)]_\bC^\ast$. Thus, $(\cdot, \chi_w)_\sharp$ is a complex random variable.

Define \beq \tilde{\xi}_{ab}(w):= (-1)^{ab}\chi_w \otimes dx^a \wedge dx^b,\ 0\leq a < b \leq 3, \nonumber \eeq and $\tilde{\xi}_{ab,\gamma}(w) := (-1)^{ab} \tilde{\xi}_{ab}(w) \otimes E^\gamma$.
\end{defn}

For $A = \sum_{i=1}^3 A_i \otimes dx^i$, this linear functional $(\cdot, \tilde{\xi}_{ab}(w))_\sharp$ sends  $\fd A$ to \beq (-1)^{ab}[\fd_a A_b - \fd_b A_a](w) = \langle \fd A(w), (-1)^{ab} dx^a \wedge dx^b \rangle_2 = (\fd A, \tilde{\xi}_{ab}(w))_\sharp, \nonumber \eeq $A_0 \equiv 0$. Refer to \cite{YMLim01}.

And how are we going to make sense of $A_i(w)$? In \cite{YMLim01}, we showed that the map $\fd_0: B_0(\bC^4) \rightarrow B(\bC^4)$ is a bounded and injective map.

\begin{prop}\label{p.z.1}
For each $w \in \bC^4$, we can define a bounded linear functional $f_w: \fd_0 B_0(\bC^4) \rightarrow \bC$, \beq f_w(\fd_0 x) := (x, \chi_w)_\sharp, \nonumber \eeq and thus there is a $\zeta(w) \in [\fd_0 B_0(\bC^4)]_\bC^\ast \subset \mathcal{H}(\bC^4)_\bC$ such that \beq f_w(\fd_0x) = (\fd_0 x, \zeta(w))_\sharp = (x, \chi_w)_\sharp = x(w). \nonumber \eeq
\end{prop}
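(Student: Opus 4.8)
The plan is to exhibit $f_w$ as a composition of maps already known to be bounded, and then invoke the Riesz-type representation on the Abstract Wiener space $\fd_0 B_0(\bC^4)$. First I would recall from Remark \ref{r.rem.1}(7) that $\fd_0: B_0(\bC^4) \to \fd_0 B_0(\bC^4)$ is an isometry with respect to the norm $\parallel \fd_0 x \parallel := |x| + |\fd_0 x|$, hence it has a bounded inverse $(\fd_0)^{-1}: \fd_0 B_0(\bC^4) \to B_0(\bC^4)$; in fact it is isometric. Next, $\chi_w \in B_0(\bC^4)_\bC^\ast$, so $(\cdot, \chi_w)_\sharp: B_0(\bC^4) \to \bC$ is a bounded linear functional. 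Composing, $f_w = (\cdot, \chi_w)_\sharp \circ (\fd_0)^{-1}$ is bounded and linear on $\fd_0 B_0(\bC^4)$, with $f_w(\fd_0 x) = (x, \chi_w)_\sharp = x(w)$, the last equality being the defining property of $\chi_w$ from the preceding definition.

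Once $f_w$ is a bounded linear functional on the Banach space $\fd_0 B_0(\bC^4)$, restricting to the dense Hilbert subspace $\fd_0 H^2(\bC^4)$ it is bounded with respect to the Hilbert norm as well (the measurable norm dominates the Hilbert norm up to the Wiener-space structure, so a $\parallel \cdot \parallel$-bounded functional is a fortiori defined and continuous on $\fd_0 H^2(\bC^4)$ in the appropriate sense). By the Riesz representation theorem there is a unique $\zeta(w) \in \fd_0 H^2(\bC^4) \subset \mathcal{H}(\bC^4)$, and its complexification lands in $[\fd_0 B_0(\bC^4)]_\bC^\ast \subset \mathcal{H}(\bC^4)_\bC$ exactly as in the identification $B^\ast \subset H$ set up in the subsection on tensor products; then $(\fd_0 x, \zeta(w))_\sharp$ is the associated Gaussian (here complex Gaussian) random variable, and it agrees with $f_w(\fd_0 x) = (x, \chi_w)_\sharp = x(w)$ on the dense subspace, hence everywhere by continuity.

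The one point requiring care — and the main obstacle — is verifying that $f_w$ is genuinely bounded in the norm $\parallel \fd_0 x \parallel = |x| + |\fd_0 x|$ that makes $\fd_0 B_0(\bC^4)$ a Banach space, rather than merely densely defined. This reduces to the estimate $|(x,\chi_w)_\sharp| = |x(w)| \lesssim_w |x| + |\fd_0 x|$; since $x(w) = \langle x, \chi_w\rangle$ and $\chi_w$ lies in the dual $B_0(\bC^4)_\bC^\ast$, one has $|x(w)| \le \parallel \chi_w \parallel_{B_0^\ast} \cdot (|x| + |\fd_0 x|)$ directly from the fact, recorded in \cite{YMLim01} and quoted just above the proposition, that $\chi_w \in B_0(\bC^4)_\bC^\ast$. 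So the estimate is inherited from the construction of $B_0(\bC^4)$ and does not need to be reproven; the argument is then essentially formal. I would also remark that uniqueness of $\zeta(w)$ follows from non-degeneracy of the inner product on $\mathcal{H}^2(\bC^4)$ together with density of $\fd_0 H^2(\bC^4)$ in $\fd_0 B_0(\bC^4)$.
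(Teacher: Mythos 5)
Your proposal is correct and follows essentially the same route as the paper: use the injectivity (isometry) of $\fd_0$ on $B_0(\bC^4)$ to write each $y = \fd_0 x$ uniquely, bound $|f_w(y)| = |x(w)| = |(x,\chi_w)_\sharp|$ by a constant times $|x| \leq \parallel x \parallel = |x| + |\fd_0 x|$ using $\chi_w \in B(\bC^4)_\bC^\ast$, and then obtain $\zeta(w)$ from the dual-space identification of the Abstract Wiener space. Your extra elaboration of the Riesz step (which the paper leaves implicit) is fine, apart from the slip that the Hilbert norm dominates the measurable norm, not the other way around, which is exactly what makes that step work.
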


\begin{proof}
Note that $(\cdot, \chi_w)_\sharp: (B(\bC^4), |\cdot|) \rightarrow \bC$ is a bounded linear functional. For any $y \in \fd_0  B_0(\bC^4) \subset B(\bC^4)$, we can find an unique $x \in B_0(\bC^4)$ such that $y = \fd_0 x$. Hence,
\begin{align*}
|f_w(y)| =& |f_w(\fd_0 x)| = |x(w)| = |(x, \chi_w)_\sharp| \\
\leq& C|x|  \leq C\parallel x \parallel = C\parallel y \parallel.
\end{align*}
This shows that $f_w$ is a bounded linear functional on $\fd_0 B_0(\bC^4)$.
\end{proof}

\begin{defn}
Write $\zeta_i(w) := \zeta(w) \otimes dx^0 \wedge dx^i$, $i=1,2,3$. From the previous proposition, $\zeta_i(w)$ lies inside $\dB_\bC^\ast$. Note that $(\cdot, \zeta_i(w))_\sharp$ is a complex random variable on the Banach space and for $A = \sum_{i=1}^3 A_i \otimes dx^i \in H^2(\bC^4) \otimes \Lambda^1(\bR^3)$, \beq \left( \cdot, \zeta_i(w) \right)_\sharp:\ \fd A \longmapsto A_i(w) \equiv \left( \fd_0 A_i, \zeta(w) \right)_\sharp. \nonumber \eeq
\end{defn}

For $w \in \bC^4$, we will write $A_{i}(w) = (\fd A, \zeta_i(w))_\sharp$. Observe that we can now write the product \beq [A_iA_j](w) = A_i(w) A_j(w) = \left(\fd A \otimes \fd A, \zeta_i(w) \otimes \zeta_j(w) \right)_\sharp. \nonumber \eeq And, the complex conjugate \beq   \overline{(\fd A, \zeta_i(w))_\sharp} = \overline{A_i(w)} = A_i(\bar{w}) = (\fd A, \zeta_i(\bar{w}))_\sharp. \label{e.c.3} \eeq

\begin{rem}\label{e.r.1}
Note that $A_i \in B_0(\bC^4)$, which consists of holomorphic functions over $\bC^4$, with real coefficients. Hence, the product $A_iA_j$ is actually well defined and Equation (\ref{e.c.3}) holds.
\end{rem}

\begin{notation}
To ease our notations, we will write for $j=1,2,3$, $\zeta_{j,w} \equiv \zeta_j(w)$ and $\tilde{\pi}_{j, \alpha, w}= \tilde{\pi}_{j,\alpha}(w) = \zeta_{j}(w) \otimes E^{\alpha}$, $\{E^\alpha\}_{\alpha=1}^N$ is an orthonormal basis in $\mathfrak{g}$. So \beq \left( \fd A , \tilde{\pi}_{i,\alpha,w} \right)_\sharp = \left( \fd_0A_{i, \alpha},  \zeta(w) \right)_\sharp . \nonumber \eeq Note that $A = \sum_{\alpha=1}^N \sum_{i=1}^3 A_{i,\alpha} \otimes dx^i \otimes E^\alpha \in B_0(\bC^4) \otimes \Lambda^1(\bR^3) \otimes \mathfrak{g}$.
\end{notation}

We can now explain how to define the quartic term, given by $A_{i,\alpha}A_{j,\beta}\overline{A_{i,\hat{\alpha}}}\overline{A_{j,\hat{\beta}}}$, whereby $A_{i,\alpha}$ is a holomorphic function over $\bC^4$ and $\overline{A_{i,\alpha}}$ denotes its complex conjugate. The trick is to write for each $A_{i,\alpha} \in H^2(\bC^4)$,
\begin{align*}
[A_{i,\alpha}&A_{j,\beta}\overline{A_{i,\hat{\alpha}}}\overline{A_{j,\hat{\beta}}}](w) \\
&=
\langle \fd A, \zeta_i(w)\otimes E^\alpha \rangle\ \langle \fd A, \zeta_j(w)\otimes E^\beta \rangle\ \overline{\langle \fd A, \zeta_i(w) \otimes E^{\hat{\alpha}} \rangle}\ \overline{ \langle \fd A, \zeta_j(w) \otimes E^{\hat{\beta}} \rangle} \\
&=: \left\langle \fd A \otimes \fd A \otimes \fd A \otimes \fd A, \tilde{\pi}_{i,\alpha,w} \otimes \tilde{\pi}_{j,\beta,w} \otimes \tilde{\pi}_{i,\hat{\alpha},\bar{w}} \otimes \tilde{\pi}_{j,\hat{\beta},\bar{w}} \right\rangle ,
\end{align*}
for $\fd A$ defined in Equation (\ref{e.b.1}).

Therefore,
\begin{align*}
\int_{w \in \bC^4 }&[A_{i,\alpha}A_{j,\beta}\overline{A_{i,\hat{\alpha}}A_{j,\hat{\beta}}}](w)
d\lambda_4(w) \\
&=\int_{w \in \bC^4}\ \left\langle \fd A^{\otimes^4}, \tilde{\pi}_{i,\alpha,w} \otimes \tilde{\pi}_{j,\beta,w} \otimes \tilde{\pi}_{i,\hat{\alpha},\bar{w}} \otimes \tilde{\pi}_{j,\hat{\beta},\bar{w}} \right\rangle\ d\lambda_4(w) \\
&= \left\langle \fd A^{\otimes^4}, \int_{w \in \bC^4}d\lambda_4(w)\ \tilde{\pi}_{i,\alpha,w} \otimes \tilde{\pi}_{j,\beta,w} \otimes \tilde{\pi}_{i,\hat{\alpha},\bar{w}} \otimes \tilde{\pi}_{j,\hat{\beta},\bar{w}} \right\rangle .
\end{align*}
This means that we interpret the quartic term in the Yang-Mills integral, as a tensor product of functionals. A similar approach will be used to define the cubic term.

For $A = \sum_{\alpha=1}^N\sum_{i=1}^3 A_{i,\alpha} \otimes dx^i \otimes E^\alpha \in B_0(\bC^4) \otimes \Lambda^1(\bR^3) \otimes \mathfrak{g}$, we will interpret $\int_{\bC^4}d\lambda_4|A\wedge A|^2$ as
\begin{align}
\sum_{\gamma=1}^N&\sum_{1\leq i<j \leq 3}  \sum_{\alpha , \beta \atop \hat{\alpha} , \hat{\beta}} c_\gamma^{\alpha \beta}c_\gamma^{\hat{\alpha} \hat{\beta}} \int_{w \in \bC^4}d\lambda_4(w)\ \bigg[\left( \fd A^{\otimes^4}, \tilde{\pi}_{i, \alpha} \otimes \tilde{\pi}_{j,\beta} \otimes \tilde{\pi}_{i,\hat{\alpha}} \otimes \tilde{\pi}_{j,\hat{\beta}}(w) \right)_{\sharp,34} \bigg] \nonumber \\
\equiv&\sum_{\gamma=1}^N\sum_{1\leq i<j \leq 3}  \sum_{\alpha , \beta \atop \hat{\alpha} , \hat{\beta}} c_\gamma^{\alpha \beta}c_\gamma^{\hat{\alpha} \hat{\beta}} \bigg[\left( \fd A^{\otimes^4}, \int_{w \in \bC^4}d\lambda_4(w)\tilde{\pi}_{i, \alpha} \otimes \tilde{\pi}_{j,\beta} \otimes \tilde{\pi}_{i,\hat{\alpha}} \otimes \tilde{\pi}_{j,\hat{\beta}}(w)  \right)_{\sharp,34} \bigg], \label{e.x.6}
\end{align}
where the complex-valued random variable on $\left(\dB \otimes \mathfrak{g}, \tilde{\mu}_{\kappa^2}^{\times^{N}}\right)$ is defined as
\begin{align}
\left(B^{\otimes^4}, \tilde{\pi}_{i_1, \alpha_1} \otimes \tilde{\pi}_{i_2,\alpha_2} \otimes \tilde{\pi}_{i_3,\alpha_3} \otimes \tilde{\pi}_{i_4,\alpha_4}(w) \right)_{\sharp, 34} :=& \prod_{j=1}^2\left(B, \tilde{\pi}_{i_j,\alpha_j,w}\right)_\sharp \times \prod_{j=3}^4\left(B, \tilde{\pi}_{i_j,\alpha_j,\bar{w}}\right)_\sharp.  \nonumber
\end{align}

Similarly, we interpret $\int_{\bC^4}d\lambda_4 \left\langle \kappa\fd A, A \wedge A \right\rangle$ and $\int_{\bC^4}d\lambda_4 \left\langle A \wedge A, \kappa\fd A \right\rangle$ as
\begin{align}
\label{e.x.2}
\begin{split}
\sum_{\gamma=1}^N&\sum_{1\leq i < j \leq 3}\sum_{\alpha , \beta} c_\gamma^{\alpha\beta}\left( \fd A^{\otimes^3}, \int_{w \in \bC^4}d\lambda_4(w) \kappa\tilde{\xi}_{ij, \gamma}  \otimes \tilde{\pi}_{i,\alpha}\otimes \tilde{\pi}_{j,\beta}(w) \right)_{\sharp,23}, \\
\sum_{\gamma=1}^N&\sum_{1\leq i < j \leq 3}\sum_{\alpha , \beta}c_\gamma^{\alpha\beta}\left( \fd A^{\otimes^3} , \int_{w \in \bC^4}d\lambda_4(w)\ \tilde{\pi}_{i,\alpha} \otimes \tilde{\pi}_{j,\beta} \otimes \kappa\tilde{\xi}_{ij,\gamma}(w)  \right)_{\sharp,3},
\end{split}
\end{align}
respectively, where the respective complex-valued random variables on $\left(\dB \otimes \mathfrak{g}, \tilde{\mu}_{\kappa^2}^{\times^{N}} \right)$ are defined as
\begin{align*}
\left(B^{\otimes^3}, \kappa\tilde{\xi}_{ij, \alpha_1}  \otimes \tilde{\pi}_{i_2,\alpha_2}\otimes \tilde{\pi}_{i_3,\alpha_3}(w) \right)_{\sharp, 23} :=& \kappa\left(B, \tilde{\xi}_{ij}(w)\otimes E^{\alpha_1}\right)_\sharp \times \prod_{j=2}^3 \left(B, \tilde{\pi}_{i_j,\alpha_j,\bar{w}}\right)_\sharp, \\
\left(B^{\otimes^3}, \tilde{\pi}_{i_1,\alpha_1}\otimes \tilde{\pi}_{i_2,\alpha_2} \otimes \kappa\tilde{\xi}_{ij,\alpha_3}(w) \right)_{\sharp, 3} :=& \prod_{j=1}^2\left(B, \tilde{\pi}_{i_j,\alpha_j,w}\right)_\sharp \times  \kappa \left(B, \tilde{\xi}_{ij}(\bar{w})\otimes E^{\alpha_3}\right)_{\sharp}.
\end{align*}
Note that from Equation (\ref{e.b.1}),
\begin{align*}
\sum_{\alpha=1}^N \fd_0 A_{i,\alpha} \otimes dx^0 \wedge dx^i \otimes E^\alpha &\in \fd_0 B_0(\bC^4) \otimes (dx^0 \wedge dx^i)\otimes \mathfrak{g} \subset \dB\otimes \mathfrak{g}, \\
[\fd_i A_{j,\alpha} - \fd_j A_{i,\alpha}]\otimes dx^i \wedge dx^j \otimes E^\alpha &\in B_0(\bC^4) \otimes (dx^i \wedge dx^j)\otimes \mathfrak{g} \subset \dB\otimes \mathfrak{g}.
\end{align*}

\begin{rem}
For $B = \sum_{\alpha=1}^N\sum_{0\leq a < b \leq 3}B_{ab,\alpha} \otimes dx^a \wedge dx^b \otimes E^\alpha \in \dB \otimes \mathfrak{g}$, $B_{ab,\alpha}$ is holomorphic. In fact, $\left(B, \tilde{\xi}_{ij}(w)\otimes E^\alpha\right)_\sharp := (-1)^{ij}B_{ij,\alpha}(w)$ and its complex conjugate $\overline{\left(B, \tilde{\xi}_{ij}(w)\otimes E^\alpha\right)_\sharp} = (-1)^{ij}B_{ij,\alpha}(\bar{w})$. And if $B_{0i,\alpha} = \fd_0 \tilde{B}_{i,\alpha}$ for $i=1,2,3$, then
\begin{align*}
\left(B, \tilde{\pi}_{i_j,\alpha_j,w}\right)_\sharp := \left(\fd_0 \tilde{B}_{i_j,\alpha_j}, \zeta(w) \right) = \tilde{B}_{i_j,\alpha_j}(w).
\end{align*}
\end{rem}

But just like in the abelian case considered in \cite{YMLim01}, we do not consider $\tilde{\xi}_{ab}(w)$ and $\tilde{\pi}_{i,\alpha}(w)$. Instead, we will introduce a renormalization factor $\psi_w$, defined in Notation \ref{n.n.5}. That is we replace $\tilde{\xi}_{ab}(w)$ and $\tilde{\pi}_{i,\alpha}(w)$ with \beq \xi_{ab}(w) = \psi_w\tilde{\xi}_{ab}(w)\ \ {\rm and}\ \ \pi_{i,\alpha}(w) \equiv \pi_{i,\alpha,w} = \psi_w \tilde{\pi}_{i,\alpha,w} \nonumber \eeq respectively.

\begin{rem}\label{r.rem.2}
\begin{enumerate}
  \item We explained in \cite{YMLim01}, the necessity of this renormalization factor $\psi_w$.
  \item A direct computation will show that $\parallel \xi_{ab}(w) \parallel^2 = \langle \xi_{ab}(w), \xi_{ab}(w) \rangle = 1/2\pi$. This is because both $B_0(\bC^4)$ and $\fd_0 B_0(\bC^4)$ contain a complete set of basis in $\mathcal{H}^2(\bC^4)$. See Remark \ref{r.w.1}.
  \item In \cite{YMLim01}, we showed that $\langle f, f \rangle \leq 40 \langle \fd_0 f, \fd_0 f \rangle$, $f \in H^2(\bC^4)$. By definition of $\zeta(w)$ in Proposition \ref{p.z.1}, we see that \beq \parallel  \pi_{i,\alpha}(w) \parallel^2 =  \langle \pi_{i,\alpha}(w), \pi_{i,\alpha}(w) \rangle \leq \frac{40}{2\pi}. \nonumber \eeq
\end{enumerate}
\end{rem}

\begin{defn}\label{d.ym.3}
Apply the renormalization rule and add in the renormalization factor. For $B \in \dB \otimes \mathfrak{g}$, we will define
\begin{align*}
Y_1^\kappa & \left( B\right) \\
&:= \sum_{\gamma=1}^N\sum_{1\leq i < j \leq 3}\sum_{\alpha , \beta}c_\gamma^{\alpha\beta}\left( B^{\otimes^3}, \int_{w \in \bC^4}d\lambda_4(w)\ \kappa\xi_{ij, \gamma}  \otimes \pi_{i,\alpha}\otimes \pi_{j,\beta}(w) \right)_{\sharp,23}, \\
Y_2^\kappa &  \left( B\right) \\
&:= \sum_{\gamma=1}^N\sum_{1\leq i < j \leq 3}\sum_{\alpha , \beta}c_\gamma^{\alpha\beta}\left( B^{\otimes^3}, \int_{w \in \bC^4}d\lambda_4(w)\ \pi_{i,\alpha} \otimes \pi_{j,\beta} \otimes \kappa\xi_{ij,\gamma}(w) \right)_{\sharp,3},
\end{align*}
and
\begin{align*}
Y_3^\kappa & \left( B\right) \\
&:= \sum_{\gamma=1}^N\sum_{1\leq i<j \leq 3}  \sum_{\alpha , \beta \atop \hat{\alpha} , \hat{\beta}} c_\gamma^{\alpha \beta}c_\gamma^{\hat{\alpha} \hat{\beta}} \left( B^{\otimes^4}, \int_{w \in \bC^4}d\lambda_4(w)\ \pi_{i, \alpha} \otimes \pi_{j,\beta} \otimes \pi_{i,\hat{\alpha}} \otimes \pi_{j,\hat{\beta}}(w) \right)_{\sharp,34} ,
\end{align*}
all being random variables on $\left(\dB \otimes \mathfrak{g}, \tilde{\mu}_{\kappa^2}^{\times^{N}}\right)$, using Expressions \ref{e.x.6} and \ref{e.x.2}.

From Expression \ref{e.x.1}, we replace $\fd A$ with $B$, so we interpret the expression over in $\dB \otimes \mathfrak{g}$. Using the construction of the Abstract Wiener space in \cite{YMLim01}, we will interpret \beq D_2 = \frac{1}{Z}e^{-\frac{1}{2}\int_{\bC^4} d\lambda_4 |\kappa \fd A|^2}D[\fd A] \nonumber \eeq as product Wiener measure $\tilde{\mu}_{\kappa^2}^{\times^{N}}$ over $\dB \otimes \mathfrak{g}$, for some normalization constant $Z$.

And we interpret \beq D_1 = \exp \left( -\frac{1}{2}\int_{\bC^4}d\lambda_4 \langle \kappa\fd A, A \wedge A \rangle + \langle A \wedge A, \kappa\fd A \rangle + |A \wedge A|^2 \right) \nonumber \eeq as
\beq \mathcal{Y}^\kappa \left( B\right) :=  \exp\left\{ -\frac{1}{2}\left[ Y_1^\kappa  \left( B\right) + Y_2^\kappa\left( B\right) + Y_3^\kappa\left( B\right) \right] \right\}, \label{e.j.2} \eeq
\end{defn}
over in $\dB \otimes \mathfrak{g}$.

In Lemma \ref{l.y.1}, we will show that it is integrable with respect to product Wiener measure $\tilde{\mu}_{\kappa^2}^{\times^N}$.

\section{Yang-Mills Path Integral}\label{s.ymp}

In the abelian case considered in \cite{YMLim01}, we defined a Wiener (probability) measure $\tilde{\mu}_{\kappa^2}$, on a Banach space $\dB$. We will write $\bE$ to denote expectation on this probability space $(\dB, \tilde{\mu}_{\kappa^2})$.

In the non-abelian case, we will now construct a measure on $\dB \otimes \mathfrak{g}$, which is absolutely continuous with respect to the product Wiener measure $\tilde{\mu}_{\kappa^2}^{\times^{N}} \equiv \tilde{\mu}_{\kappa^2} \times \cdots \times \tilde{\mu}_{\kappa^2}$. That is, we will define a sequence of Yang-Mills measures on  \beq \dB \otimes \mathfrak{g},\quad {\rm as}\quad \left\{\mathcal{Y}^\kappa d\tilde{\mu}_{\kappa^2}^{\times^{N}}:\ \kappa>0\right\}. \nonumber \eeq This sequence of Yang-Mills measures is indexed by $\kappa$, which is due to the renormalization flow $\{\Psi_\kappa:\ \kappa > 0\}$ defined in Item \ref{i.sb.2} in Remark \ref{r.rem.1}. We will continue to write $\bE$ to denote expectation on this probability space $(\dB \otimes \mathfrak{g}, \tilde{\mu}_{\kappa^2}^{\times^{N}})$.

\begin{lem}\label{l.y.1}
Refer to Definition \ref{d.ym.3} for $\mathcal{Y}^\kappa$. Consider the probability space $\dB \otimes \mathfrak{g}$ equipped with a product Wiener measure $\tilde{\mu}_{\kappa^2}^{\times^{N}}$. For any $\kappa > 0$, \beq \mathbb{E}\left[\mathcal{Y}^\kappa \right] := \int_{\dB \otimes \mathfrak{g}} \mathcal{Y}^\kappa d\tilde{\mu}_{\kappa^2}^{\times^{N}}\nonumber \eeq is finite.
\end{lem}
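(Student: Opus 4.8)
The plan is to show that $\mathcal{Y}^\kappa = \exp\{-\tfrac12(Y_1^\kappa + Y_2^\kappa + Y_3^\kappa)\}$ is integrable by exploiting the sign and growth structure of the three terms. First I would recall that $Y_3^\kappa(B)$ is, up to the positive structure-constant weights, the Wiener-space avatar of $\int_{\bC^4} d\lambda_4\,|A\wedge A|^2 \ge 0$; more precisely, for each fixed $w$ the quartic random variable $\sum_{\gamma,i<j}\sum_{\alpha\beta\hat\alpha\hat\beta} c_\gamma^{\alpha\beta}c_\gamma^{\hat\alpha\hat\beta}\,(B,\pi_{i,\alpha,w})_\sharp(B,\pi_{j,\beta,w})_\sharp\overline{(B,\pi_{i,\hat\alpha,w})_\sharp}\,\overline{(B,\pi_{j,\beta,w})_\sharp}$ equals $\sum_{\gamma}\sum_{i<j}\big|\sum_{\alpha,\beta}c_\gamma^{\alpha\beta}(B,\pi_{i,\alpha,w})_\sharp(B,\pi_{j,\beta,w})_\sharp\big|^2 \ge 0$, and integrating a nonnegative integrand in $w$ against the probability measure $d\lambda_4$ keeps it nonnegative. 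Hence $-\tfrac12 Y_3^\kappa(B) \le 0$ pointwise, so $\mathcal{Y}^\kappa(B) \le \exp\{-\tfrac12(Y_1^\kappa(B)+Y_2^\kappa(B))\}$. The cross terms satisfy $Y_2^\kappa = \overline{Y_1^\kappa}$ in the appropriate sense (they come from $\langle\kappa\fd A,A\wedge A\rangle$ and its conjugate), so $Y_1^\kappa + Y_2^\kappa = 2\,\mathrm{Re}\,Y_1^\kappa$ is real, and $\mathcal{Y}^\kappa(B) \le \exp\{-\mathrm{Re}\,Y_1^\kappa(B)\} \le \exp\{|Y_1^\kappa(B)|\}$.

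So it suffices to prove $\mathbb{E}\big[\exp(|Y_1^\kappa|)\big] < \infty$, where $Y_1^\kappa$ is a cubic polynomial in jointly Gaussian (complex) random variables on $(\dB\otimes\mathfrak{g},\tilde\mu_{\kappa^2}^{\times N})$. The key analytic inputs are: (i) the vectors $\xi_{ij}(w)$ and $\pi_{i,\alpha}(w)$ have norms bounded uniformly in $w$ — namely $\|\xi_{ab}(w)\|^2 = 1/2\pi$ and $\|\pi_{i,\alpha}(w)\|^2 \le 40/2\pi$ by Remark \ref{r.rem.2}; and (ii) the Gaussian factor $\psi_w$ built into $\xi,\pi$ makes $\int_{\bC^4} d\lambda_4(w)\,\|\xi_{ij}(w)\|\,\|\pi_{i,\alpha}(w)\|\,\|\pi_{j,\beta}(w)\|$ finite, so the $w$-integral defining the functional inside $Y_1^\kappa$ is a genuine (Bochner) integral of an $L^1$-in-$w$ family of rank-one tensors in $(\dB\otimes\mathfrak{g})^{\otimes 3,\ast}$, hence a bona fide element of that dual space. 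Consequently $Y_1^\kappa$ is (a finite sum over $\gamma,i<j,\alpha,\beta$ of) an off-diagonal homogeneous chaos of degree $3$ in the underlying Gaussian field; I would expand $(B,\cdot)_\sharp$ in an orthonormal eigenbasis and realize $Y_1^\kappa$ as $\sum c_{klm} g_k g_l g_m$ with $\sum|c_{klm}|$ controlled by the above norm bounds (together with $\kappa$, $N$, and $\max_\gamma\sum_{\alpha\beta}|c_\gamma^{\alpha\beta}|$).

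To conclude integrability of $\exp(|Y_1^\kappa|)$ I would use hypercontractivity / the tail estimate for polynomial chaos: a degree-$p$ Wiener chaos $X$ satisfies $\mathbb{P}(|X|>t) \le \exp(-c\,t^{2/p}/\|X\|^{2/p})$, and for $p=3$ this gives stretched-exponential tails with exponent $2/3 < 1$, so $\mathbb{E}[e^{|X|}] = \int_0^\infty e^t\,d(-\mathbb{P}(|X|>t)) < \infty$. Equivalently, one can bound $\mathbb{E}[|Y_1^\kappa|^n] \le (Cn)^{3n/2}(\text{const})^n$ by Gaussian hypercontractive moment bounds and sum the series $\sum_n \mathbb{E}[|Y_1^\kappa|^n]/n!$, which converges since $(3n/2)! / n! $ grows only like $C^n n^{n/2}$ and is beaten by... — more carefully, $\sum_n (Cn)^{3n/2}/n! \cdot c^n$ diverges, so the clean route is genuinely the tail bound with exponent $2/3<1$ rather than raw moments. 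I expect the main obstacle to be precisely this last point: a cubic (odd-degree) Gaussian polynomial is not bounded below, so one cannot discard $Y_1^\kappa$ by a sign argument, and one must genuinely invoke the sub-Gaussian-type concentration for finite chaos — the technical care being to verify the kernel of $Y_1^\kappa$ is Hilbert–Schmidt (indeed the $w$-integral lands in the dual Hilbert space), which is exactly what the uniform norm bounds plus the $\psi_w$-decay guarantee, and to keep all constants finite and $\kappa$-dependent but finite for each fixed $\kappa$.
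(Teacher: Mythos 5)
Your reduction has a genuine gap at the final step. Discarding $Y_3^\kappa \ge 0$ and bounding $\mathcal{Y}^\kappa \le e^{|Y_1^\kappa|}$ (using $Y_1^\kappa + Y_2^\kappa = 2\,\mathrm{Re}\,Y_1^\kappa$; both of these observations are correct) leaves you needing $\mathbb{E}\big[e^{|Y_1^\kappa|}\big] < \infty$ for a degree-$3$ Gaussian chaos, and this is false in general and cannot be rescued by hypercontractivity: the tail bound $\mathbb{P}(|X|>t)\le \exp(-c\,t^{2/3})$ gives $\int_0^\infty e^t\,d(-\mathbb{P}(|X|>t)) = \infty$, since $t - c\,t^{2/3}\to+\infty$; already for a single standard Gaussian $g$ one has $\mathbb{E}[e^{|g|^3}]=\infty$. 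You half-noticed the problem when your moment series diverged, but the ``clean route via the tail bound'' suffers from exactly the same defect: exponential integrability of a Gaussian chaos requires degree at most $2$ (tail exponent $2/p \ge 1$). So the scheme cannot close as stated.

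The missing idea is that $Y_3^\kappa$ must not be thrown away; it is precisely what controls the cubic terms. The paper completes the square: with $a = \left(B,\kappa\xi_{ij,\gamma}(w)\right)_\sharp$ and $b = \sum_{\alpha,\beta} c_\gamma^{\alpha\beta}\left(B^{\otimes^2},\pi_{i,\alpha,w}\otimes\pi_{j,\beta,w}\right)_\sharp$, one has
\beq Y_1^\kappa + Y_2^\kappa + Y_3^\kappa = \sum_{\gamma=1}^N\int_{\bC^4} d\lambda_4(w)\sum_{1\le i<j\le 3}\Big[\,|a+b|^2 - |a|^2\,\Big] \ \ge\ -\sum_{\gamma=1}^N\int_{\bC^4}d\lambda_4(w)\sum_{1\le i<j\le 3}\left|\left(B,\kappa\xi_{ij,\gamma}(w)\right)_\sharp\right|^2, \nonumber\eeq
so that $\mathcal{Y}^\kappa \le \exp\big[\tfrac12\int_{\bC^4} d\lambda_4(w)\sum_{\gamma}\sum_{i<j}\left|\left(B,\kappa\xi_{ij,\gamma}(w)\right)_\sharp\right|^2\big]$, i.e.\ a bound by the exponential of a \emph{quadratic} Gaussian functional. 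Integrability then follows by Jensen's inequality (pushing the exponential inside the $w$-integral against the probability measure $d\lambda_4$) together with the renormalization computation $\parallel\xi_{ij}(w)\parallel^2 = 1/2\pi$, which makes the relevant Gaussian exponent $1/\pi < 1$ and the resulting one-dimensional Gaussian integrals finite. Your norm bounds from Remark \ref{r.rem.2} and the Bochner-integral observations are fine and reusable, but the decisive step is this cancellation between the cubic and quartic pieces, not a chaos tail estimate.
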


\begin{proof}
Observe that $Y_1^\kappa  \left( B\right) + Y_2^\kappa\left( B\right) + Y_3^\kappa\left( B\right)$
is equal to
\begin{align*}
\sum_{\gamma=1}^N \int_{w\in\bC^4}d\lambda_4(w)\ \Bigg[ \sum_{1\leq i<j \leq 3}\Big|\left(B, \kappa\xi_{ij,\gamma}(w) \right)_\sharp + \sum_{\alpha , \beta}c_\gamma^{\alpha\beta}&\left(B^{\otimes^2}, \pi_{i,\alpha,w} \otimes \pi_{j, \beta, w} \right)_\sharp \Big|^2 \\
-& \sum_{1\leq i<j \leq 3}\left|\left( B, \kappa\xi_{ij, \gamma}(w)\right)_\sharp \right|^2 \Bigg].
\end{align*}
Thus,
\begin{align}
\mathcal{Y}^\kappa  \left( B\right)
=&\exp\Bigg[ -\frac{1}{2}\left[Y_1^\kappa  \left( B\right) + Y_2^\kappa\left( B\right) + Y_3^\kappa\left( B\right) \right]\Bigg] \nonumber \\
\leq& \exp\left[ \frac{1}{2}\int_{w \in \bC^4}d\lambda_4(w)\sum_{\alpha=1}^N\sum_{1\leq i<j \leq 3}\left|\left( B, \kappa\xi_{ij,\alpha}(w)\right)_\sharp\right|^2 \right], \nonumber
\end{align}
whereby $\{\left(\cdot, \kappa\xi_{ij,\alpha}(w)\right)_\sharp:\ 1\leq \alpha\leq N, 1\leq i < j \leq 3\}$ is a set of i.i.d. complex random variables with mean 0. Hence, it suffices to show \beq \exp\left[ \frac{1}{2}\int_{w \in \bC^4}d\lambda_4(w)\left|\left( \cdot, \kappa\xi_{ij}(w)\right)_\sharp\right|^2 \right]\nonumber \eeq is integrable on $\dB$.

Using Jensen's inequality, we have
\begin{align*}
\exp&\left[ \frac{1}{2}\int_{w \in \bC^4}d\lambda_4(w)\left|\left( \cdot, \kappa\xi_{ij}(w)\right)_\sharp \right|^2 \right] \\
&\leq
\int_{w \in \bC^4}d\lambda_4(w)\exp\left[\frac{1}{2}\left|\left( \cdot, \kappa\xi_{ij}(w)\right)_\sharp\right|^2 \right].
\end{align*}

Now, we can write the complex-valued random variable
\begin{align}
\left(\cdot, \kappa\xi_{ij,\alpha}(w)\right)_\sharp =& \sqrt{-1}\left(\cdot, f \right)_\sharp + \left(\cdot, g \right)_\sharp \nonumber \\
=& \left[\frac{\langle f, g \rangle}{\langle f, f\rangle} + \sqrt{-1} \right]\left(\cdot, f \right)_\sharp + \left(\cdot, g - \frac{\langle f, g \rangle}{\langle f, f \rangle}f \right)_\sharp, \label{e.gs.1}
\end{align}
whereby $f, g \in [\dB\otimes \mathfrak{g}]^\ast$. Thus $|\left(\cdot, \kappa\xi_{ij,\alpha}(w)\right)_\sharp|^2$ is equal to $|Z_1 + \sqrt{-1}Z_2|^2$ in distribution, each $Z_i$ is a real normal random variable with mean 0.

From Equation (\ref{e.gs.1}), write
\beq c = \frac{\langle f, g \rangle}{\langle f, f\rangle} + \sqrt{-1}, \quad W_1 = \left(\cdot, f \right)_\sharp, \quad W_2 = \left(\cdot, g - \frac{\langle f, g \rangle}{\langle f, f \rangle}f \right)_\sharp. \nonumber \eeq

Then $|(\cdot, \kappa \xi_{ij}(w))_\sharp| \leq | c| |W_1| + |W_2|$, so by Jenson's inequality, \beq |(\cdot, \kappa \xi_{ij}(w))_\sharp|^2 \leq 4\left[ \frac{1}{2}\left(| c| |W_1| + |W_2| \right) \right]^2 \leq 2[|c|^2|W_1|^2 + |W_2|^2]. \nonumber \eeq

From Remark \ref{r.rem.2}, $\bE[ |(\cdot, \kappa \xi_{ij}(w))_\sharp|^2] = 1/2\pi$. Because $W_1$ is independent of $W_2$, note that $\bE[|(\cdot, \kappa \xi_{ij}(w))_\sharp|^2] = |c|^2\bE[|W_1|^2] + \bE[|W_2|^2]$. Hence, for some constants $a^2 + b^2 = 1$, which depend on $w$, we have
\beq 2\pi|(\cdot, \kappa \xi_{ij}(w))_\sharp|^2 \leq 2[a^2\tilde{W}_1^2 + b^2\tilde{W}_2^2], \nonumber \eeq whereby for each $i=1,2$,
\begin{itemize}
  \item $\tilde{W}_i$ is a (real) random variable on $\left(\dB , \tilde{\mu}_{\kappa^2} \right)$,
  \item $\tilde{W}_i$ is equal in distribution to a standard normal $Z_i$ and
  \item $Z_1$ is independent of $Z_2$.
\end{itemize}

Using Fubini's Theorem,
\begin{align*}
\mathbb{E}&\left[\int_{w \in \bC^4}d\lambda_4(w)\exp\left[\frac{1}{2}\left|\left( \cdot, \kappa\xi_{ij}(w)\right)_\sharp\right|^2 \right] \right] \\
&= \int_{w \in \bC^4}d\lambda_4(w)\mathbb{E}\left[
\exp\left[\frac{1}{2}\left|\left( \cdot, \kappa\xi_{ij}(w)\right)_\sharp\right|^2 \right] \right]  \\
&\leq \int_{w \in \bC^4}d\lambda_4(w)\mathbb{E}\left[
\exp\left[\frac{1}{2\pi}[a^2\tilde{W}_1^2 + b^2\tilde{W}_2^2] \right] \right]
\leq \mathbb{E}\left[
\exp\left[\frac{1}{2\pi}[Z_1^2 + Z_2^2] \right] \right]\\
&= \frac{1}{2\pi}\int_{x\in \bR} e^{x^2 /2\pi}e^{-x^2/2}dx \cdot \int_{y\in \bR} e^{y^2 /2\pi}e^{-y^2/2}dy \\
&= \left[\frac{1}{\sqrt{1-(1/\pi)}} \right]^2 < \infty.
\end{align*}
\end{proof}

By modifying the above proof, we can actually prove the following result.

\begin{cor}\label{c.p.1}
For any $1\leq p < \pi$, we have \beq \mathbb{E} \left( [\mathcal{Y}^{\kappa}]^p \right) \leq \left(\frac{1}{\sqrt{1-p/\pi}} \right)^{2N}. \label{e.y.3} \eeq
\end{cor}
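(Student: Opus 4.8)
The plan is to repeat the argument of Lemma \ref{l.y.1} verbatim, but keeping track of the exponent $p$ throughout. First I would write
\[
[\mathcal{Y}^\kappa(B)]^p = \exp\!\left[-\frac{p}{2}\bigl(Y_1^\kappa(B) + Y_2^\kappa(B) + Y_3^\kappa(B)\bigr)\right],
\]
and recall from the proof of Lemma \ref{l.y.1} that the completed-square identity gives
\[
Y_1^\kappa(B) + Y_2^\kappa(B) + Y_3^\kappa(B) \geq -\sum_{\gamma=1}^N \int_{\bC^4} d\lambda_4(w) \sum_{1\leq i<j\leq 3}\bigl|(B,\kappa\xi_{ij,\gamma}(w))_\sharp\bigr|^2,
\]
so that $[\mathcal{Y}^\kappa]^p \leq \exp\bigl[\frac{p}{2}\int_{\bC^4} d\lambda_4(w)\sum_{\gamma=1}^N\sum_{i<j}|(B,\kappa\xi_{ij,\gamma}(w))_\sharp|^2\bigr]$. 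Because there are $N$ independent copies $\tilde\mu_{\kappa^2}^{\times^N}$ and the $\gamma$-summands act on different factors, the expectation factorizes into an $N$-fold product, reducing the estimate to bounding $\bE\bigl[\exp(\frac{p}{2}\int d\lambda_4(w)\sum_{i<j}|(\cdot,\kappa\xi_{ij}(w))_\sharp|^2)\bigr]$ on a single copy of $\dB$.

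Next I would apply Jensen's inequality (as in the lemma) to pull the $w$-integral outside the exponential, then for fixed $w$ use the Gram--Schmidt decomposition \eqref{e.gs.1} together with Remark \ref{r.rem.2}, which gives $\bE[|(\cdot,\kappa\xi_{ij}(w))_\sharp|^2] = 1/2\pi$ and hence constants $a^2+b^2=1$ with $2\pi|(\cdot,\kappa\xi_{ij}(w))_\sharp|^2 \leq 2[a^2\tilde W_1^2 + b^2 \tilde W_2^2]$ for standard independent normals $\tilde W_1, \tilde W_2$. This yields
\[
\bE\!\left[\exp\!\left(\tfrac{p}{2}|(\cdot,\kappa\xi_{ij}(w))_\sharp|^2\right)\right] \leq \bE\!\left[\exp\!\left(\tfrac{p}{2\pi}[Z_1^2+Z_2^2]\right)\right] = \left(\frac{1}{\sqrt{1-p/\pi}}\right)^2,
\]
using the elementary Gaussian integral $\int_{\bR} e^{(p/2\pi)x^2} e^{-x^2/2}\,dx/\sqrt{2\pi} = (1-p/\pi)^{-1/2}$, which converges precisely when $p<\pi$. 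Multiplying the single-copy bound over the $N$ independent factors gives the exponent $2N$ in \eqref{e.y.3}.

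The one point requiring a little care — and the closest thing to an obstacle — is the handling of the sum over $1\leq i<j\leq 3$ inside the exponential on a single copy: unlike the $\gamma$-index, these three terms live on the same factor $\dB$, so their contributions are not independent and the naive factorization does not apply directly. The fix is the same as in Lemma \ref{l.y.1}: one does not gain anything from the three $(i,j)$ pairs being correlated, so after pulling out the $w$-integral by Jensen one bounds $\exp\bigl(\frac{p}{2}\sum_{i<j}|(\cdot,\kappa\xi_{ij}(w))_\sharp|^2\bigr)$ by repeatedly using the crude Gaussian-type estimate above (or, cleanly, by noting that in the lemma only one representative pair is actually estimated and the sum over $i<j$ and over $\alpha$ is absorbed into the product structure of $\tilde\mu_{\kappa^2}^{\times^N}$). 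Since the statement only asks for the bound $(1-p/\pi)^{-N}$ raised to the power $2$, i.e. exponent $2N$ matching one factor of $(1-p/\pi)^{-1/2}$ per real Gaussian direction per copy, tracking the constants carefully through this step is what the proof comes down to; everything else is a line-by-line copy of the previous proof with $1$ replaced by $p$.
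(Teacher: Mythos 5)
Your overall strategy is the one the paper intends: the paper offers no separate argument for Corollary \ref{c.p.1} beyond ``modifying the proof of Lemma \ref{l.y.1}'', and tracking the exponent $p$ through that proof, with the threshold $p<\pi$ coming from the Gaussian integral $\int_{\bR} e^{px^2/2\pi}e^{-x^2/2}\,dx$, is exactly the intended route. The gap is in the step you yourself flag and then wave away. For fixed $\gamma$, the three random variables $\left(\cdot,\kappa\xi_{ij,\gamma}(w)\right)_\sharp$, $1\leq i<j\leq 3$, are not merely ``correlated'' terms that cost nothing: they pick out the three distinct components $B_{12}, B_{13}, B_{23}$ of the same copy of $\dB$, which carries a \emph{product} Wiener measure, so they are independent, and after the completed squares are discarded every one of them survives in the exponent. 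Consequently the expectation factorizes over all $3N$ indices $(i,j,\gamma)$, each factor is bounded by $(1-p/\pi)^{-1}$, and your argument as written delivers $\bE\left([\mathcal{Y}^\kappa]^p\right)\leq \left(1/\sqrt{1-p/\pi}\right)^{6N}$, not the exponent $2N$ asserted in Equation (\ref{e.y.3}). Your claim that ``the sum over $i<j$ is absorbed into the product structure of $\tilde{\mu}_{\kappa^2}^{\times^N}$'' is false --- only the $\gamma$-sum is absorbed that way --- and the alternative you sketch, applying Jensen directly to the normalized sum over the three pairs, is not free either: it replaces $p/2$ by $3p/2$ inside the exponential and so only converges for $p<\pi/3$.

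To keep the size of the issue in perspective: the proof of Lemma \ref{l.y.1} itself reduces to a single representative pair $(i,j)$ without tracking multiplicity, which is harmless for mere finiteness but not for the precise constant, and what is actually used downstream (in Lemma \ref{l.bds.1} and Theorem \ref{t.p.2}) is only an $L^p$ bound uniform in $\kappa$, which your honest bookkeeping with exponent $6N$ does supply. So your proposal proves a correct and sufficient statement by essentially the paper's method, but it does not establish the constant $\left(1/\sqrt{1-p/\pi}\right)^{2N}$ as stated, and nothing in your write-up closes that gap; to defend the stated exponent you would need either to keep part of the completed squares rather than discarding them, or to argue that for each $\gamma$ only one pair $(i,j)$ can contribute, neither of which you attempt.
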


\begin{rem}\label{r.p.1}
If we replace the factor $1/\sqrt{2\pi}$ in $\psi_w$ with any $0<\tilde{c} < 1/\sqrt2$, then the corollary holds true for any $1 \leq p < 1/2\tilde{c}^2$.
\end{rem}

\begin{notation}(Casimir operator)\label{n.co.1}\\
Let $\mathfrak{g}$ be a semi-simple Lie Algebra. For an irreducible representation $\rho: \mathfrak{g} \rightarrow {\rm End}(\bC^{\tilde{N}})$ such that $\rho(\mathfrak{g})$ is a set containing skew-Hermitian matrices, we define $C(\rho) \in \bR$ such that \beq \Tr[\rho(E^\alpha)\rho(E^\beta)] = C(\rho)\Tr[E^\alpha E^\beta]. \label{e.ck.2} \eeq

Also define \beq \mathscr{E}(\rho) := -\sum_{\alpha=1}^N \rho(E^\alpha)\rho(E^\alpha) \nonumber \eeq to be its (quadratic) Casimir operator.
\end{notation}


\begin{defn}\label{d.y.1}
Refer to Definitions \ref{d.rat.1} and \ref{d.r.1} and Equation (\ref{e.b.1}). Let $\sigma: I_\delta^2 \rightarrow \bR^4$ be any parametrization of $R_\delta[a,T]$. In terms of this parametrization, we can write
\begin{align}
c\int_{R_\delta[a,T]} \fd A =& c\sum_{\alpha=1}^N\int_{I_\delta^2}d\hat{s}\ \sum_{j=1}^3 |J_{0j}^{\sigma}|(\hat{s}) [\fd_0 A_{j,\alpha}](\sigma(\hat{s}))\otimes E^\alpha \nonumber \\
=&c\sum_{\alpha=1}^N\int_{I_\delta^2}d\hat{s}\ \sum_{j=1}^3 |J_{0j}^{\sigma}|(\hat{s}) \left(\fd A,\tilde{\xi}_{0j}(\sigma(\hat{s}))\otimes E^\alpha\right)_\sharp \otimes E^\alpha \label{e.j.1} \\
=& \left(\fd A, c\sum_{\alpha=1}^N\int_{I_\delta^2}d\hat{s}\ \sum_{j=1}^3 |J_{0j}^{\sigma}|(\hat{s}) \tilde{\xi}_{0j}(\sigma(\hat{s}))\otimes E^\alpha\right)_\sharp \otimes E^\alpha. \nonumber
\end{align}
\end{defn}

Besides the renormalization rule mentioned in Remark \ref{r.rem.1} and adding a renormalization factor $\psi_w$ as discussed in Remark \ref{r.rem.2}, we also need a renormalization transformation as discussed in \cite{YMLim01}. We embed $\bR^4$ inside $\bC^4$, scaled by a factor $\kappa/2$. Thus, a surface $R_\delta[a,T] \subset \bR^4$ will be embedded inside $\bC^4$, but scaled with a factor $\kappa/2$.

Let $c = 1/\kappa$. The idea that the coupling constant goes to 0 as $\kappa$ goes to infinity, is known as asymptotic freedom. See \cite{YMLim01} and \cite{Gross01}.

\begin{rem}
Because of renormalization, note that we have to make the following changes to the RHS of Equation (\ref{e.j.1}).
\begin{itemize}
\item By introducing the renormalization factor $\psi_w$, we replace $\tilde{\xi}_{ab}$ in RHS of Equation (\ref{e.j.1}) with $\xi_{ab} = \psi\cdot \tilde{\xi}_{ab}$ .
\item By the renormalization rule $\fd_a \mapsto \kappa \fd_a$, we have to add an extra $\kappa$ in front of $\xi_{0i}$ in RHS of Equation (\ref{e.j.1}).
\item Because of the renormalization transformation, we have to scale $\sigma(s,t)$by a factor $\kappa/2$. Hence we have to replace \beq \sigma(s,t) \mapsto \kappa \sigma(s,t)/2. \nonumber \eeq As a consequence, we need to add an extra $\kappa^2/4$ to the surface integral in RHS of Equation (\ref{e.j.1}).
\end{itemize}
\end{rem}

\begin{defn}\label{d.p.2}
Suppose $\rho: \mathfrak{g} \rightarrow {\rm End}(\bC^{\tilde{N}})$ is an irreducible representation of a semi-simple Lie Algebra $\mathfrak{g}$, and $\rho(\mathfrak{g})$ consists of skew-Hermitian matrices.

Replace the coupling constant $c$ with $1/\kappa$. After applying the renormalization rule, transformation and adding a factor to Equation (\ref{e.j.1}), define
\begin{align*}
\nu_{R_\delta[a,T]}^{\kappa,\alpha} &:= \frac{1}{\kappa}\int_{I_\delta^2}d\hat{s}\ \frac{\kappa^2}{4}\sum_{j=1}^3 |J_{0j}^{\sigma}|(\hat{s}) \kappa\xi_{0j}(\kappa \sigma(\hat{s})/2)\otimes E^\alpha, \\
\left(\cdot, \nu_{R_\delta[a,T]}^{\kappa,\rho}\right)_\sharp &:= \sum_{\alpha=1}^N \left(\cdot, \nu_{R_\delta[a,T]}^{\kappa,\alpha}\right)_\sharp  \otimes \rho(E^\alpha),\quad \left(\cdot, \nu_{R_\delta[a,T]}^{\kappa} \right)_\sharp := \sum_{\alpha=1}^N \left(\cdot, \nu_{R_\delta[a,T]}^{\kappa,\alpha} \right)_\sharp \otimes E^\alpha,
\end{align*}
and $\mathcal{J}_{R_\delta[a,T]}^{\kappa,\rho} := \exp\left[\left(\cdot, \nu_{R_\delta[a,T]}^{\kappa,\rho} \right)_\sharp \right]$.
\end{defn}

\begin{rem}
By abuse of notation, for $B \in \dB \otimes \mathfrak{g} \subset B(\bC^4) \otimes \Lambda^2(\bR^4) \otimes \mathfrak{g}$,
\beq \left(B, \nu_{R_\delta[a,T]}^{\kappa,\rho} \right)_\sharp := \frac{1}{\kappa} \sum_{\alpha=1}^N\int_{I_\delta^2}d\hat{s}\ \frac{\kappa^2}{4}\sum_{j=1}^3 |J_{0j}^{\sigma}|(\hat{s}) \left(B, \kappa\xi_{0j}(\kappa\sigma(\hat{s})/2)\otimes E^\alpha\right)_\sharp \otimes \rho(E^\alpha), \nonumber \eeq so $\left(\cdot, \nu_{R_\delta[a,T]}^{\kappa,\rho}\right)_\sharp$ and $\left(\cdot, \nu_{R_\delta[a,T]}^{\kappa}\right)_\sharp$ are actually $\rho(\mathfrak{g})$-valued and $\mathfrak{g}$-valued random variables respectively.
Compare with Equation (\ref{e.j.1}).
\end{rem}

\begin{defn}\label{d.ym.2}(Definition of the Yang-Mills Path integral.)\\
Let $R[a,T]$ be a compact rectangular surface described in Definition \ref{d.rat.1}. Let $\rho$ be an irreducible representation of $\mathfrak{g}$. Consider the probability space $\dB \otimes \mathfrak{g}$ equipped with product Wiener measure $\tilde{\mu}_{\kappa^2}^{\times^{N}}$.

From Expression \ref{e.ym.1}, we define a sequence of Yang-Mills Path integrals using the renormalization flow \beq \left\{\Psi_\kappa: \mathcal{S}_\kappa(\bR^4)\otimes \Lambda^1(\bR^3) \otimes \mathfrak{g} \longrightarrow  H^2(\bC^4)\otimes \Lambda^1(\bR^3) \otimes \mathfrak{g}  \ |\ \kappa>0 \right\}, \nonumber \eeq
as
\begin{align*}
\left\{ \bE_{{\rm YM}}^\kappa\left[ \exp\left[\left(\cdot, \nu_{R[a,T]}^{\kappa,\rho}\right)_\sharp\right]\right] := \frac{1}{\mathbb{E} \left[ \mathcal{Y}^\kappa\right]}\mathbb{E}\left[ \mathcal{J}_{R[a,T]}^{\kappa,\rho} \cdot\mathcal{Y}^\kappa\right]  :\ \kappa > 0 \right\},
\end{align*}
where $\mathcal{Y}^\kappa$ was defined in Definition \ref{d.ym.3}. This is finite follows from applying Holder's Inequality for some $p$ in Corollary \ref{c.p.1}.

We will also write \beq \bE_{{\rm YM}}^\kappa\left[\Tr\ \exp\left[\left(\cdot, \nu_{R[a,T]}^{\kappa,\rho}\right)_\sharp \right]\right] \equiv \frac{1}{\mathbb{E} \left[ \mathcal{Y}^\kappa\right]}\mathbb{E}\left[ \Tr\ \exp\left[\left(\cdot,\nu_{R[a,T]}^{\kappa,\rho}\right)_\sharp \right] \cdot \mathcal{Y}^\kappa\right]. \nonumber \eeq

Here, it is understood that the expectation is taken on the probability space $(\dB \otimes \mathfrak{g},\tilde{\mu}_{\kappa^2}^{\times^{N}})$.
\end{defn}

\section{Wilson Area Law Formula}\label{s.cym}

We will now show how one can obtain the Area Law Formula for a compact semi-simple gauge group, using our definition of the Yang-Mills Path integral, given in Definition \ref{d.ym.2}.

On a finite lattice gauge with spacing $\epsilon$, one will define a finite dimensional integral, dependent on $\epsilon$, synonymous with $1/\kappa$. The Yang-Mills path integral is then defined by taking the limit as $\epsilon$ goes down to 0. The reader may refer to \cite{glimm1981quantum}, or other references on lattice gauge approximation in \cite{YMLim01}. We constructed a sequence of functionals $\{\mathcal{Y}^\kappa\}_{\kappa > 0}$ and $\{\mathcal{J}_{R[a,T]}^{\kappa,\rho}\}_{\kappa > 0}$, each defined on a Wiener space $\dB \otimes \mathfrak{g}$, equipped with product Wiener measure $\tilde{\mu}_{\kappa^2}^{\times^{N}}$ with variance $1/\kappa^2$. For each $\kappa > 0$, the path integral defined in Definition \ref{d.ym.2} is analogous to a finite dimensional integral on a finite lattice gauge with spacing $\epsilon$. It remains to take $\kappa$ going to infinity.

On a lattice gauge, Gross in \cite{Gross01} explained that the coupling constant $c(\epsilon)$ should depend on $\epsilon$, and as $\epsilon$ goes down to 0, $c(\epsilon)$ should go down to 0. This idea is also known as asymptotic freedom. When the gauge group is non-abelian, asymptotic freedom applies. See \cite{peskin1995introduction}. In Definition \ref{d.p.2}, we set the coupling constant $c = 1/\kappa$. This is crucial as it allows us to obtain the Area Law formula. (See Remark \ref{r.af.1}.) Without asymptotic freedom, the Area Law formula does not hold. Refer to \cite{YMLim01} for the case of abelian gauge group.

When $\xi_{ab}(w)$ and $\zeta_i(w)$ are complex-valued random variables with mean zero on $\dB$, with probability measure $\tilde{\mu}_{\kappa^2}$, we have \beq \left|\left(\cdot, \frac{\kappa\xi_{ab}(w)}{\parallel \xi_{ab}(w)\parallel}\right)_\sharp \right|^2,\quad \left|\left(\cdot, \frac{\kappa\zeta_i(w)}{\parallel \zeta_i(w) \parallel}\right)_\sharp \right|^2, \nonumber \eeq both are equal to $|Z_1 + \sqrt{-1}Z_2|^2$ in distribution, each $Z_i$ is a (real) normal random variable, with $\bE|Z_1|^2 + \bE|Z_2|^2 = 1$. See Equation (\ref{e.gs.1}).

Both $\parallel \xi_{ab}(w) \parallel$ and $\parallel \zeta_i(w) \parallel$ are less than or equal to $\sqrt{40/2\pi}$, from Remark \ref{r.rem.2}. Hence, we see that
\beq \bE[|Y_1^{\kappa}|^{2l}] = O(1/\kappa^{4l}), \quad \bE[|Y_2^{\kappa}|^{2l}] = O(1/\kappa^{4l}), \quad \bE[|Y_3^{\kappa}|^{n}] = O(1/\kappa^{4n}), \label{e.bds.1} \eeq and when $i=1,2$, $\bE[Y_i^{\kappa, 2l+1}] = 0$, for $l, n \in \mathbb{N}$. This is because the expectation, of an odd product of Gaussian random variables with mean 0, is zero.

Refer to Definition \ref{d.rat.1}. Given any $\mathbf{x}$ in the interior of $R_\delta[a,T]$, we proved in \cite{YMLim01} that
\begin{align}
\left(\frac{\kappa}{4}\right)^2\int_{I_\delta^2}d\hat{s}\ \sum_{j=1}^3 |J_{0j}^{\sigma}|(\hat{s})| \left\langle \xi_{0j}(\kappa \sigma(\hat{s})/2), \xi_{0j}(\kappa \mathbf{x}/2) \right\rangle
=& \frac{1}{4}\sum_{j=1}^3 \rho_\sigma^{0j}(\hat{t}) + O(e^{-\kappa^2\tilde{C}(\mathbf{x})}), \label{e.a.7}
\end{align}
whereby $\mathbf{x} = \sigma(\hat{t})$ and we can choose $\tilde{C}(\mathbf{x})$ to be $1/16$ of the shortest distance between $\mathbf{x}$ and the boundary of $R_\delta[a,T]$. See Definition \ref{d.r.1}.

Since for any $\mathbf{x} \in R[a,T] \equiv R_0[a,T]$, its distance to the boundary of $R_\delta[a,T]$ is at least the minimum of $\delta T$ and $\delta |a|$, we have
\begin{align}
\int_{I^2}d\hat{t}&\left(\frac{\kappa}{4}\right)^2\int_{I_\delta^2}d\hat{s}\ \sum_{j=1}^3 |J_{0j}^{\sigma}|(\hat{s})|J_{0j}^{\sigma}|(\hat{t}) \left\langle \xi_{0j}(\kappa \sigma(\hat{s})/2), \xi_{0j}(\kappa \sigma(\hat{t})/2) \right\rangle  \nonumber \\
&= \frac{1}{4}\int_{I^2}d\hat{t}\sum_{j=1}^3 \rho_\sigma^{0j}(\hat{t})|J_{0j}^{\sigma}|(\hat{t}) +  O\left(e^{-\kappa^2 C}\right) = \frac{|a|T}{4} + O\left(e^{-\kappa^2 C}\right), \label{e.a.5}
\end{align}
for $C = [{\rm min}\{T\delta, |a|\delta\}]^2/16 > 0$.

Furthermore, by using Dominated Convergence Theorem, Equation (\ref{e.a.7}) implies that
\begin{align}
\bE&\left[-\left( \cdot, \nu_{R_\delta[a,T]}^{\kappa,\rho} \right)_\sharp^2 \right]  = -\sum_{\alpha=1}^N \bE\left[\left( \cdot, \nu_{R_\delta[a,T]}^{\kappa,\alpha} \right)_\sharp^2 \right]\otimes \rho(E^\alpha)\rho(E^\alpha) \nonumber \\
&= -\sum_{\alpha=1}^N \frac{1}{\kappa^2}\left\langle  \nu_{R_\delta[a,T]}^{\kappa,\alpha}, \nu_{R_\delta[a,T]}^{\kappa,\alpha} \right\rangle\otimes \rho(E^\alpha)\rho(E^\alpha) \nonumber \\
&= \int_{I_\delta^2}d\hat{t}\left(\frac{\kappa}{4}\right)^2\int_{I_\delta^2}d\hat{s}\ \sum_{j=1}^3 |J_{0j}^{\sigma}|(\hat{s})|J_{0j}^{\sigma}|(\hat{t}) \left\langle \xi_{0j}(\kappa \sigma(\hat{s})/2), \xi_{0j}(\kappa \sigma(\hat{t})/2) \right\rangle\otimes \mathscr{E} \nonumber\\
&\longrightarrow \frac{1}{4}\int_{I_\delta^2}d\hat{t}\sum_{j=1}^3\rho_\sigma^{0j}(\hat{t})|J_{0j}^{\sigma}|(\hat{t}) \otimes \mathscr{E} \equiv  \frac{|a|T(1+2\delta)^2}{4}\otimes \mathscr{E}, \label{e.a.8}
\end{align}
as $\kappa \rightarrow \infty$.

\begin{prop}\label{p.s.2}
We have that $\{Y_3^1 > 0\}$ with probability 1, on the Wiener space $(\dB \otimes \mathfrak{g}, \tilde{\mu}_1^{\times^N})$.
\end{prop}

\begin{proof}
Since $\mathfrak{g}$ is semi-simple, without any loss of generality, $c^{1,\beta_0}_\gamma = -c_\gamma^{\beta_0,1}\neq 0$ for some $\beta_0, \gamma \neq 1$. By Definition \ref{d.ym.3}, it suffices to show that for any $1 \leq i < j \leq 3$, we have that $\sum_{\alpha,\beta=1}^N c_\gamma^{\alpha\beta}(\cdot, \pi_{i,\alpha}(w))_\sharp (\cdot, \pi_{j,\beta}(w))_\sharp \neq 0$ with probability 1, for any $w \in \bC^4$.

Now, $\{(\cdot, \pi_{i,\alpha}(w))_\sharp, (\cdot, \pi_{j,\beta}(w))_\sharp:\ 1 \leq \alpha,\beta \leq N\}$ is equal in distribution to $\{\tilde{N}_\alpha, N_\beta: 1 \leq \alpha,\beta \leq N\}$, which is a set containing $2N$ independent, identically distributed (i.i.d.), complex Gaussian variables, of mean zero. Hence, it suffices to show that $\sum_{\alpha,\beta=1}^N c_\gamma^{\alpha,\beta}\tilde{N}_\alpha N_\beta = 0$ with probability 0.

We have
\begin{align*}
\sum_{\alpha,\beta=1}^N c_\gamma^{\alpha,\beta}\tilde{N}_\alpha N_\beta =&  \sum_{\beta=2}^N c_\gamma^{1,\beta}\tilde{N}_1 N_\beta + \sum_{\alpha=2}^N \sum_{\beta=1}^N c_\gamma^{\alpha,\beta}\tilde{N}_\alpha N_\beta.
\end{align*}
For some $\beta_0$, $c_\gamma^{1,\beta_0} \neq 0$. Therefore, we see that $\sum_{\beta=2}^N c_\gamma^{1,\beta} N_\beta$ is a non-trivial linear combination of the complex Gaussian variables $\{N_\beta:\ 1 \leq \beta \leq N\}$, thus with probability 1, $\sum_{\beta=2}^N c_\gamma^{1,\beta} N_\beta \neq 0$. Hence, we can solve $\sum_{\alpha,\beta=1}^N c_\gamma^{\alpha,\beta}\tilde{N}_\alpha N_\beta = 0$ uniquely for $\tilde{N}_1$, \beq \tilde{N}_1 = -\frac{\sum_{\alpha=2}^N \tilde{N}_\alpha\sum_{\beta=1}^N c_\gamma^{\alpha,\beta} N_\beta}{\sum_{\beta=2}^N c_\gamma^{1,\beta}N_\beta}, \nonumber \eeq showing that $\sum_{\alpha,\beta=1}^N c_\gamma^{\alpha,\beta}\tilde{N}_\alpha N_\beta = 0$ with probability 0, since $\{\tilde{N}_\alpha: 1 \leq \alpha \leq N\}$ is a set of i.i.d. complex Gaussian random variables.
\end{proof}

\begin{lem}\label{l.bds.1}
There exists a $\kappa_0 > 0$, independent of $\rho$, such that for all $\kappa > \kappa_0$, we have positive constants $c_1, c_2 > 0$ independent of $\rho$, such that the trace $\Tr$, \beq c_1\frac{C(\rho)}{\kappa^4} \leq \Tr\ \bE\left[\left( \cdot, \nu_{R_\delta[a,T]}^{\kappa,\rho} \right)_\sharp^2(\mathcal{Y}^\kappa-1)\right] \leq c_2\frac{C(\rho)}{\kappa^4}. \nonumber \eeq Refer to Notation \ref{n.co.1}. Here, the expectation is taken with respect to the Wiener space $\dB \otimes \mathfrak{g}$ equipped with product Wiener measure $\tilde{\mu}_{\kappa^2}^{\times^{N}}$.
\end{lem}

\begin{proof}
Note that because $\rho(\mathfrak{g})$ consists of skew-Hermitian matrices, the square $\left( B, \nu_{R_\delta[a,T]}^{\kappa,\rho} \right)_\sharp^2$ is non-positive definite.

Refer to Definition \ref{d.ym.3}. From Equation (\ref{e.bds.1}), the important terms we need to consider are $Y_3^\kappa \geq 0$ and the squares of $Y_1^\kappa$ and $Y_2^\kappa$. First, we claim that \beq \Tr\ \bE\left[-\left( \cdot, \nu_{R_\delta[a,T]}^{\kappa,\rho} \right)_\sharp^2 Y_3^\kappa\right] \geq \frac{C(\rho)}{\kappa^4}c ,
\label{e.b.3} \eeq
for some constant $c > 0$, independent of $\kappa$ and $\rho$.

Now, $\left(\kappa^4 Y_3^\kappa, \tilde{\mu}_{\kappa^2}^{\times^{N}} \right)$ is equal in distribution to
$\left( Y_3^1, \tilde{\mu}_{1}^{\times^{N}} \right)$. And for any $\epsilon > 0$, we can find a $\tilde{c}(\epsilon) > 0$ such that \beq \bE\left[1_{\{\kappa^4 Y_3^\kappa > \epsilon\}} \right] = \bE\left[1_{\{ Y_3^1 > \epsilon\}} \right] = 1-\tilde{c}(\epsilon), \nonumber \eeq independent of $\kappa$, and converges to 1 as $\epsilon \rightarrow 0^+$. See Proposition \ref{p.s.2}.

For any $\alpha=1, \cdots, N$, $\left(\left( \cdot, \nu_{R_\delta[a,T]}^{\kappa,\alpha} \right)_\sharp^2 , \tilde{\mu}_{\kappa^2}^{\times^{N}} \right)$ is equal in distribution to \\
$\left( \left( \cdot, \frac{1}{\kappa}\nu_{R_\delta[a,T]}^{\kappa,\alpha} \right)_\sharp^2 , \tilde{\mu}_{1}^{\times^{N}} \right)$. Apply Cauchy Schwartz inequality and Equation (\ref{e.ck.2}),
\begin{align*}
\Tr\ &\bE\left[- \left( \cdot, \frac{1}{\kappa}\nu_{R_\delta[a,T]}^{\kappa,\rho} \right)_\sharp^2 1_{\{Y_3^1 \leq \epsilon\}}\right] \leq NC(\rho)\left|\bE \left[ \left( \cdot, \frac{1}{\kappa}\nu_{R_\delta[a,T]}^{\kappa,1} \right)_\sharp^4\right]\right|^{1/2} \left|\bE\left[1_{\{Y_3^1 \leq \epsilon\}} \right]\right|^{1/2} \\
=&  \frac{\sqrt3NC(\rho)}{\kappa^2}\left\langle \nu_{R_\delta[a,T]}^{\kappa,1}, \nu_{R_\delta[a,T]}^{\kappa,1}\right\rangle \left|\bE\left[1_{\{Y_3^1 \leq \epsilon\}} \right]\right|^{1/2}
= \sqrt3\Tr\ \bE\left[- \left( \cdot, \nu_{R_\delta[a,T]}^{\kappa,\rho} \right)_\sharp^2 \right]\sqrt{\tilde{c}(\epsilon)},
\end{align*}
which goes to 0 as $\epsilon \rightarrow 0^+$. In the last equality, the expectation is taken with respect to $\tilde{\mu}_{\kappa^2}^{\times^N}$.

Hence, we have for $\epsilon$ small enough,
\begin{align*}
\Tr\ \bE\left[- \left( \cdot, \frac{1}{\kappa}\nu_{R_\delta[a,T]}^{\kappa,\rho} \right)_\sharp^2 1_{\{Y_3^1 > \epsilon\}}\right] &>\frac{1}{2} \Tr\ \bE\left[- \left( \cdot, \nu_{R_\delta[a,T]}^{\kappa,\rho} \right)_\sharp^2 \right] \\
&= \frac{NC(\rho)}{2} \bE\left[\left( \cdot, \nu_{R_\delta[a,T]}^{\kappa,1} \right)_\sharp^2 \right],
\end{align*}
the lower bound holds for any $\kappa$.

Together with Equation (\ref{e.a.8}), this will imply that
\begin{align*}
\Tr\ &\bE\left[- \left( \cdot, \nu_{R_\delta[a,T]}^{\kappa,\rho} \right)_\sharp^2  Y_3^\kappa \right]
= \frac{1}{\kappa^4}\Tr\ \bE\left[- \left( \cdot, \nu_{R_\delta[a,T]}^{\kappa,\rho} \right)_\sharp^2  \kappa^4 Y_3^\kappa \right] \\
\geq& \frac{1}{\kappa^4}\Tr\ \bE\left[- \left( \cdot, \nu_{R_\delta[a,T]}^{\kappa,\rho} \right)_\sharp^2  \kappa^4 Y_3^\kappa 1_{\{\kappa^4 Y_3^\kappa > \epsilon\}}\right] =
\frac{1}{\kappa^4}\Tr\ \bE\left[- \left( \cdot, \frac{1}{\kappa}\nu_{R_\delta[a,T]}^{\kappa,\rho} \right)_\sharp^2  Y_3^1 1_{\{Y_3^1 > \epsilon\}}\right]\\
\geq& \epsilon \frac{1}{\kappa^4}\Tr\ \bE\left[- \left( \cdot, \frac{1}{\kappa}\nu_{R_\delta[a,T]}^{\kappa,\rho} \right)_\sharp^2 1_{\{ Y_3^1 > \epsilon\}}\right] \geq \epsilon \frac{1}{\kappa^4}\tilde{c} C(\rho),
\end{align*}
for some constant $\tilde{c}>0$ independent of $\kappa$ and $\rho$. This proves Equation (\ref{e.b.3}).

The complex random variable $(\cdot, \pi_{i,\alpha}(w))_\sharp$
is independent of $(\cdot, \pi_{i,\beta}(w))_\sharp$, if $\alpha \neq \beta$. And
\beq \bE \left[ (\cdot, \pi_{i,\alpha}(w))_\sharp \overline{(\cdot, \pi_{j,\alpha}(w))_\sharp} \right] = \bE \left[ (\cdot, \pi_{i,\alpha}(w))_\sharp (\cdot, \pi_{j,\alpha}(\bar{w}))_\sharp \right] = \frac{1}{\kappa^2}\langle \pi_{i,\alpha}(w), \pi_{j,\alpha}(w) \rangle, \nonumber \eeq so $(\cdot, \pi_{i,\alpha}(w))_\sharp$
has variance $O(1/\kappa^2)$. Note that $\langle \cdot, \cdot \rangle$ is a sesquilinear complex inner product. Thus, using Cauchy Schwartz inequality and Equation (\ref{e.a.8}),
\begin{align}
&\left|\Tr\ \bE\left[\left( \cdot, \nu_{R_\delta[a,T]}^{\kappa,\rho} \right)_\sharp^2 \left|\sum_{\alpha , \beta}c_\gamma^{\alpha\beta}  (\cdot, \pi_{i,\alpha}(w))_\sharp (\cdot, \pi_{j,\beta}(w))_\sharp \right|^2  \right]\right|\nonumber \\
&= -\Tr\ \bE\left[\left( \cdot, \nu_{R_\delta[a,T]}^{\kappa,\rho} \right)_\sharp^2 \left|\sum_{\alpha , \beta}c_\gamma^{\alpha\beta}  (\cdot, \pi_{i,\alpha}(w))_\sharp (\cdot, \pi_{j,\beta}(w))_\sharp \right|^2  \right]\nonumber\\
&= C(\rho)\sum_{\alpha=1}^N\bE\left[\left( \cdot, \nu_{R_\delta[a,T]}^{\kappa,\alpha} \right)_\sharp^2 \left|\sum_{\alpha , \beta}c_\gamma^{\alpha\beta}  (\cdot, \pi_{i,\alpha}(w))_\sharp (\cdot, \pi_{j,\beta}(w))_\sharp \right|^2  \right]\nonumber \\
&\leq C(\rho)\sum_{\alpha=1}^N \left[\bE\left( \cdot, \nu_{R_\delta[a,T]}^{\kappa,\alpha} \right)_\sharp^4 \right]^{1/2} \left[\bE\left|\sum_{\alpha , \beta}c_\gamma^{\alpha\beta}  (\cdot, \pi_{i,\alpha}(w))_\sharp (\cdot, \pi_{j,\beta}(w))_\sharp \right|^4\right]^{1/2} \nonumber \\
&\leq C(\rho)\underline{C}\frac{1}{\kappa^4},\label{e.c.5}
\end{align}
for some constant $\underline{C}>0$ independent of $\rho$ and $\kappa$, from Equation (\ref{e.a.8}).

On the probability space $(\dB \otimes \mathfrak{g}, \tilde{\mu}_{\kappa^2}^{\times^{N}})$, we have
\begin{align*}
\int_{\dB \otimes \mathfrak{g}} \left( \cdot, \nu_{R_\delta[a,T]}^{\kappa,\rho} \right)_\sharp^2 Y_3^\kappa\ d\tilde{\mu}_{\kappa^2}^{\times^{N}} \equiv& \bE\left[\left( \cdot, \nu_{R_\delta[a,T]}^{\kappa,\rho} \right)_\sharp^2Y_3^\kappa\right]\\
=& \sum_{\gamma=1}^N \sum_{1\leq i<j \leq 3}\int_{w\in \bC^4}d\lambda_4\ C_{ij}^\gamma(w),
\end{align*}
whereby
\begin{align*}
C_{ij}^\gamma(w) :=& \bE\left[\left( \cdot, \nu_{R_\delta[a,T]}^{\kappa,\rho} \right)_\sharp^2 \left|\sum_{\alpha , \beta}c_\gamma^{\alpha\beta}  (\cdot, \pi_{i,\alpha}(w))_\sharp (\cdot, \pi_{j,\beta}(w))_\sharp \right|^2  \right].
\end{align*}
From the first claim, we see that $-\Tr\ C_{ij}^\gamma(w) \equiv \Tr\ |C_{ij}^\gamma(w)| > 0$ for some $\gamma$ and from Equation (\ref{e.c.5}), \beq \left|\Tr\ \bE\left[\left( \cdot, \nu_{R_\delta[a,T]}^{\kappa,\rho} \right)_\sharp^2Y_3^\kappa\right] \right| \leq 3\underline{C}NC(\rho)\frac{1}{\kappa^4}. \label{e.c.6} \eeq

Now, if the structure constant $c_\gamma^{\alpha\beta}$ is non-zero, then all the $\alpha, \beta$ and $\gamma$ must be distinct. And $(\cdot, \xi_{ij}(w))_\sharp$ is independent of $(\cdot, \xi_{\hat{i}\hat{j}}(w))_\sharp$, provided $(i,j) \neq (\hat{i},\hat{j})$, with \beq \bE\left[\left(\cdot, \xi_{ij}(w) \right)_\sharp  \overline{\left(\cdot, \xi_{ij}(w) \right)_\sharp} \right] = \frac{1}{\kappa^2}\parallel \xi_{ij}(w) \parallel^2 = \frac{1}{2\pi \kappa^2}, \nonumber \eeq
on the probability space $(\dB ,\tilde{\mu}_{\kappa^2})$.

Write \beq \widetilde{\sum} = \sum_{\gamma=1}^N \sum_{1\leq i < j \leq 3}, \quad \nu^\kappa = \nu_{R_\delta[a,T]}^{\kappa,\rho}. \nonumber \eeq Thus,
\begin{align*}
\bE&\left[ \left(\widetilde{\sum}\left(\cdot, \xi_{ij}(w)\otimes E^\gamma \right)_\sharp \right)^2\right] \\
&= \frac{1}{\kappa^2}\sum_{\gamma=1}^N \sum_{1\leq i < j \leq 3}\sum_{\hat{\gamma}=1}^N \sum_{1\leq \hat{i} < \hat{j} \leq 3}\left\langle \xi_{ij}(w)\otimes E^\gamma, \xi_{\hat{i}\hat{j}}(w)\otimes E^{\hat{\gamma}} \right\rangle= \widetilde{\sum} \frac{1}{\kappa^2}\parallel \xi_{ij}(w)\parallel^2.
\end{align*}

Now, $\{(\cdot, \xi_{ij,\gamma}(w) )_\sharp: 1\leq i < j \leq 3,\ 1 \leq \gamma \leq N\}$ is a set of i.i.d. complex random variables, independent of
\begin{itemize}
  \item $\left( \cdot, \nu_{R_\delta[a,T]}^\kappa \right)_\sharp$, because they are independent of $(\cdot, \xi_{0j,\gamma}(w) )_\sharp$, for $j=1,2,3$;
  \item $(\cdot, \pi_{i,\alpha}(w))_\sharp$, because $\zeta_i(w) = \zeta(w) \otimes dx^0 \wedge dx^i$ and when $c_\gamma^{\alpha\beta}$ is non-zero.
\end{itemize}

Therefore,
\begin{align*}
\bE&\left[\left|\left( \cdot, \nu^\kappa \right)_\sharp Y_2^\kappa \right|^2 \right] = \bE\left[-\left( \cdot, \nu^\kappa \right)_\sharp^2 \left|Y_2^\kappa \right|^2 \right] \\
=& \bE\left[\left|\left( \cdot, \nu^\kappa \right)_\sharp\widetilde{\sum}\sum_{\alpha , \beta}c_\gamma^{\alpha\beta}\int_{w \in \bC^4}d\lambda_4(w)\
(\cdot, \pi_{i,\alpha}(w))_\sharp (\cdot, \pi_{j,\beta}(w))_\sharp
(\cdot, \kappa \xi_{ij,\gamma}(\bar{w}))_\sharp  \right|^2 \right] \\
=& \widetilde{\sum}\bE\left[\left|\left( \cdot, \nu^\kappa \right)_\sharp\int_{w \in \bC^4}d\lambda_4(w)\ \sum_{\alpha , \beta}c_\gamma^{\alpha\beta}  (\cdot, \pi_{i,\alpha}(w))_\sharp (\cdot, \pi_{j,\beta}(w))_\sharp (\cdot, \kappa \xi_{ij,\gamma}(\bar{w}))_\sharp \right|^2\right].
\end{align*}

By Jensen's inequality and independence of the random variables, we have for $\Xi_{ij} := \bE\left[\left|(\cdot, \kappa\xi_{ij, \gamma}(w))_\sharp \right|^2\right]$,
\begin{align*}
\Tr\ &\bE\left[\left|\left( \cdot, \nu^\kappa \right)_\sharp Y_2^\kappa \right|^2 \right] =
\bE\left[-\Tr\ \left( \cdot, \nu^\kappa \right)_\sharp^2 \left|Y_2^\kappa \right|^2 \right] \\
\leq& \widetilde{\sum}\int_{w \in \bC^4}d\lambda_4(w)\ \bE\left[-\Tr\ \left( \cdot, \nu^\kappa \right)_\sharp^2\left|\sum_{\alpha , \beta}c_\gamma^{\alpha\beta}  (\cdot, \pi_{i,\alpha}(w))_\sharp (\cdot, \pi_{j,\beta}(w))_\sharp (\cdot, \kappa \xi_{ij,\gamma}(\bar{w}))_\sharp \right|^2 \right]\\
=&  \widetilde{\sum}\int_{\bC^4}d\lambda_4(w)\ \Xi_{ij} \cdot \bE\left[-\Tr\ \left( \cdot, \nu^\kappa \right)_\sharp^2\left|\sum_{\alpha , \beta}c_\gamma^{\alpha\beta}  (\cdot, \pi_{i,\alpha}(w))_\sharp (\cdot, \pi_{j,\beta}(w))_\sharp  \right|^2 \right] \\
=& \sum_{\gamma=1}^N \sum_{1\leq i < j \leq 3} \int_{w \in \bC^4}d\lambda_4(w)\ \parallel \xi_{ij}(w)\parallel^2 \Tr\ |C_{ij}^\gamma(w)|.
\end{align*}

Since \beq \parallel \xi_{ij}(w)\parallel^2  = \frac{1}{2\pi}, \nonumber \eeq we see that \beq \Tr\ \bE\left[\left|\left( \cdot, \nu_{R_\delta[a,T]}^{\kappa,\rho} \right)_\sharp Y_2^\kappa \right|^2\right] \leq \frac{1}{2\pi}\sum_{\gamma=1}^N \sum_{1\leq i<j \leq 3}\int_{w \in \bC^4}d\lambda_4(w)\ \Tr\ |C_{ij}^\gamma(w)|. \nonumber \eeq

Similarly,
\beq \Tr\ \bE\left[\left|\left( \cdot, \nu_{R_\delta[a,T]}^{\kappa,\rho} \right)_\sharp Y_1^\kappa \right|^2 \right] \leq \frac{1}{2\pi}\sum_{\gamma=1}^N \sum_{1\leq i<j \leq 3}\int_{w \in \bC^4}d\lambda_4(w)\ \Tr\ |C_{ij}^\gamma(w)|. \nonumber \eeq

By Jensen's Inequality,
\begin{align*}
\Tr\ &\bE\left[\left|\left( \cdot, \nu_{R_\delta[a,T]}^{\kappa,\rho} \right)_\sharp Y_1^\kappa + \left( \cdot, \nu_{R_\delta[a,T]}^{\kappa,\rho} \right)_\sharp Y_2^\kappa\right|^2 \right] \\
=& 4\bE \left[-\Tr\ \left( \cdot, \nu_{R_\delta[a,T]}^{\kappa,\rho} \right)_\sharp^2 \left| \frac{1}{2}(Y_1^\kappa + Y_2^\kappa) \right|^2\right]\\
\leq& 2\bE\left[\Tr\  \left|\left( \cdot, \nu_{R_\delta[a,T]}^{\kappa,\rho} \right)_\sharp^2\right||Y_1^\kappa|^2 + \Tr\ \left|\left( \cdot, \nu_{R_\delta[a,T]}^{\kappa,\rho} \right)_\sharp^2\right||Y_2^\kappa|^2 \right] \\
\leq& \frac{4}{2\pi}\sum_{\gamma=1}^N \sum_{1\leq i<j \leq 3}\int_{w \in \bC^4}d\lambda_4(w)\ \Tr\ |C_{ij}^\gamma(w)|.
\end{align*}

Therefore, we see that
\begin{align*}
-&\left(1 + \frac{1}{2\pi}\right)\Tr\ \bE\left[\left( \cdot, \nu_{R_\delta[a,T]}^{\kappa,\rho} \right)_\sharp^2 Y_3^\kappa\right] \\
&\geq\frac{1}{4}\Tr\ \bE\left[\left( \cdot, \nu_{R_\delta[a,T]}^{\kappa,\rho} \right)_\sharp^2(Y_1^\kappa + Y_2^\kappa)^2\right] - \Tr\ \bE\left[\left( \cdot, \nu_{R_\delta[a,T]}^{\kappa,\rho} \right)_\sharp^2 Y_3^\kappa\right] \\
&\geq - \left(1-\frac{1}{2\pi}\right)\Tr\ \bE\left[\left( \cdot, \nu_{R_\delta[a,T]}^{\kappa,\rho} \right)_\sharp^2 Y_3^\kappa\right] > 0.
\end{align*}
From Equations (\ref{e.b.3}) and (\ref{e.c.6}), there exist $\dot{c}, \hat{c} > 0$, independent of $\rho$, such that \beq \frac{\dot{c}C(\rho)}{\kappa^4} \leq \frac{1}{4}\Tr\ \bE\left[\left( \cdot, \nu_{R_\delta[a,T]}^{\kappa,\rho} \right)_\sharp^2(Y_1^\kappa + Y_2^\kappa)^2\right] - \Tr\ \bE\left[\left( \cdot, \nu_{R_\delta[a,T]}^{\kappa,\rho} \right)_\sharp^2 Y_3^\kappa\right] \leq \frac{\hat{c}C(\rho)}{\kappa^4} . \label{e.bds.3} \eeq

Define
\begin{align*}
F^\kappa :=& \mathcal{Y}^\kappa -1 + \frac{1}{2}\sum_{i=1}^3 Y_i^\kappa -\frac{1}{8}\left(\sum_{i=1}^3 Y_i^\kappa \right)^2  \\
= & \sum_{k=3}^\infty \left(-\frac{1}{2}\right)^k \frac{1}{k!}\left(\sum_{i=1}^3 Y_i^\kappa \right)^k.
\end{align*}

Corollary \ref{c.p.1} says that $\mathcal{Y}^\kappa$ is actually $L^p$ integrable, for any $1 < p < \pi$. For any $1 \leq \alpha \leq N$, we have \beq \bE\left[ \left(\cdot, \nu_{R_\delta[a,T]}^{\kappa,\alpha}\right)_\sharp^4 \right] = 3 \left[\frac{1}{\kappa^2}\left\langle \nu_{R_\delta[a,T]}^{\kappa,\alpha}, \nu_{R_\delta[a,T]}^{\kappa,\alpha} \right\rangle \right]^2 \longrightarrow 3\frac{|a|T(1+2\delta)^2}{4}, \label{e.a.1} \eeq as $\kappa \rightarrow \infty$. Refer to Equation (\ref{e.a.8}). Hence $\left|\left( \cdot, \nu_{R_\delta[a,T]}^{\kappa,\alpha} \right)_\sharp^2\right| F^\kappa$ is also $L^p$ integrable, for some $p > 1$.

From Equation (\ref{e.y.3}), we have \beq \sum_{\alpha=1}^N\bE\left[ \left|\left( \cdot, \nu_{R_\delta[a,T]}^{\kappa,\alpha} \right)_\sharp^2\right|^p |F^\kappa|^p\right] \leq C_p, \label{e.y.4} \eeq for some constant $C_p$ independent of $\kappa$.

Write $\mathbb{I}_M := 1_{\{\sum_{i=1}^3|Y_i^\kappa| \leq  M\}}$. For such value of $p$, let $q> 0$ such that $1/q + 1/p = 1$. For any positive number $M$, we have from Equation (\ref{e.ck.2}),
\begin{align}
\bE&\left|\Tr\ \left( \cdot, \nu_{R_\delta[a,T]}^{\kappa,\rho} \right)_\sharp^2 F^\kappa \right| \equiv \bE\left[ -\Tr\ \left( \cdot, \nu_{R_\delta[a,T]}^{\kappa,\rho} \right)_\sharp^2 \left|F^\kappa \right|\right]  \nonumber\\
=& \bE\left|1_{\{\sum_{i=1}^3|Y_i^\kappa| > M\}}\Tr\ \left( \cdot, \nu_{R_\delta[a,T]}^{\kappa,\rho} \right)_\sharp^2F^\kappa\right| + \bE\left|1_{\{\sum_{i=1}^3|Y_i^\kappa| \leq  M\}}\Tr\ \left( \cdot, \nu_{R_\delta[a,T]}^{\kappa,\rho} \right)_\sharp^2F^\kappa\right| \nonumber\\
=& C(\rho)\sum_{\alpha=1}^N\bE\left|1_{\{\sum_{i=1}^3|Y_i^\kappa| > M\}}\left( \cdot, \nu_{R_\delta[a,T]}^{\kappa,\alpha} \right)_\sharp^2F^\kappa\right| + \bE\left|1_{\{\sum_{i=1}^3|Y_i^\kappa| \leq  M\}}\Tr\ \left( \cdot, \nu_{R_\delta[a,T]}^{\kappa,\rho} \right)_\sharp^2F^\kappa\right| \nonumber\\
\leq& C(\rho)\left[\bE\left|1_{\{\sum_{i=1}^3|Y_i^\kappa| > M\}}\right|\right]^{1/q} \sum_{\alpha=1}^N\left(\bE\left[\left|\left( \cdot, \nu_{R_\delta[a,T]}^{\kappa,\alpha}\right)_\sharp^2 \right|^p |F^{\kappa}|^p\right] \right)^{1/p} \nonumber \\
&+  \bE\left|\mathbb{I}_M\Tr\ \left( \cdot, \nu_{R_\delta[a,T]}^{\kappa,\rho} \right)_\sharp^2F^\kappa\right|, \label{e.c.2}
\end{align}
after applying Holder's Inequality.

Write \beq \widehat{\nu}^\kappa(p) := \sum_{\alpha=1}^N\left(\bE\left[\left|\left( \cdot, \nu_{R_\delta[a,T]}^{\kappa,\alpha}\right)_\sharp^2 \right|^p |F^{\kappa}|^p\right] \right)^{1/p} \equiv N\left(\bE\left[\left|\left( \cdot, \nu_{R_\delta[a,T]}^{\kappa,1}\right)_\sharp^2\right|^p |F^{\kappa}|^p\right] \right)^{1/p}. \nonumber \eeq Apply Chebyshev's and Jensen's Inequalities, the RHS of Equation (\ref{e.c.2}) is less than or equal to
\begin{align*}
\frac{C(\rho)}{M^{2n/q}}&\left[\bE\left|\left(\sum_{i=1}^3|Y_i^\kappa|\right)^{2n}\right|\right]^{1/q}
\widehat{\nu}^\kappa(p)
+ \bE\left|\mathbb{I}_M\Tr\ \left( \cdot, \nu_{R_\delta[a,T]}^{\kappa,\rho} \right)_\sharp^2F^\kappa\right| \\
&\leq \frac{C(\rho)3^{2n-1}}{M^{2n/q}}\left[\bE\sum_{i=1}^3|Y_i^\kappa|^{2n} \right]^{1/q}
\widehat{\nu}^\kappa(p) + \bE\left|\mathbb{I}_M\Tr\ \left( \cdot, \nu_{R_\delta[a,T]}^{\kappa,\rho} \right)_\sharp^2F^\kappa\right| \\
&\leq C(\rho)\frac{\tilde{C}(n)}{M^{2n/q}\kappa^{4n/q}}\widehat{\nu}^\kappa(p) +  \bE\left|\mathbb{I}_M \Tr\ \left( \cdot, \nu_{R_\delta[a,T]}^{\kappa,\rho} \right)_\sharp^2F^\kappa\right|,
\end{align*}
for some constant $\tilde{C}(n)$, dependent only on $n \in \mathbb{N}$.

Choose $n$ such that $n/q \geq 2$. From Equation (\ref{e.y.4}), we see that \beq \bE\left|\Tr\ \left( \cdot, \nu_{R_\delta[a,T]}^{\kappa,\rho} \right)_\sharp^2 F^\kappa \right| = O(C(\rho)/M^{2n/q}\kappa^8)  + \bE\left|1_{\{\sum_{i=1}^3|Y_i^\kappa| \leq  M\}}\Tr\ \left( \cdot, \nu_{R_\delta[a,T]}^{\kappa,\rho} \right)_\sharp^2F^\kappa\right|. \label{e.bds.2} \eeq

And,
\begin{align*}
\bE&\left|1_{\{\sum_{i=1}^3|Y_i^\kappa| \leq  M\}}\Tr\ \left( \cdot, \nu_{R_\delta[a,T]}^{\kappa,\rho} \right)_\sharp^2F^\kappa\right| \\
&\leq \sum_{k=3}^\infty \frac{1}{2^kk!}\bE\left[ 1_{\{\sum_{i=1}^3|Y_i^\kappa| \leq  M\}}\left|\Tr\ \left( \cdot, \nu_{R_\delta[a,T]}^{\kappa,\rho} \right)_\sharp^2\right|\left( \sum_{i=1}^3|Y_i^\kappa| \right)^k \right] \\
&\leq \sum_{k=3}^\infty \frac{M^{k-3}}{2^k k!}\bE\left[\left|\Tr\  \left( \cdot, \nu_{R_\delta[a,T]}^{\kappa,\rho} \right)_\sharp^2\left( \sum_{i=1}^3|Y_i^\kappa| \right)^3 \right|\right] \\
&= \left(\sum_{k=0}^\infty \frac{M^{k}}{2^{k+3}(k+3)!}\right) \times \bE\left|\Tr\  \left( \cdot, \nu_{R_\delta[a,T]}^{\kappa,\rho} \right)_\sharp^2\left( \sum_{i=1}^3|Y_i^\kappa| \right)^3 \right|,
\end{align*}
so its trace is $O(e^{M/2} C(\rho)/\kappa^6)$, from Equations (\ref{e.ck.2}) and (\ref{e.bds.1}).

Using this bound in Equation (\ref{e.bds.2}), we see that
\beq \left|\bE\left[\Tr\ \left( \cdot, \nu_{R_\delta[a,T]}^{\kappa,\rho} \right)_\sharp^2F^\kappa\right]\right| = O(C(\rho)/M^{2n/q}\kappa^8)  +  O(e^{M/2} C(\rho)/\kappa^6). \nonumber \eeq

Together with the bounds in Equation (\ref{e.bds.1}), we see that for a given $M \geq 1$,
\begin{align}
\bE&\left[\left( \cdot, \nu_{R_\delta[a,T]}^{\kappa,\rho} \right)_\sharp^2 (\mathcal{Y}^\kappa -1 )\right] \nonumber\\
=& -\frac{1}{2}\bE\left[\left( \cdot, \nu_{R_\delta[a,T]}^{\kappa,\rho} \right)_\sharp^2 Y_3^\kappa\right] + \frac{1}{8}\bE\left[ \left( \cdot, \nu_{R_\delta[a,T]}^{\kappa,\rho} \right)_\sharp^2\left(\sum_{i=1}^2 Y_i^\kappa \right)^2\right] \nonumber\\
&+ \frac{1}{8}\bE\left[ \left( \cdot, \nu_{R_\delta[a,T]}^{\kappa,\rho} \right)_\sharp^2Y_3^{\kappa,2} + 2\left( \cdot, \nu_{R_\delta[a,T]}^{\kappa,\rho} \right)_\sharp^2 Y_3^\kappa\sum_{i=1}^2 Y_i^\kappa\right] + \bE\left[\left( \cdot, \nu_{R_\delta[a,T]}^{\kappa,\rho} \right)_\sharp^2 F^\kappa \right]\nonumber \\
=& -\frac{1}{2}\bE\left[\left( \cdot, \nu_{R_\delta[a,T]}^{\kappa,\rho} \right)_\sharp^2Y_3^\kappa\right] + \frac{1}{8}\bE\left[ \left( \cdot, \nu_{R_\delta[a,T]}^{\kappa,\rho} \right)_\sharp^2\left(\sum_{i=1}^2 Y_i^\kappa \right)^2\right] + \epsilon(\rho, \kappa), \label{e.a.9}
\end{align}
whereby $ |\Tr\ \epsilon(\rho, \kappa)| = O(e^{M/2} C(\rho)/\kappa^6) + O(C(\rho)/\kappa^6) + O(C(\rho)/M^{4}\kappa^8)$.

Thus, we see from Equation (\ref{e.bds.3}), that there exists a large $\kappa_0 > 1$ independent of $\rho$, such that for all $\kappa > \kappa_0$, we have \beq c_1\frac{C(\rho)}{\kappa^4} < \Tr\ \bE \left[\left( \cdot, \nu_{R_\delta[a,T]}^{\kappa, \rho} \right)_\sharp^2(\mathcal{Y}^\kappa -1) \right] < c_2\frac{C(\rho)}{\kappa^4} \nonumber \eeq for some constants $c_1, c_2 > 0$ independent of $\rho$. This completes the proof.
\end{proof}

\begin{rem}\label{r.bds.1}
Note that Equation (\ref{e.bds.3}) still holds true, if we replace the factor $1/\sqrt{2\pi}$ in $\psi_w$ with any $0<\tilde{c} < 1$. Also see Remark \ref{r.p.1}.
\end{rem}

\begin{cor}
There exists a $\delta > 0$ such that for some $\kappa_0$, which depends on $\delta$ but independent of $\rho$, we have for all $\kappa > \kappa_0$, \beq -\bE \left[
\left( \cdot, \nu_{R[a,T]}^{\kappa,\rho} \right)_\sharp \left( \cdot, \nu_{R_\delta[a,T]}^{\kappa,\rho} \right)_\sharp
\mathcal{Y}^\kappa\right]  = \frac{|a|T}{4} \otimes \mathscr{E}(\rho) - \epsilon(\rho, \kappa), \nonumber \eeq whereby \beq 0< c_3\frac{C(\rho)}{\kappa^4} \leq \Tr\  \epsilon(\rho,\kappa)  \leq c_4\frac{C(\rho)}{\kappa^4}, \nonumber \eeq for some positive constants $c_3, c_4$, both are independent of $\kappa$ and $\rho$.
\end{cor}

\begin{proof}
For $\Lambda = \sum_{\alpha=1}^N\rho(E^\alpha)\rho(E^\alpha)$, we see from Equation (\ref{e.a.5}) that
\begin{align}
\bE&\left[ \left(\cdot, \nu_{R[a,T]}^{\kappa,\rho} \right)_\sharp\left(\cdot, \nu_{R_\delta[a,T]}^{\kappa,\rho} \right)_\sharp\right] \nonumber\\
&=  \int_{I^2}d\hat{t}\left(\frac{\kappa}{4}\right)^2\int_{I_\delta^2}d\hat{s}\ \sum_{j=1}^3 |J_{0j}^{\sigma}|(\hat{s})|J_{0j}^{\sigma}|(\hat{t}) \left\langle \xi_{0j}(\kappa \sigma(\hat{s})/2), \xi_{0j}(\kappa \sigma(\hat{t})/2) \right\rangle
\otimes \Lambda\nonumber \\
&=
\frac{|a|T}{4}\sum_{\alpha=1}^N \rho(E^\alpha)\rho(E^\alpha) + \tilde{\epsilon}(\rho,\kappa). \label{e.a.6}
\end{align}
Here, $\tilde{\epsilon}(\rho,\kappa)$ is the matrix reminder term, of which its trace \\
$|\Tr\ \tilde{\epsilon}(\rho,\kappa)| = O\left(C(\rho)e^{-\kappa^2 \bar{C}}\right)$, for some $\bar{C} > 0$, depending on $\delta$.

Write
\begin{align*}
\lambda_\delta^{\kappa,\alpha} = \nu_{R_\delta[a,T]}^{\kappa,\alpha} - \nu_{R[a,T]}^{\kappa,\alpha}, \quad \lambda_\delta^\kappa = \nu_{R_\delta[a,T]}^{\kappa,\rho} - \nu_{R[a,T]}^{\kappa,\rho} \equiv \sum_{\alpha=1}^N \lambda_\delta^{\kappa,\alpha} \otimes \rho(E^\alpha).
\end{align*}

Now, $R_\delta[a,T] \setminus R[a,T]$ is a flat compact surface of area $4|a|T(\delta + \delta^2)$. By Cauchy Schwartz inequality, we have from Equations (\ref{e.ck.2}) and (\ref{e.a.8}),
\begin{align*}
& \left|\Tr\ \bE \left[
\left( \cdot, \nu_{R[a,T]}^{\kappa,\rho} \right)_\sharp \left( \cdot, \lambda_\delta^\kappa \right)_\sharp
Y_3^\kappa\right] \right| \leq C(\rho)\sum_{\alpha=1}^N \left| \bE \left[
\left( \cdot, \nu_{R[a,T]}^{\kappa,\alpha} \right)_\sharp \left( \cdot, \lambda_\delta^{\kappa,\alpha} \right)_\sharp
Y_3^\kappa\right] \right| \\
&\leq C(\rho)\sum_{\alpha=1}^N \left| \bE \left[
\left( \cdot, \nu_{R[a,T]}^{\kappa,\alpha} \right)_\sharp^2 \left( \cdot, \lambda_\delta^{\kappa,\alpha} \right)_\sharp^2 \right]\right|^{1/2} \left| \bE \left[|Y_3^\kappa|^2 \right] \right|^{1/2} \\
&= C(\rho)\sum_{\alpha=1}^N \left|\left\langle \nu_{R[a,T]}^{\kappa,\alpha}, \nu_{R[a,T]}^{\kappa,\alpha} \right\rangle \left\langle \lambda_\delta^{\kappa,\alpha}, \lambda_\delta^{\kappa,\alpha}\right\rangle + 2\left\langle \nu_{R[a,T]}^{\kappa,\alpha}, \lambda_\delta^{\kappa,\alpha}\right\rangle^2 \right|^{1/2} \left| \bE \left[|Y_3^\kappa|^2 \right] \right|^{1/2} \\
&\leq C(\rho)\tilde{C}_1N \left|\frac{3|a|T}{4} |a|T(\delta + \delta^2)  \right|^{1/2} \left| \bE \left[|Y_3^\kappa|^2 \right] \right|^{1/2}
\leq C(\rho)\frac{\tilde{C}_2}{\kappa^4}|a|T\sqrt{\delta+ \delta^2},
\end{align*}
for some constants $\tilde{C}_1, \tilde{C}_2 > 0$. The last inequality follows from Equation (\ref{e.bds.1}). Similarly, we have
\begin{align*}
&\left| \Tr\ \bE \left[
\left( \cdot, \nu_{R[a,T]}^{\kappa,\rho} \right)_\sharp \left( \cdot, \lambda_\delta^\kappa \right)_\sharp
(Y_1^\kappa+ Y_2^\kappa)^2\right] \right|\leq C(\rho)\frac{\tilde{C}_3}{\kappa^4}|a|T\sqrt{\delta+ \delta^2},
\end{align*}
for some positive constant $\tilde{C}_3$.

Equation (\ref{e.a.9}) also applies, by replacing $\left( \cdot, \nu_{R_\delta[a,T]}^{\kappa,\rho} \right)_\sharp^2$ with $\left( \cdot, \nu_{R[a,T]}^{\kappa,\rho} \right)_\sharp \left( \cdot, \lambda_\delta^\kappa \right)_\sharp$, and we have for large enough $\kappa \geq \kappa_0$, \beq \left|\Tr\ \bE \left[\left( \cdot, \nu_{R[a,T]}^{\kappa,\rho} \right)_\sharp \left( \cdot, \lambda_\delta^\kappa \right)_\sharp(\mathcal{Y}^\kappa-1)\right]\right| \leq C(\rho)\frac{\tilde{C}_4}{\kappa^4}|a|T\sqrt{\delta+ \delta^2}, \nonumber \eeq for some constant $\tilde{C}_4 > 0$.

Hence by choosing $\delta$ small enough, we see from the previous lemma that the trace of
\begin{align*}
\bE &\left[
\left( \cdot, \nu_{R[a,T]}^{\kappa,\rho} \right)_\sharp \left( \cdot, \nu_{R_\delta[a,T]}^{\kappa,\rho} \right)_\sharp
(\mathcal{Y}^\kappa-1)\right] \\
&= \bE \left[
\left( \cdot, \nu_{R[a,T]}^{\kappa,\rho} \right)_\sharp^2
(\mathcal{Y}^\kappa-1)\right] + \bE \left[
\left( \cdot, \nu_{R[a,T]}^{\kappa,\rho} \right)_\sharp \left( \cdot, \lambda_\delta^\kappa \right)_\sharp
(\mathcal{Y}^\kappa-1)\right],
\end{align*}
is bounded above and below for all $\kappa \geq \kappa_0$, $\kappa_0$ dependent on $\delta$, i.e. \beq
\tilde{c}_3\frac{C(\rho)}{\kappa^4} \leq \Tr\ \bE \left[
\left( \cdot, \nu_{R[a,T]}^{\kappa,\rho} \right)_\sharp \left( \cdot, \nu_{R_\delta[a,T]}^{\kappa,\rho} \right)_\sharp
(\mathcal{Y}^\kappa-1)\right] \leq \tilde{c}_4\frac{C(\rho)}{\kappa^4}, \nonumber \eeq for positive constants $\tilde{c}_3, \tilde{c}_4$ independent of $\kappa$ and $\rho$. Together with Equation (\ref{e.a.6}), we will have our result.
\end{proof}

\begin{thm}\label{t.a.1}
Let $c = 1/\kappa$. Then \beq -\Tr\ \bE \left[
\left( \cdot, \nu_{R[a,T]}^{1/c,\rho} \right)_\sharp \left( \cdot, \nu_{R_\delta[a,T]}^{1/c,\rho} \right)_\sharp
\mathcal{Y}^{1/c}\right] = \frac{|a|T}{4} \otimes \Tr\ \mathscr{E}(\rho) - \Lambda NC(\rho) c^4 + C(\rho) f(c), \label{e.s.3} \eeq is continuously differentiable in $c>0$. Furthermore, $\Lambda > 0$ and the error term $f(c) =  O(c^6)$. Hence, there exists a constant $\hat{c} > 0$ such that $|f(c)| \leq \hat{c}c^6$ and $|f'(c)| \leq \hat{c}c^5$.
\end{thm}

\begin{proof}
For each $i=1,2$, $(\kappa^2 Y_i^\kappa, \tilde{\mu}_{\kappa^2}^{\times^N})$, $(\kappa^4 Y_3^\kappa, \tilde{\mu}_{\kappa^2}^{\times^N})$ and $\left(\left( \cdot, \nu_{R_\delta[a,T]}^{1/c,\rho} \right)_\sharp, \tilde{\mu}_{\kappa^2}^{\times^N}\right)$ are equal in distributions to $( Y_i^1, \tilde{\mu}_{1}^{\times^N})$, $( Y_3^1, \tilde{\mu}_{1}^{\times^N})$ and $\left(\left( \cdot, c\nu_{R_\delta[a,T]}^{1/c,\rho} \right)_\sharp, \tilde{\mu}_{1}^{\times^N}\right)$ respectively. Write \beq \mathcal{Z}^c = \exp\left[-\frac{1}{2}\left(c^2Y_1^1 + c^2Y_2^1 + c^4Y_3^1 \right)\right], \nonumber \eeq which is integrable for some $p > 1$ with respect to Wiener measure $\tilde{\mu}_1^{\times^N}$.
From Equations (\ref{e.a.9}) and (\ref{e.a.6}), we see that
\begin{align}
-\Tr\ \bE \left[
\left( \cdot, \nu_{R[a,T]}^{1/c,\rho} \right)_\sharp \left( \cdot, \nu_{R_\delta[a,T]}^{1/c,\rho} \right)_\sharp
\mathcal{Y}^{1/c}\right]
=& -\Tr\ \bE \left[
\left( \cdot, c\nu_{R[a,T]}^{1/c,\rho} \right)_\sharp \left( \cdot, c\nu_{R_\delta[a,T]}^{1/c,\rho} \right)_\sharp
\mathcal{Z}^c\right]  \label{e.c.4}\\
=& \frac{|a|T}{4} \otimes \Tr\ \mathscr{E}(\rho) - \Lambda NC(\rho) c^4 + C(\rho)f(c),\nonumber
\end{align}
for some positive number $\Lambda > 0$ and $f(c) =  O(c^6)$. In the LHS and RHS of Equation (\ref{e.c.4}), the expectation is on the probability space $(\dB \otimes \mathfrak{g}, \tilde{\mu}_{1/c^2}^{\times^{N}})$ and $(\dB \otimes \mathfrak{g}, \tilde{\mu}_{1}^{\times^{N}})$ respectively.
Note that
\begin{align*}
\Lambda =& \frac{1}{2}\lim_{\kappa \rightarrow \infty}\bE\left[\left( \cdot, \frac{1}{\kappa}\nu_{R[a,T]}^{\kappa,1} \right)_\sharp \left( \cdot, \frac{1}{\kappa}\nu_{R_\delta[a,T]}^{\kappa,1} \right)_\sharp \left\{ Y_3^1 - \frac{1}{4}\left(\sum_{i=1}^2 Y_i^1 \right)^2 \right\}\right],
\end{align*}
the expectation taken on the probability space $(\dB \otimes \mathfrak{g}, \tilde{\mu}_{1}^{\times^{N}})$. Indeed, by Corollary \ref{c.c.1}, we see that the correlation between $\left( \cdot, \frac{1}{\kappa}\nu_{R_\delta[a,T]}^{\kappa,\alpha} \right)_\sharp$ and $Y_i$ goes to zero as $\kappa \rightarrow \infty$, and thus \beq
\Lambda = \frac{|a|T}{8}\bE\left[ Y_3^1 - \frac{1}{4}\left(\sum_{i=1}^2 Y_i^1 \right)^2 \right]. \nonumber \eeq

The remainder term $f(c)$ is of the order $c^6$ and for large $\kappa > \kappa_0$, we have the norm of the remainder term is bounded above by $\hat{c} C(\rho)c^6$ for some $\hat{c}>0$. It remains to show that the path integral in Equation (\ref{e.c.4}) is differentiable in $c > 0$.

From Corollary \ref{c.c.2}, define $\left(\cdot, \beta_{R_\delta[a,T]}^{1/c,\rho}\right)_\sharp := \sum_{\alpha=1}^N\left(\cdot, \beta_{R_\delta[a,T]}^{1/c,\alpha}\right)_\sharp \otimes \rho(E^\alpha)$, and we see that
\begin{align*}
\Tr\ &\bE \frac{d}{dc}\left[
\left( \cdot, c\nu_{R[a,T]}^{1/c,\rho} \right)_\sharp \left( \cdot, c\nu_{R_\delta[a,T]}^{1/c,\rho} \right)_\sharp
\mathcal{Z}^c\right] \\
=& \Tr\ \bE \left[
\left( \cdot, \beta_{R[a,T]}^{1/c,\rho} \right)_\sharp \left( \cdot, c\nu_{R_\delta[a,T]}^{1/c,\rho} \right)_\sharp
\mathcal{Z}^c\right] +\Tr\ \bE \left[
\left( \cdot, c\nu_{R[a,T]}^{1/c,\rho} \right)_\sharp \left( \cdot, \beta_{R_\delta[a,T]}^{1/c,\rho} \right)_\sharp
\mathcal{Z}^c\right] \\
&- \Tr\ \bE \left[
\left( \cdot, c\nu_{R[a,T]}^{1/c,\rho} \right)_\sharp \left( \cdot, c\nu_{R_\delta[a,T]}^{1/c,\rho} \right)_\sharp
\mathcal{Z}^c\frac{1}{2}\left(2cY_1^1 + 2cY_2^1 + 4c^3Y_3^1 \right)\right].
\end{align*}
Corollary \ref{c.p.1} says that $\mathcal{Z}^c$ is $L^p$ integrable for some $p > 1$, hence we see that both
\beq \Tr\ \bE \left[
\left|\left( \cdot, \beta_{R[a,T]}^{1/c,\rho} \right)_\sharp \left( \cdot, c\nu_{R_\delta[a,T]}^{1/c,\rho} \right)_\sharp
\mathcal{Z}^c\right|\right] + \Tr\ \bE \left[
\left|\left( \cdot, c\nu_{R[a,T]}^{1/c,\rho} \right)_\sharp \left( \cdot, \beta_{R_\delta[a,T]}^{1/c,\rho} \right)_\sharp
\mathcal{Z}^c\right|\right] < \infty  \nonumber \eeq and
\beq \Tr\ \bE \left[\left|
\left( \cdot, c\nu_{R[a,T]}^{1/c,\rho} \right)_\sharp \left( \cdot, c\nu_{R_\delta[a,T]}^{1/c,\rho} \right)_\sharp \right|
\mathcal{Z}^c\left|Y_1^1 + Y_2^1 + 2c^2Y_3^1 \right|\right] < \infty, \nonumber \eeq using Equation (\ref{e.bds.1}).

The proof of Lemma \ref{l.y.1} shows that \beq \mathcal{Z}^c(B) \leq \exp\left[\sum_{1 \leq i < j \leq 1} \int_{w \in \bC^4}d\lambda_4(w)\left|\left(B, \xi_{ij}(w) \right)_\sharp \right|^2\right]=: \mathcal{T}(B), \nonumber \eeq and $\mathcal{T}$ is $L^p$ integrable for some $p > 1$ with respect to Wiener measure $\tilde{\mu}_1^{\times^N}$.

By Corollary \ref{c.s.1}, we can find a constant $\tilde{C}(c_0)>0$ such that for any $c > c_0 >0$ and $B = \sum_{\alpha=1}^N B_\alpha \otimes E^\alpha \in \dB \otimes \mathfrak{g}$,
\begin{align*}
\tilde{C}(c_0)C(\rho) \sum_{\alpha=1}^N |B_\alpha|^2\geq& \left|\Tr\ \left( B, \beta_{R[a,T]}^{1/c,\rho} \right)_\sharp \left( B, c\nu_{R_\delta[a,T]}^{1/c,\rho} \right)_\sharp \right|  \\
&+ \left|\Tr\ \left( B, c\nu_{R[a,T]}^{1/c,\rho} \right)_\sharp \left( B, \beta_{R_\delta[a,T]}^{1/c,\rho} \right)_\sharp\right| \\
&+ \left|\Tr\ \left( B, c\nu_{R[a,T]}^{1/c,\rho} \right)_\sharp \left( B, c\nu_{R_\delta[a,T]}^{1/c,\rho} \right)_\sharp \right|,
\end{align*}
$|B_\alpha|$ defined in Equation (\ref{e.s.2}). Lemma \ref{l.s.1} says that $\left[\sum_{\alpha=1}^N |B_\alpha|^2 \right]^q$ is integrable with respect to Wiener measure $\tilde{\mu}_{1}^{\times^{N}}$ for any $q \geq 1$. Therefore, for $c_0 < c \leq 1$,
\begin{align*}
& \left|
\Tr\ \left( B, c\nu_{R[a,T]}^{1/c,\rho} \right)_\sharp \left( B, c\nu_{R_\delta[a,T]}^{1/c,\rho} \right)_\sharp
\mathcal{Z}^c\right|+ \left|\frac{d}{dc}\left[\Tr\
\left( B, c\nu_{R[a,T]}^{1/c,\rho} \right)_\sharp \left( B, c\nu_{R_\delta[a,T]}^{1/c,\rho} \right)_\sharp
\mathcal{Z}^c\right] \right| \\
&\leq \mathcal{T}(B)\tilde{C}(c_0)C(\rho)\left[1 + \left|Y_1^1(B) + Y_2^1(B) + 2Y_3^1(B)\right| \right] \sum_{\alpha=1}^N |B_\alpha|^2,
\end{align*}
the RHS is integrable with respect to Wiener measure $\tilde{\mu}_{1}^{\times^{N}}$.

Hence, we can interchange $d/dc$ with $\bE$, expectation taken on a fixed probability space $(\dB \otimes \mathfrak{g}, \tilde{\mu}_{1}^{\times^{N}})$, and thus
\begin{align*}
\frac{d}{dc}&\Tr\ \bE \left[
\left( \cdot, c\nu_{R[a,T]}^{1/c,\rho} \right)_\sharp \left( \cdot, c\nu_{R_\delta[a,T]}^{1/c,\rho} \right)_\sharp
\mathcal{Z}^c\right] = \Tr\ \bE \frac{d}{dc}\left[
\left( \cdot, c\nu_{R[a,T]}^{1/c,\rho} \right)_\sharp \left( \cdot, c\nu_{R_\delta[a,T]}^{1/c,\rho} \right)_\sharp \mathcal{Z}^c\right]\\
=& \Tr\ \bE\left[
\left( \cdot, \beta_{R[a,T]}^{1/c,\rho} \right)_\sharp \left( \cdot, c\nu_{R_\delta[a,T]}^{1/c,\rho} \right)_\sharp
\mathcal{Z}^c\right] +\Tr\ \bE \left[
\left( \cdot, c\nu_{R[a,T]}^{1/c,\rho} \right)_\sharp \left( \cdot, \beta_{R_\delta[a,T]}^{1/c,\rho} \right)_\sharp
\mathcal{Z}^c\right] \\
&- c\Tr\ \bE \left[
\left( \cdot, c\nu_{R[a,T]}^{1/c,\rho} \right)_\sharp \left( \cdot, c\nu_{R_\delta[a,T]}^{1/c,\rho} \right)_\sharp
\mathcal{Z}^c\left(Y_1^1 + Y_2^1 + 2c^2Y_3^1 \right)\right].
\end{align*}
Since $c_0 > 0$ is arbitrary, this shows that the error term $f(c)$ is continuously differentiable for $c > 0$. From $f(c) = O(c^6)$, we see that $g(c) = \frac{f(c)}{c^5}$ is Lipschitz continuous in $c \geq 0$, hence $\limsup_{c \rightarrow 0^+}|g'(c)|$ must be bounded and thus we have a constant $\hat{c}$ such that $|f'(c)| \leq \hat{c} c^5$. This completes the proof.
\end{proof}

\begin{rem}\label{r.cz.1}
We do not need the differentiability property of the path integral in Equation (\ref{e.s.3}). We will however, need this fact in a sequel to this article, when we need to compute the beta function from the Callan-Symanzik Equation.
\end{rem}

\begin{defn}
Let $N_n:= \{1,\cdots, 2n\}$, and we will pair the numbers in $N_n$, represented as $f_n = \{(\underline{\tau}_1, \overline{\tau}_1), (\underline{\tau}_2, \overline{\tau}_2), \cdots, (\underline{\tau}_n, \overline{\tau}_n)\}$, whereby
\begin{itemize}
  \item for $1 \leq k \leq n$, $\underline{\tau}_k$ and $\overline{\tau}_k$ take values in $N_n$ and they are all distinct from each other;
  \item $\underline{\tau}_k < \overline{\tau}_k$ for every $k$; and
  \item $1 = \underline{\tau}_1 < \underline{\tau}_2 < \cdots < \underline{\tau}_n$.
\end{itemize}
We say $f_n$ is an ordered pairing of $\{1, \cdots, 2n\}$.  Let $S_n$ be the set containing all possible ordered pairings of $\{1, \cdots, 2n\}$ satisfying the above conditions.

Given a representation $\rho: \mathfrak{g} \rightarrow {\rm End}(\bC^{\tilde{N}})$, and a set of matrices $\{\rho(E^\alpha): 1 \leq \alpha \leq N\}$, let $\tilde{k}_n = (a_1, a_2, \cdots, a_n)$, whereby each $a_i$'s take natural numbers from 1 to $N$. Note that $a_i$'s need not all be distinct. Let \beq K_n := \left\{ \tilde{k}_n = (a_1, a_2, \cdots, a_n):\ 1 \leq a_i \leq N \right\}. \nonumber \eeq

Given a $\tilde{k}_n$ and $f_n$ as denoted above, we will write \beq G_\rho(\tilde{k}_n, f_n) := \rho(E^{\alpha_1})\rho(E^{\alpha_2})\cdots \rho(E^{\alpha_{2n}}), \nonumber \eeq whereby $\rho(E^{\alpha_s}) = \rho(E^\beta)$, $1 \leq s \leq 2n$, if
\begin{itemize}
  \item $s$ is equal to either $\underline{\tau}_r$ or $\overline{\tau}_r$ for a unique $1 \leq r \leq n$ in $f_n$;
  \item $\beta = a_r $ in $\tilde{k}_n$.
\end{itemize}
When $n = 0$, we will write $G_\rho(\tilde{k}_0, f_0) = I_{\tilde{N}}$, the $\tilde{N} \times \tilde{N}$ identity matrix, with $\tilde{N}$ as the dimension of the representation.
\end{defn}

\begin{thm}\label{t.p.2}
Refer to Definitions \ref{d.ym.2}, \ref{d.r.1} and \ref{d.w.4}. As $\kappa \rightarrow \infty$, we have
\begin{align*}
\mathbb{E}_{{\rm YM}}^\kappa&\left[ \mathcal{J}_{R[a,T]}^{\kappa,\rho} \right] \\
&\longrightarrow
\sum_{n=0}^\infty \frac{1}{(2n)!}\left[ \frac{1}{4}\sum_{0\leq i\leq 3}\int_{I^2} ds dt\ \rho_{\sigma}^{0i}(s,t)J_{0i}^\sigma(s,t)\right]^n \sum_{\tilde{k}_n \in K_n}\sum_{f_n \in S_n}G_\rho(\tilde{k}_n, f_n) .
\end{align*}
Here, \beq \sum_{0\leq i\leq 3}\int_{I^2} ds dt\ \rho_{\sigma}^{ab}(s,t)J_{0i}^\sigma(s,t) = |a|T, \nonumber \eeq which is the area of the rectangular surface $R[a,T]$.
\end{thm}

\begin{proof}

Recall $\mathfrak{g}$ is the semi-simple Lie Algebra of a compact Lie group $G$ which is a subgroup of $U(\bar{N})$. We also assume that for an irreducible representation $\rho: \mathfrak{g} \rightarrow {\rm End}(\bC^{\tilde{N}})$, $\rho(E^\alpha)$ is skew Hermitian. Hence, for any $B \in \dB \otimes \mathfrak{g}$, $\mathcal{J}_{R[a,T]}^{\kappa,\rho}(B) = \exp\left[\left( B, \nu_{R[a,T]}^{\kappa,\rho} \right)_\sharp \right]$ is unitary, thus its spectral norm $\left|\mathcal{J}_{R[a,T]}^{\kappa,\rho}(B)\right|$ is bounded by some constant $C$. We also have its spectral norm $|\rho(E^\alpha)\rho(E^\alpha)| \leq C(\rho)$, or $|\rho(E^\alpha)| \leq \sqrt{C(\rho)}$.

The proof in Lemma \ref{l.bds.1} can be adapted to show that \beq \Tr\ \mathbb{E}\left[\left|\mathcal{J}_{R[a,T]}^{\kappa,\rho} (1 - \mathcal{Y}^\kappa)\right| \right] = O(1/\kappa^2), \nonumber \eeq by replacing $\left( \cdot, \nu_{R[a,T]}^{\kappa,\rho} \right)_\sharp^2$ with $\left|\mathcal{J}_{R[a,T]}^{\kappa,\rho}\right|$.

Hence we have
\begin{align*}
\lim_{\kappa \rightarrow \infty}\int_{\dB \otimes \mathfrak{g}}\mathcal{J}_{R[a,T]}^{\kappa,\rho} \mathcal{Y}^\kappa\ d\tilde{\mu}_{\kappa^2}^{\times^{N}} &= \lim_{\kappa \rightarrow \infty}\mathbb{E}\left[\mathcal{J}_{R[a,T]}^{\kappa,\rho} \right].
\end{align*}

For a normal random variable $N$ with variance $\alpha^2$, an induction argument will show that $\bE[ N^{2n}] = \alpha^{2n}(2n-1)(2n-3) \cdots 3 \times 1$. And
\begin{align*}
\bE[|N|^{2n+1}] &= \bE[|N|^{2n+1}1_{\{|N| < 1\}}] + \bE[|N|^{2n+1}1_{\{|N| \geq 1\}}] \\
&\leq 1 + \bE[|N|^{2(n+1)}] \\
&= 1 + \alpha^{2(n+1)}(2n+1)(2n-1)\cdots 3.
\end{align*}
Hence, we see that
\begin{align}
\bE&\left[\sum_{n=0}^\infty \frac{1}{n!} |N|^n \left(\sqrt{C(\rho)} \right)^n \right] =
\sum_{n=0}^\infty \left(\sqrt{C(\rho)} \right)^n\frac{1}{n!} \bE[|N|^n] \nonumber \\
\leq& 1+ \sum_{n=1}^\infty \left(\sqrt{C(\rho)} \right)^{2n}\alpha^{2n} \frac{1}{(2n)!}[(2n-1)(2n-3) \cdots 3 \times 1] \nonumber\\
&+ \sum_{n=0}^\infty \left(\sqrt{C(\rho)} \right)^{2n+1}\frac{1}{(2n+1)!}\left[1 + \alpha^{2(n+1)}(2n+1)(2n-1)\cdots 3 \right] \nonumber \\
\leq& 1+\sum_{n=1}^\infty C(\rho)^{n}\alpha^{2n} \frac{1}{n!}\frac{1}{2^n} + \sum_{n=0}^\infty \left(\sqrt{C(\rho)} \right)^{2n+1}[1+\alpha^{2(n+1)}]\frac{1}{n!}\frac{1}{2^n} < \infty. \label{e.b.6}
\end{align}

Refer to Definition \ref{d.p.2}. Now, $\left\{\left( \cdot ,\nu_{R[a,T]}^{\kappa, \alpha}\right)_\sharp: 1 \leq \alpha \leq N \right\}$ is a set of i.i.d. real Gaussian random variables. From Equation (\ref{e.a.8}), we can apply Dominated Convergence Theorem, and we have
\begin{align*}
\bE&\left[ e^{\left(\cdot, \nu_{R[a,T]}^{\kappa, \rho} \right)_\sharp} \right]= \bE\sum_{n=0}^\infty \frac{1}{n!}\left[\sum_{\alpha=1}^N\left(\cdot, \nu_{R[a,T]}^{\kappa, \alpha} \right)_\sharp\otimes \rho(E^\alpha)\right]^{n}\\
&= \sum_{n=0}^\infty \frac{1}{n!}\bE\left[\sum_{\alpha=1}^N\left(\cdot, \nu_{R[a,T]}^{\kappa, \alpha} \right)_\sharp\otimes \rho(E^\alpha)\right]^{n}\\
&= \sum_{n=0}^\infty \frac{1}{(2n)!} \sum_{\tilde{k}_n=(\alpha_1,\cdots, \alpha_n) \in K_n}\sum_{f_n \in S_n}\prod_{i=1}^n\frac{1}{\kappa^2}\left\langle \nu_{R[a,T]}^{\kappa, \alpha_i}, \nu_{R[a,T]}^{\kappa, \alpha_i}\right\rangle G_\rho(\tilde{k}_n, f_n) \\
&\longrightarrow \sum_{n=0}^\infty \frac{1}{(2n)!} \sum_{\tilde{k}_n \in K_n}\sum_{f_n \in S_n}\left[ \frac{1}{4}\sum_{1\leq i \leq 3}\int_{I^2} ds dt\ \rho_\sigma^{0i}(s,t)|J_{0i}^\sigma|(s,t)  \right]^n G_\rho(\tilde{k}_n, f_n),
\end{align*}
as $\kappa \rightarrow \infty$. Because of Equation (\ref{e.b.6}), we can apply Fubini's Theorem to interchange $\bE$ with the infinite sum $\sum_{n=0}^\infty$.

From the proof of Lemma \ref{l.bds.1}, we see that
\begin{align*}
\bE[\mathcal{Y}^\kappa] =  1+ O(1/\kappa^4),
\end{align*}
which converges to 1 as $\kappa$ goes to infinity.

Hence,
\begin{align*}
&\mathbb{E}_{{\rm YM}}^\kappa\left[ \mathcal{J}_{R[a,T]}^{\kappa,\rho} \right] = \frac{1}{\bE[\mathcal{Y}^\kappa]}\bE\left[\mathcal{J}_{R[a,T]}^{\kappa,\rho} \mathcal{Y}^\kappa \right] \\
&\longrightarrow \sum_{n=0}^\infty \frac{1}{(2n)!} \sum_{\tilde{k}_n \in K_n}\sum_{f_n \in S_n}\left[ \frac{1}{4}\sum_{1\leq i \leq 3}\int_{I^2} ds dt\ \rho_\sigma^{0i}(s,t)|J_{0i}^\sigma|(s,t)  \right]^n G_\rho(\tilde{k}_n, f_n),
\end{align*}
as $\kappa \rightarrow \infty$.
\end{proof}

\begin{rem}\label{r.af.1}
If one examines the above proof carefully, asymptotic freedom is required to show that $\lim_{\kappa \rightarrow \infty}\mathbb{E}\left(\exp\left[\left(\cdot, \nu_{R[a,T]}^{\kappa, \alpha}\right)_\sharp \right]\right)$ exists and is equal to the exponential of the area of the surface. Asymptotic freedom and renormalization allow one to compute the path integral in Definition \ref{d.ym.2} via perturbation series expansion, when $\kappa$ is large.

In Remark \ref{r.aym.1}, we explained that $1/\kappa$ denote the spread of a Gaussian function, used to approximate a Dirac Delta function, whereby $1/\kappa$ is also synonymous with the lattice spacing $\epsilon$ in lattice gauge theories. When the distance between distinct points is small, one needs a higher $\kappa$, to resolve them. Thus, we can say that non-abelian Yang-Mills theory is asymptotically free at short distances. In a sequel, we will use Theorem \ref{t.a.1} to show that the theory is asymptotically free at high energies. See Remark \ref{r.cz.1}.

We need to normalize $\chi_w$, and choose a $0<\tilde{c}<1$ such that the norm $\parallel \tilde{c}\psi_w \chi_w \parallel < 1$. In this article, we chose $\tilde{c} = 1/\sqrt{2\pi}$. In \cite{YMLim01}, we explained why we need the renormalization factor $\frac{1}{\sqrt{2\pi}}\psi_w$. In Quantum Field Theory, when using Fourier Transform to go into momentum space, one will introduce a cutoff function, so as to resolve the ultra-violet divergence issue. This was sufficient to prove Corollary \ref{c.p.1} and Lemma \ref{l.bds.1}. Also refer to Remarks \ref{r.p.1} and \ref{r.bds.1}. Indeed, Corollary \ref{c.p.1} and Lemma \ref{l.bds.1} still hold by choosing any $\tilde{c} < 1/\sqrt2$.
\end{rem}

\section{Application}\label{s.app}

The main purpose of computing the Yang-Mills path integral is to derive the Wilson Area Law formula. A quark is never observed in isolation due to quark confinement. A quark and an antiquark are bounded together in a meson via the strong force. Unlike the Coulomb potential, which decays inversely proportional to the distance, the strong force actually gets stronger as the separation between the quarks increases. It is confirmed experimentally that the potential varies proportionally to the distance between quarks.

The path integral given by Expression \ref{e.ym.1} can be used to explain why this potential is linear. However, to compute it is highly nontrivial and difficult. Note that perturbative methods using Feynman diagrams cannot be used to compute it, lest derive the area formula.

Let  $S$ be a flat rectangular surface, whose boundary is a rectangular contour $C$ with spatial length $R$ and timelike length $T$, thus $S$ has area $RT$. Then, we will write \beq
W(R, T; \rho) := \lim_{\kappa \rightarrow \infty}\frac{1}{\tilde{N}}\bE_{{\rm YM}}^\kappa\left[\Tr\ \exp\left[(\cdot, \nu_{S}^{\kappa,\rho})_\sharp \right]\right]. \nonumber \eeq From Theorem \ref{t.p.2}, we have the average trace \beq W(R, T; \rho) = \frac{1}{\tilde{N}}\Tr\ \sum_{n=0}^\infty \frac{1}{(2n)!} \sum_{\tilde{k}_n\in K_n}\sum_{f_n \in S_n}\left[\frac{RT}{4}\right]^n G_\rho(\tilde{k}_n, f_n). \nonumber \eeq

A meson is made up of a quark and antiquark. Now, these 2 quarks go in opposite direction along the spatial length $R$ and after time $T$, are attracted to each other along the spatial length $R$, hence tracing a rectangular contour $C$ described in the previous paragraph. We want to calculate the potential energy between the quark and antiquark, as a function of distance $R$. Using the path integral computed on the surface $S$ with boundary $C$, the potential is given by \beq V(R) = -\lim_{T \rightarrow 0}\frac{1}{T}\log W(R, T; \rho). \nonumber \eeq

When $n = 1$, note that $\sum_{\tilde{k}_1\in K_1}\sum_{f_1 \in S_1}G_\rho(\tilde{k}_1, f_1) = -\mathscr{E}(\rho)$, the negative of the Casimir operator. When $RT$ is small, we have that $W(R, T; \rho) = 1 - \frac{RT}{8}\frac{1}{\tilde{N}}\Tr\ \mathscr{E}(\rho) + O(R^2T^2)$ and thus \beq V(R) = \frac{R}{8\tilde{N}}\Tr\ \mathscr{E}(\rho). \nonumber \eeq

In the case of the standard representation of ${\rm SU}(N)$, we can see that \beq  \sum_{\alpha=1}^{N^2-1} E^\alpha  E^\alpha  = \left(\frac{1}{N} - N\right)\mathbb{I}, \nonumber \eeq $\mathbb{I}$ is the $N \times N$ identity matrix. See \cite{CSLim02}. Therefore, \beq V(R) = \frac{R}{8}\left( N - \frac{1}{N} \right). \nonumber \eeq

In the case of the standard representation of ${\rm SO}(N)$, we see that \beq \sum_{\alpha=1}^{N(N-1)/2} E^\alpha  E^\alpha  = \frac{1-N}{2}\mathbb{I} , \nonumber \eeq hence
\beq V(R) = \frac{R}{16}\left( N - 1 \right). \nonumber \eeq Compare with Equation (19.6) in \cite{Nair}. Thus, we see that quark confinement implies that the potential energy between the quarks is linear in nature.

\appendix

\section{Definition}

Let $S$ be a surface embedded in $\bR^4$ and $\sigma\equiv ( \sigma_0, \sigma_1, \sigma_2, \sigma_3): [0,1]^2 \equiv I^2 \rightarrow \bR^4$ be its parametrization. Here, $\sigma' = \partial \sigma/\partial s$ and $\dot{\sigma} = \partial \sigma/\partial t$.

\begin{defn}\label{d.r.1}
For $a,b=0,1,2,3$, define Jacobian matrices,
\begin{align}
J_{ab}^\sigma(s,t)
= \left(
\begin{array}{cc}
\sigma_a'(s,t) &\ \dot{\sigma}_a(s,t) \\
\sigma_b'(s,t) &\ \dot{\sigma}_b(s,t) \\
\end{array}
\right),\ a \neq b, \nonumber
\end{align}
and write $|J^\sigma_{ab}| = \sqrt{[\det{J^\sigma_{ab}}]^2}$ and $W_{ab}^{ cd} := J_{cd}^\sigma J_{ab}^{\sigma, -1}$, $a, b, c, d$ all distinct. Note that $W_{cd}^{ab} = (W_{ab}^{cd})^{-1}$.

For $a,b, c, d$ all distinct, define $\rho_\sigma^{ab}: I^2 \rightarrow \bR$ by
\begin{align}
\rho_\sigma^{ab} =& \frac{1}{\sqrt{\det\left[ 1+ W_{ab}^{cd,T}W_{ab}^{cd}\right]}} \equiv \frac{|J_{ab}^\sigma|}{\sqrt{\det\left[J_{ab}^{\sigma,T}J_{ab}^\sigma + J_{cd}^{\sigma,T}J_{cd}^\sigma\right]}}. \label{e.ri.1}
\end{align}
\end{defn}

\begin{defn}\label{d.w.4}
Define
\begin{align}
\int_S d\rho := \sum_{0 \leq a<b \leq 3}\int_{I^2}\rho_\sigma^{ab}(s,t)|J_{ab}^\sigma|(s,t)
\ ds dt, \nonumber
\end{align}
which gives us the area of the surface $S$.
\end{defn}

\begin{rem}
\begin{enumerate}
  \item Note that $\int_S d\rho$ is independent of the choice of orthonormal basis and parametrization used, by a straightforward calculation. See \cite{MR1312606}.
  \item Dimension 4 is required to write the formula for $\int_S d\rho$.
\end{enumerate}
\end{rem}

\section{Correlation}

Recall we defined $\zeta(w)$ in Proposition \ref{p.z.1}. By its definition, we can write for $w = (w_0, w_1, w_2, w_3)\in \bC^4$, $\zeta(w)(z) = g_{w_0}(z_0)e^{\sum_{i=1}^3 \bar{w}_i z_i}$, $\bar{w}_i$ is its complex conjugate. Note that $g_{w_0} \in \mathcal{H}(\bC^4)_\bC$.

\begin{prop}\label{p.s.3}
Let $\sigma_0: t \in I_\delta \equiv [-\delta, 1+\delta]  \longmapsto tT$, for some $T \neq 0$. We have
\beq
\lim_{\kappa \rightarrow \infty}g_{w_0}(\kappa \sigma_0(t)/2)e^{-\kappa^2|\sigma_0(t)/2|^2/2}=0, \nonumber \eeq for all $t \neq 0$.
\end{prop}

\begin{proof}
Write $g_{w_0}(z_0) = \sum_{n=0}^\infty c_n(w_0)\frac{z_0^n}{\sqrt{n!}}$. Because $g_{w_0} \in \mathcal{H}^2(\bC^4)\otimes \bC$, we see that $\sum_{n=0}^\infty |c_n(w_0)|^2 < \infty$. Let $\epsilon > 0$. Choose a $N(\epsilon)$ such that $\sum_{n \geq N(\epsilon)}|c_n(w_0)|^2 < \epsilon^2$. By Cauchy Schwartz Inequality,
\begin{align*}
&|g_{w_0}(\kappa \sigma_0(t)/2)| \leq \left|\sum_{n=0}^{N(\epsilon)-1}c_n(w_0)\frac{(\kappa \sigma_0(t)/2)^n}{\sqrt{n!}} \right| + \left|\sum_{n\geq N(\epsilon)} c_n(w_0)\frac{(\kappa \sigma_0(t)/2)^n}{\sqrt{n!}}\right| \\
&\leq \left|\sum_{n=0}^{N(\epsilon)-1}c_n(w_0)\frac{(\kappa \sigma_0(t)/2)^n}{\sqrt{n!}} \right| + \left(\sum_{n \geq N(\epsilon)}|c_n(w_0)|^2 \right)^{1/2}\left( \sum_{n \geq N(\epsilon)}\frac{(\kappa \sigma_0(t)/2)^{2n}}{n!}\right)^{1/2} \\
&\leq \left|\sum_{n=0}^{N(\epsilon)-1}c_n(w_0)\frac{(\kappa \sigma_0(t)/2)^n}{\sqrt{n!}} \right| + \epsilon e^{|\kappa \sigma_0(t)/2|^2/2}.
\end{align*}
Hence, \beq |g_{w_0}(\kappa \sigma_0(t)/2)|e^{-\kappa^2|\sigma_0(t)/2|^2/2} \leq \left|\sum_{n=0}^{N(\epsilon)-1}c_n(w_0)\frac{(\kappa \sigma_0(t)/2)^n}{\sqrt{n!}} \right|e^{-|\kappa \sigma_0(t)/2|^2/2} + \epsilon , \nonumber \eeq
thus for $t \neq 0$, \beq \lim_{\kappa \rightarrow \infty}|g_{w_0}(\kappa \sigma_0(t)/2)|e^{-\kappa^2|\sigma_0(t)/2|^2/2} \leq \epsilon. \nonumber \eeq Since $\epsilon > 0$ is arbitrary, therefore, $\lim_{\kappa \rightarrow \infty}g_{w_0}(\kappa \sigma_0(t)/2)e^{-\kappa^2|\sigma_0(t)/2|^2/2}=0$ for all $t \neq 0$.
\end{proof}

\begin{lem}
Let $\mathbf{x} = (x^0, x) \in \bR^4$, $x \in \bR^3$. Then, we have for any \\
$\mathbf{w} = (w_0, w_1, w_2, w_3) \in \bC^4$,
\begin{align*}
\frac{1}{2\pi}&e^{-|\mathbf{w}|^2/2}e^{-\kappa^2|\mathbf{x}|^2/8}\langle \zeta(\mathbf{w}), \chi_{\kappa \mathbf{x}/2} \rangle \\
&=\frac{1}{2\pi}e^{-\sum_{j=1}^3|a_j - \kappa x_j/2|^2/2}e^{-(|w_0|^2 + \sum_{j=1}^3 b_j^2)/2}g_{w_0}(\kappa x^0/2)e^{-\kappa^2|x^0|^2/8}e^{-i\kappa \sum_{j=1}^3 b_jx_j/2},
\end{align*}
whereby $w_i = a_i + \sqrt{-1}b_i \in \bC$.
\end{lem}

\begin{proof}
Write $w = (w_1, w_2, w_3) \in \bC^3$, and $w_i = a_i + \sqrt{-1}b_i$. By direct computation, we have \beq \langle \zeta(\mathbf{w}), \chi_{\kappa \mathbf{x}/2} \rangle = g_{w_0}(\kappa x^0/2)\langle e^{\bar{w}\cdot z}, e^{\kappa x \cdot z/2} \rangle = g_{w_0}(\kappa x^0/2) e^{\kappa \bar{w}\cdot  x/2}. \nonumber \eeq Write $f = e^{-(|w_0|^2 + \sum_{j=1}^3 b_j^2)/2}$. Thus,
\begin{align*}
\frac{1}{2\pi}&e^{-|\mathbf{w}|^2/2}e^{-\kappa^2|\mathbf{x}|^2/8}g_{w_0}(\kappa x^0/2) e^{\kappa \bar{w}\cdot  x/2} \\
=& \frac{1}{2\pi}e^{-\sum_{j=1}^3|a_j|^2/2}e^{-\kappa^2|\mathbf{x}|^2/8}fg_{w_0}(\kappa x^0/2)e^{\kappa\sum_{j=1}^3 a_j x_j/2}e^{-i\kappa \sum_{j=1}^3 b_jx_j/2} \\
=& \frac{1}{2\pi}e^{-\sum_{j=1}^3|a_j - \kappa x_j/2|^2/2}fg_{w_0}(\kappa x^0/2)e^{-\kappa^2|x^0|^2/8}e^{-i\kappa \sum_{j=1}^3 b_jx_j/2}.
\end{align*}
\end{proof}

\begin{cor}\label{c.c.1}
We have that for any $1 \leq \alpha, \beta \leq N$,
\begin{align*}
\lim_{\kappa \rightarrow \infty}&\int_{w \in \bC^4 }\left\langle \pi_{i,\alpha}(w), \frac{1}{\kappa}\nu_{R_\delta[a,T]}^{\kappa,\alpha}, \right\rangle \left\langle \pi_{j,\beta}(w), \frac{1}{\kappa}\nu_{R_0[a,T]}^{\kappa,\beta}\right\rangle\ d\lambda_4(w) = 0.
\end{align*}
\end{cor}

\begin{proof}
Recall that \beq \nu_{R_\delta[a,T]}^{\kappa,\alpha}=\frac{1}{\kappa}\int_{I_\delta^2}dsdt\ \frac{\kappa^2}{4}\sum_{j=1}^3 |J_{0j}^{\sigma}|(s,t)  \frac{\kappa e^{-\frac{\kappa^2\left|\sigma(s,t)\right|^2}{8}}}{\sqrt{2\pi}}\chi_{\kappa \sigma(s,t)/2}\otimes dx^0 \wedge dx^j \otimes E^\alpha, \nonumber \eeq for some parametrization $\sigma$ on $I_\delta^2$. We may assume that $\sigma(s,t) = s\tilde{a} + t e^0$, $-\delta \leq s, t \leq 1 + \delta$, whereby $\tilde{a} \in \bR^3$ and $e^0$ is a vector along the time axis. Thus, $\sigma_0(s,t) = \sigma_0(t) = tT$, $(\sigma_1, \sigma_2, \sigma_3)(s,t) = (\sigma_1, \sigma_2, \sigma_3)(s) = s \tilde{a}$. And $|J_{0j}^{\sigma}|(s,t) \equiv |J_{0j}^{\sigma}|$ is a constant, with $\pi_{i,\alpha}(w) = \frac{e^{-|w|^2/2}}{\sqrt{2\pi}}\zeta(w)\otimes dx^0 \wedge dx^i \otimes E^\alpha$. Write $w = (w_0, w_1, w_2, w_3)$, $w_j = a_j + \sqrt{-1}b_j$, $1 \leq j \leq 3$.

Recall from the previous lemma, $f = e^{-(|w_0|^2 + \sum_{j=1}^3 b_j^2)/2}$. Thus,
\begin{align*}
f^{-1}&\Big\langle  \pi_{i,\alpha}(w),\frac{1}{\kappa}\nu_{R_\delta[a,T]}^{\kappa,\alpha}\Big\rangle  = f^{-1}\int_{I_\delta^2}dsdt\ \frac{\kappa|J_{0i}^{\sigma}|}{4} \left\langle \pi_{i,\alpha}(w), \xi_{0i}(\kappa \sigma(s,t)/2)\otimes E^\alpha \right\rangle  \\
=& \int_{I_\delta}ds\ \frac{\kappa}{4} \frac{|J_{0i}^{\sigma}|}{2\pi}e^{-\frac{1}{2}\sum_{j=1}^3\left|a_j - \frac{\kappa \sigma_j(s)}{2}\right|^2}e^{-i\kappa \sum_{j=1}^3 \frac{b_j\sigma_j(s)}{2}}\int_{I_\delta}dt\ g_{w_0}\left(\frac{\kappa \sigma_0(t)}{2}\right)e^{-\frac{\kappa^2|\sigma_0(t)|^2}{8}} \\
=& \int_{-\delta \kappa}^{\kappa(1+\delta)}ds\  \frac{|J_{0i}^{\sigma}|}{8\pi}e^{-\frac{\sum_{j=1}^3\left|a_j - \frac{ \sigma_j(s)}{2}\right|^2}{2}}e^{-i \sum_{j=1}^3 \frac{b_j\sigma_j(s)}{2}} \int_{I_\delta}dt\ g_{w_0}\left(\frac{\kappa \sigma_0(t)}{2}\right)e^{-\frac{\kappa^2|\sigma_0(t)|^2}{2}} \\
\longrightarrow& \frac{1}{8\pi}\int_{\bR}ds\  |J_{0i}^{\sigma}|e^{-\sum_{j=1}^3|a_j - \sigma_j(s)/2|^2/2}e^{-i \sum_{j=1}^3 b_j\sigma_j(s)/2} \times \int_{I_\delta} 0\ dt = 0,
\end{align*}
as $\kappa \rightarrow \infty$, using Proposition \ref{p.s.3}.

We need to justify \beq \lim_{\kappa \rightarrow \infty}\int_{I_\delta}dt\ g_{w_0}(\kappa \sigma_0(t)/2)e^{-\kappa^2|\sigma_0(t)/2|^2/2} = \int_{I_\delta}dt\ \lim_{\kappa \rightarrow \infty}g_{w_0}(\kappa \sigma_0(t)/2)e^{-\kappa^2|\sigma_0(t)/2|^2/2}, \label{e.b.5} \eeq in the last step. First, \beq
|g_{w_0}(\kappa x_0/2)| = |\langle g_{w_0}, e^{\kappa x_0 z_0/2} \rangle| \leq \parallel g_{w_0} \parallel\cdot \parallel e^{\kappa x_0 z_0/2} \parallel = \parallel g_{w_0} \parallel e^{\kappa^2 |x_0|^2/8}, \nonumber \eeq and thus
$|g_{w_0}(\kappa \sigma_0(t)/2)e^{-\kappa^2|\sigma_0(t)/2|^2/2}|$ is bounded by $\sqrt{40}$ from Remark \ref{r.rem.2}, for any $t \in I_\delta$. Hence, Bounded Convergence Theorem applies to Equation (\ref{e.b.5}).

Since $R_\delta[a,T]$ is bounded, Remark \ref{r.rem.2} will also show that $\left|\left\langle \pi_{i,\alpha}(w), \frac{1}{\kappa}\nu_{R_\delta[a,T]}^{\kappa,\alpha}, \right\rangle \right|$ is bounded by a constant, independent of $w$. Hence, we can apply Dominated Convergence Theorem and bring the limit inside $\int_{w\in \bC^4}d\lambda_4(w)$, to obtain the result.
\end{proof}

\section{Derivative}

\begin{defn}\label{d.s.1}
Let $\mathcal{P}_r $ be the set consisting of 4-tuples of the form $p_r = (i,j,k,l)$, each entry inside $\mathbb{N} \cup \{0\}$, such that the sum of its entries $i+j+k+l = r$. For $z = (z_0, z_1, z_2, z_3) \in \bC^4$, write $z^{p_r} = z_0^i z_1^j z_2^k z_3^l$, $p_r = (i,j,k,l)$. We will also write $p_r! := i!j!k!l!$.

For $f(z) = \sum_{r \geq 0} \sum_{p_r \in \mathcal{P}_r} c_{p_r}\frac{z^{p_r}}{\sqrt{p_r!}} \in B(\bC^4)$, define a supremum norm $|\cdot|$, \beq |f| := \sup_{z \in B(0, 1/2)}\sum_{r \geq 0}\sum_{p_r \in \mathcal{P}_r} |c_{p_r}||z^{p_r}|, \label{e.sup.1} \eeq $B(0,1/2)$ is the open ball with radius $1/2$, center $0$ in $\bC^4$.
\end{defn}

Consider $g(z_0) = \sum_{n=0}^\infty c_n z_0^n \in B(\bC^4)$, $z_0 \in \bC$, which is analytic. We saw that $\chi_{w_0}(z_0) = e^{\bar{w}_0z_0}$ defines a linear functional $\left(\cdot, \chi_{w_0}\right)_\sharp$ that maps $g \mapsto g(w_0) = \left(g, \chi_{w_0}\right)_\sharp$.

For a $x \in \bR$, consider $\chi_{x/2c}$, $c$ is some real positive parameter. Then, $g(x/2c) = (g, \chi_{x/2c})_\sharp = \sum_{n=0} c_n \frac{1}{c^n}\left(\frac{x}{2} \right)^n$ and its derivative is given by \beq \frac{d}{dc}g(x/2c) = -\frac{1}{c}\sum_{n=1}^\infty nc_n\left( \frac{x}{2c} \right)^n. \nonumber \eeq

\begin{prop}\label{p.s.1}
Let \beq \theta_{x/2c}(z_0) = \frac{d}{dc}\chi_{x/2c}(z_0) = -\frac{xz_0}{2c^2}e^{xz_0/2c} = -\frac{1}{c}\sum_{n=1}^\infty n \left(\frac{x}{2c} \right)^n \frac{z_0^n}{n!}. \nonumber \eeq For $y \in \bR$ and $\tilde{g} \in B(\bC^4)$, \beq \frac{d}{dc}\tilde{g}(y/2c) = \left\langle \tilde{g}, \theta_{y/2c} \right\rangle. \nonumber \eeq
Indeed, we have for $\mathbf{x} = (x^0, x^1, x^2, x^3)$, $p_r = (i,j,k,l)$ and $z^{p_r} = z_0^i z_1^j z_2^k z_3^l$,
\begin{align*}
\left(\cdot, \theta_{x^0/2c}\chi_{x^1/2c}\chi_{x^2/2c}\chi_{x^3/2c}\right)_\sharp: g(z) &= \sum_{r \geq 0} \sum_{p_r \in \mathcal{P}_r} c_{p_r}\frac{z^{p_r}}{\sqrt{p_r!}} \in B(\bC^4) \\
&\longmapsto -\frac{1}{c}\sum_{r=0}^\infty\sum_{p_r \in \mathcal{P}_r} c_{p_r}\frac{i}{(2c)^r}\frac{\mathbf{x}^{p_r}}{\sqrt{p_r!}}\in B(\bC^4),
\end{align*}
is a bounded linear functional.
\end{prop}

\begin{proof}
Suppose $g(z_0) =\sum_{n=0}^\infty c_n z_0^n\in \mathcal{H}^2(\bC^4)$, a dense subspace in $B(\bC^4)$. Using the fact that $\langle z_0^n, z_0^m\rangle = 0$ if $m \neq n$, and is equal to $n!$ if $n = m$, we have that
\begin{align*}
\frac{d}{dc}g(y/2c) &= -\frac{1}{c}\sum_{n=1}^\infty nc_n\left( \frac{y}{2c} \right)^n = -\frac{1}{c}\sum_{n=1}^\infty nc_n \left\langle z_0^n, e^{yz_0/2c} \right\rangle \\
&= \left\langle \sum_{m=0}^\infty c_m z_0^m, -\frac{1}{c}\sum_{n=1}^\infty n \left(\frac{y}{2c} \right)^n \frac{z_0^n}{n!} \right\rangle = \left\langle g, \theta_{y/2c} \right\rangle.
\end{align*}

Suppose $f(z) = \sum_{r \geq 0} \sum_{p_r \in \mathcal{P}_r} c_{p_r}\frac{z^{p_r}}{\sqrt{p_r!}} \in B(\bC^4)$. Let $R > 2\sqrt{\sum_{i=0}^3(x^i)^2}/2c+1$, thus $\mathbf{x}/2cR \in B(0,1/2)$. Choose a $M \in \mathbb{N}$ large such that $rR^r < \sqrt{\lfloor r/4 \rfloor !} < \sqrt{p_r !}$ for $r \geq M$. Then we have
\begin{align*}
&\left| \frac{1}{c}\sum_{r=0}^\infty\sum_{p_r \in \mathcal{P}_r} c_{p_r}\frac{i}{(2c)^r}\frac{\mathbf{x}^{p_r}}{\sqrt{p_r!}} \right| \leq \frac{1}{c}\sum_{r=0}^\infty\sum_{p_r \in \mathcal{P}_r} |c_{p_r}|\frac{r}{(2c)^r}\frac{|\mathbf{x}^{p_r}|}{\sqrt{p_r!}}\\
&\leq \frac{R^M}{c}\sum_{r=0}^M \sum_{p_r \in \mathcal{P}_r} r|c_{p_r}| \frac{|\mathbf{x}^{p_r}|}{(2c)^r}\frac{1}{R^r}  + \frac{1}{c}\sum_{r=M+1}^\infty \sum_{p_r \in \mathcal{P}_r} r|c_{p_r}|\frac{|\mathbf{x}^{p_r}|}{(2c)^r}\frac{1}{R^r} \frac{R^r}{\sqrt{p_r!}} \\
&\leq \sup_{z \in B(0, 1/2)}\left[\frac{MR^M}{c}\sum_{r=0}^M \sum_{p_r \in \mathcal{P}_r}|c_{p_r}|\left|z^{p_r} \right| + \frac{1}{c}\sum_{n=M+1}^\infty \sum_{p_r \in \mathcal{P}_r}|c_{p_r}|\left|z^{p_r} \right| \right]\\
&= \frac{MR^M + 1}{c}|f|,
\end{align*}
whereby $|f|$ is the supremum norm defined on $B(\bC^4)$. This shows that we can identify $\theta_{x^0/2c}\chi_{x^1/2c}\chi_{x^2/2c}\chi_{x^3/2c}$ with a bounded linear functional $\left(\cdot, \theta_{x^0/2c}\chi_{x^1/2c}\chi_{x^2/2c}\chi_{x^3/2c}\right)_\sharp$. 
\end{proof}

\begin{cor}
For any $\mathbf{x} = (x^0, x^1, x^2, x^3) \in \bR^4$, define
\begin{align*}
\vartheta_{\mathbf{x}/2c}(z_0, z_1, z_2, z_3) &:= \theta_{x^0/2c}(z_0)\chi_{x^1/2c}(z_1)\chi_{x^2/2c}(z_2)\chi_{x^3/2c}(z_3) \\
&+\chi_{x^0/2c}(z_0)\theta_{x^1/2c}(z_1)\chi_{x^2/2c}(z_2)\chi_{x^3/2c}(z_3)\\
&+\chi_{x^0/2c}(z_0)\chi_{x^1/2c}(z_1)\theta_{x^2/2c}(z_2)\chi_{x^3/2c}(z_3) \\
&+\chi_{x^0/2c}(z_0)\chi_{x^1/2c}(z_1)\chi_{x^2/2c}(z_2)\theta_{x^3/2c}(z_3).
\end{align*}
Then $\left(\cdot, \vartheta_{\mathbf{x}/2c}\right)_\sharp$ is a bounded linear functional on $B(\bC^4)$, that maps \\ $f(z) \equiv f(z^0, z^1, z^2, z^3) \in B(\bC^4)$ to $\frac{d}{dc}f(\mathbf{x}/2c)$.
\end{cor}

\begin{proof}
A consequence of the chain rule.
\end{proof}

\begin{cor}\label{c.c.2}
Refer to Definition \ref{d.p.2}. For any $1 \leq \alpha \leq N$, we see that $c\nu_{R_\delta[a,T]}^{1/c,\alpha}$ is differentiable in $c$ and it defines a bounded linear functional $\left(\cdot, \beta_{R_\delta[a,T]}^{1/c,\alpha}\right)_\sharp$ on $\dB \otimes \mathfrak{g}$, given by Equation (\ref{e.b.4}). In other words,
\beq c \longmapsto \left(B, c\nu_{R_\delta[a,T]}^{1/c,\alpha} \right)_\sharp =  \int_{I_\delta^2}d\hat{s}\ \frac{1}{4c}\sum_{j=1}^3 |J_{0j}^{\sigma}|(\hat{s}) \left(B, \xi_{0j}(\sigma(\hat{s})/2c)\otimes E^\alpha\right)_\sharp , \label{e.d.2} \eeq is differentiable in $c$ and its derivative in $c$ is given by $\left(B, \beta_{R_\delta[a,T]}^{1/c,\alpha}\right)_\sharp$, $\beta_{R_\delta[a,T]}^{1/c,\alpha}$ defined in Equation (\ref{e.b.4}).
\end{cor}

\begin{proof}
Note that $\xi_{0j}(\sigma(\hat{s})/2c) = \frac{1}{\sqrt{2\pi}}e^{-|\sigma(\hat{s})|^2/8c^2}\chi_{\sigma(\hat{s})/2c}\otimes dx^0\wedge dx^j$, and its derivative is given by \beq \widehat{\beta_{0j}}(\sigma(\hat{s})/2c) := \frac{1}{\sqrt{2\pi}}e^{-|\sigma(\hat{s})|^2/8c^2}\left[\frac{|\sigma(\hat{s})|^2}{4c^3}\chi_{\sigma(\hat{s})/2c}
+ \vartheta_{\sigma(\hat{s})/2c}\right]\otimes dx^0 \wedge dx^j. \nonumber \eeq The derivative of $c\nu_{R_\delta[a,T]}^{1/c,\alpha}$, is hence given explicitly as
\begin{align}
\beta_{R_\delta[a,T]}^{1/c,\alpha}:=
\int_{I_\delta^2}d\hat{s}\ \frac{1}{4c}\sum_{j=1}^3 |J_{0j}^{\sigma}|(\hat{s}) \left[\widehat{\beta_{0j}}-\frac{1}{c}\xi_{0j}\right](\sigma(\hat{s})/2c)\otimes E^\alpha,
\label{e.b.4}
\end{align}
which defines a bounded linear functional on $\dB \otimes \mathfrak{g}$.
\end{proof}

\begin{lem}\label{l.s.1}
Refer to Item \ref{i.sb.3} in Remark \ref{r.rem.1}. Consider the Wiener space \\
$(|\cdot|, \fd_0 B_0(\bC^4), \tilde{\mu}_{\kappa^2})$, the norm $|\cdot|$ defined in Definition \ref{d.s.1}. We have for $q \geq 1$, $\bE[|\cdot|^q] < \infty$.
\end{lem}

\begin{proof}
For any $f \in \fd_0B_0(\bC^4)$, write $f(z) = \sum_{r \geq 0} \sum_{p_r \in \mathcal{P}_r} \tilde{c}_{p_r}\frac{z^{p_r}}{\sqrt{p_r!}}$. Let $(\cdot, c_{p_r})_\sharp: f\in \fd_0B_0(\bC^4) \mapsto \tilde{c}_{p_r} \in \bR$ be a real Gaussian random variable, mean 0 and variance $1/\kappa^2$. See Remark \ref{r.w.1}. Note that $\{(\cdot, c_{p_r})_\sharp:\ p_r \in \mathcal{P}_r,\ r \geq 0\}$ is a set of i.i.d. Gaussian random variables.

Now, there are at most $r^3$ elements in the set $\mathcal{P}_r$ for each $r \geq 0$. Let $M = \sum_{r=0}^\infty \frac{1}{2^r}|\mathcal{P}_r|$ and \beq c(q) = \frac{1}{\sqrt{2\pi}}\int_{-\infty}^\infty |x|^q e^{-x^2/2}\ dx. \nonumber \eeq

Apply Jensen's Inequality and Fubini's Theorem,
\begin{align*}
\bE\left[|\cdot|^q \right] \leq& \int_{f \in \fd_0B_0(\bC^4)} M^q\left[\frac{1}{M}\sum_{r \geq 0} \sum_{p_r \in \mathcal{P}_r} |(\cdot, c_{p_r})_\sharp | \left|\frac{1}{2^r} \right| \right]^q\ d\tilde{\mu}_{\kappa^2}(f) \\
\leq& M^{q-1}\sum_{r=0}^\infty \sum_{p_r \in \mathcal{P}_r}\left|\frac{1}{2^r} \right|\int_{f \in \fd_0B_0(\bC^4)}  |(\cdot, c_{p_r})_\sharp |^q \ d\tilde{\mu}_{\kappa^2}(f) \\
\leq& M^{q-1}\sum_{r=0}^\infty \left|\frac{1}{2^r} \right|\sum_{p_r \in \mathcal{P}_r} \frac{c(q)}{\kappa^q} \leq \frac{c(q)M^{q-1}}{\kappa^q}\sum_{n=0}^\infty \left|\frac{r^3}{2^r} \right|< \infty.
\end{align*}
\end{proof}

\begin{cor}\label{c.s.1}
For any $1 \leq \alpha \leq N$, write
\begin{align*}
B_\alpha &= \left[\sum_{i=1}^3 B_i \otimes dx^0 \wedge dx^i + \sum_{1 \leq i < j \leq 3}B_{ij}\otimes dx^i \wedge dx^j \right] \otimes E^\alpha \\
&\in \dB \otimes \mathfrak{g} \subset B(\bC^4) \otimes \Lambda^2(\bR^4) \otimes \mathfrak{g},
\end{align*}
and \beq |B_\alpha| = \sum_{i=1}^3 |B_i|, \label{e.s.2} \eeq the norm $|\cdot|$ is the supremum norm on $B(\bC^4)$, defined in Definition \ref{d.s.1}. There exists a constant $C(c_0,R_\delta[a,T]) >0$ such that for $c > c_0$, we have
\begin{align}
&\left|\left(B_\alpha, c\nu_{R[a,T]}^{1/c,\alpha} \right)_\sharp \left(B_\alpha, \beta_{R_\delta[a,T]}^{1/c,\alpha} \right)_\sharp \right|+ \left|\left(B_\alpha, \beta_{R[a,T]}^{1/c,\alpha} \right)_\sharp \left(B_\alpha, c\nu_{R_\delta[a,T]}^{1/c,\alpha} \right)_\sharp \right| \nonumber\\
&\leq C(c_0, R_\delta[a,T])^2 |B_\alpha|^2,\label{e.d.3}
\end{align}
and $B_\alpha \in \dB \otimes \mathfrak{g} \mapsto |B_\alpha|^2$ is $L^q$ integrable with respect to Wiener measure $\tilde{\mu}_{\kappa^2}^{\times^{N}}$, for any $q \geq 1$.
\end{cor}

\begin{proof}
Let $\mathbf{x} = (x^0, x^1, x^2, x^3)$. We saw in the proof of Proposition \ref{p.s.1}, that the functional norm
$\parallel (\cdot, \theta_{x^0/2c}\chi_{x^1/2c}\chi_{x^2/2c}\chi_{x^3/2c})_\sharp \parallel$ depends on $|\mathbf{x}/2c|$ and $c$. Likewise, one can also show that the functional norm $\parallel (\cdot, \chi_{\mathbf{x}/2c})_\sharp \parallel$ depends on $|\mathbf{x}/2c|$. Since the flat surface $\frac{1}{2c}R_\delta[a, T]$ is a bounded rectangular surface and $R[a,T]/2c \subset R_\delta[a,T]/2c_0$ for any $c > c_0 > 0$, we can find a constant $C(c_0, R_\delta[a,T])$ such that the functional norm \beq \left\Vert\left(\cdot, c\nu_{R_\delta[a,T]}^{1/c,\alpha} \right)_\sharp \right\Vert + \left\Vert\left(\cdot, \beta_{R_\delta[a,T]}^{1/c,\alpha} \right)_\sharp \right\Vert < C(c_0, R_\delta[a,T]), \nonumber \eeq and thus the above Inequality (\ref{e.d.3}) follows from Equations (\ref{e.d.2}) and (\ref{e.b.4}). The $L^q$ integrability follows from the previous lemma.
\end{proof}

\end{document}